\newcommand{\WPSlong}{\textsc{Weighted Proportional Sharing}\xspace}
\newcommand{\WPSshort}{\textsc{WPS}\xspace}
\newcommand{\WPSh}{\textsc{WPS}-h\xspace}
\newcommand{\WPShat}{\textsc{WPS}-$\hat{\mathrm{h}}$\xspace}
\newcommand{\WPSd}{\textsc{WPS}-d\xspace}
\newcommand{\WPSo}{\textsc{WPS}-o\xspace}
\newcommand{\WPSw}{\textsc{WPS}-w\xspace}
\newcommand{\hmin}{t}
\begin{document}
\title{Achieving Coordination in Non-Cooperative\\ Joint Replenishment Games}
%
%

\author{Junjie Luo\inst{1} \and Changjun Wang\inst{2}}
\institute{School of Mathematics and Statistics, Beijing Jiaotong University \email{jjluo1@bjtu.edu.cn} \and
Academy of Mathematics and Systems Science, Chinese Academy of Sciences \email{wcj@amss.ac.cn}
}

\authorrunning{J. Luo and C. Wang}

%
\maketitle              
\begin{abstract}
We analyze an infinite-horizon deterministic joint replenishment model from a non-cooperative game-theoretical approach. 
In this model, a group of retailers can choose to jointly place an order, which incurs a major setup cost independent of the group, and a minor setup cost for each retailer. Additionally, each retailer is associated with a holding cost.
Our objective is to design cost allocation rules that minimize the long-run average system cost while accounting for the fact that each retailer independently selects its replenishment interval to minimize its own cost.
We introduce a class of cost allocation rules that distribute the major setup cost among the associated retailers in proportion to their predefined weights. 
For these rules, we establish a monotonicity property of agent better responses, which enables us to prove the existence of a payoff dominant pure Nash equilibrium that can also be computed efficiently.
We then analyze the efficiency of these equilibria by examining the price of stability (PoS), the ratio of the best Nash equilibrium's system cost to the social optimum, across different information settings.
In particular, our analysis reveals that one rule, which leverages retailers’ own holding cost rates, achieves a near-optimal PoS of 1.25, while another rule that does not require access to retailers' private information also yields a favorable PoS.

\keywords{Joint replenishment \and Non-cooperative game \and Cost-allocation rules 
\and Price of Stability 
}
\end{abstract}
\section{Introduction}

The joint replenishment problem (JRP) is one of the most studied problems in inventory theory, which extends the classical economic order quantity (EOQ) model \citep{harris1990many} from a single retailer to multiple retailers.
Like the basic EOQ model, each retailer in JRP faces a constant and deterministic market demand rate and incurs a holding cost per item per unit of time.
The new aspect of JRP is that, to reduce replenishment costs, a group of retailers can place a joint order together.
Every joint order incurs a fixed major setup cost which is independent of the group, and a minor setup cost for each participating retailer.
This joint setup cost structure is known as \emph{first-order interaction} \citep{federgruen1995efficient}.
The objective of JRP is to find a joint replenishment policy for all retailers to minimize the infinite-horizon average cost for the whole system.

Although the existence of an optimal joint replenishment policy for JRP has already been proved by \citet{adelman_duality_2005}, the structure of the optimal policy is still unknown and it is often very complex.
Consequently, research has focused on policies that are computationally efficient, easy to implement, and meanwhile close to optimal. 
In particular, the so called \emph{power-of-two} (POT) policies, in which each retailer's replenishment interval is required to be an integer power-of-two times a common base planning period, have been widely studied in the literature.
Besides being simple, it is well-known that the optimal POT policy has an average cost that is within 6\% (or 2\%) of a lower bound for the optimal cost when the base planning period is fixed in advance (or can be chosen freely) \citep{roundy198598,jackson1985joint}.
This result has also been extended to a general joint setup cost structure assuming only monotonicity and submodularity \citep{federgruen1992joint}.

Once the optimal POT policy is identified, a natural question is how should the total cost be distributed among all retailers in the system.
Since cost allocation is crucial for the long-term success of the system, researchers have leveraged game theory to analyze this problem from both cooperative and non-cooperative perspectives.
Focusing on the cooperative perspective, \citet{anily_cost_2007} model cost allocation as a cooperative game and aim to find a fair (or stable) cost allocation (the core of the game), under which no subgroup of retailers can benefit by forming a new coalition and using an optimal policy for themselves.
Specifically, they define a characteristic function that assigns each subset of retailers the cost of the optimal POT policy of the JRP instance restricted to that subset.
They show that this function is submodular, and hence the resulting cooperative game has a non-empty core.
\citet{zhang_cost_2009} and \citet{he2012polymatroid} extend these results to a more generalized joint setup cost that is nondecreasing and submodular.

Compared to the cooperative approach, the non-cooperative approach has received far less attention.
However, in decentralized systems where retailers pursue their own self-interest, binding cooperation is not always guaranteed.
While retailers can cooperate to reduce total costs, they may also strategically compete for individual benefits, which can result in significant efficiency losses and finally break the cooperation.
Therefore, it is essential to account for the \emph{strategic behavior} of retailers when designing cost allocation rules.
Moreover, cooperative core allocations frequently rely on detailed cost information from retailers, which in decentralized systems may be private information and thus unavailable for allocation rules.
Hence, it is also important to consider \emph{private information} constraints.

\citet{he_noncooperative_2017} introduce the following non-cooperative game model for the cost allocation problem in joint replenishment:
Each retailer independently selects its own replenishment interval, pays its own minor setup cost and holding cost, and shares the major setup costs according to a preannounced rule.
Unlike the cooperative approach, this mode does not require all retailers to make binding agreements on a joint replenishment policy, making it more applicable in decentralized settings.
Under this model, \citet{he_noncooperative_2017} study a natural rule, called the equal-division rule, where the major setup cost of each joint order is evenly shared by all participating retailers.
This rule does not rely on private cost information from retails and is easy to implement.
They prove that under this rule, there exists a payoff dominant (pure) Nash equilibrium which achieves the optimal payoff for every retailer among all Nash equilibria.

However, under the equal-division rule,
the selfish behavior of retailers optimizing their own costs can lead to a significantly higher total system cost.
Indeed, 
\citet{he_noncooperative_2017} demonstrate that the price of anarchy (PoA) \citep{koutsoupias1999worst}, the ratio between the system cost at the \emph{worst} Nash equilibrium and the optimal system cost under centralized coordination, is $O(\sqrt{n})$, where $n$ is the number of retailers.
Furthermore, even the price of stability (PoS) \citep{schulz2003performance,anshelevich2008price}, which quantifies the efficiency loss at the \emph{best} Nash equilibrium, remains as large as $O(\sqrt{\ln n})$.
These results highlight the significant inefficiency introduced by decentralized decision-making under the equal-division rule.
A central open question left by \citet{he_noncooperative_2017} is whether alternative cost allocation rules can improve efficiency in a decentralized system and achieve the coordination effect as in a centralized cooperative system.

\subsection{Our Contributions}

In this paper, we conduct a systematic study of cost allocation rules for the non-cooperative joint replenishment game, where each retailer independently selects its replenishment interval.
Our goal is to identify practical allocation rules that achieve efficient coordination and guarantee strong social performance in decentralized systems.
We structure our contributions around the following three guiding questions, which are addressed in Sections \ref{sec:WPS}, \ref{sec:efficient-rules}, and \ref{sec:H-unkown}, respectively.

\begin{quote}
    \textbf{\emph{Question 1: Can we design cost allocation rules, beyond the equal-division rule, that guarantee the existence of a payoff-dominant Nash equilibrium?
    }}
\end{quote}

To address Question 1, we introduce a broad class of cost allocation rules, called \WPSlong (\WPSshort) rules, which generalizes the equal-division rule by assigning each retailer a weight.
Under these rules, the major setup cost of each joint order is shared proportionally to these weights among participating retailers. 
The equal-division rule corresponds to the special case where all retailers have equal weights.
We establish a key monotonicity property for all \WPSshort rules regarding agents' better-response dynamics: If a retailer can reduce its cost by doubling its replenishment interval at a given state, then it has no incentive to revert this decision in future steps, provided that no other retailer subsequently decreases their interval.
Crucially, this monotonicity enables us to prove the existence of a payoff dominant \citep{harsanyi1988general} Nash equilibrium for every \WPSshort rule, where each retailer achieves its highest possible payoff among all Nash equilibria.
Moreover, this equilibrium can be computed efficiently via a simple and natural polynomial-time algorithm.
Our results generalize the findings of \citet{he_noncooperative_2017} for the equal-division rule.
Notably, the equal-division result follows from the submodularity of the corresponding game, whereas our broader results are obtained via a deeper structural understanding of agents' better-response behaviors.

Having established the existence and efficient computation of payoff-dominant equilibria for all \WPSshort rules, we next focus on designing simple and practical \WPSshort rules that achieve strong coordination performance.
Since a payoff-dominant equilibrium ensures the optimal payoff for every retailer among all Nash equilibria, it is widely adopted as a criterion for choosing among equilibrium points in non-cooperative games~\citep{colman1997payoff}.
Therefore, we use the price of stability (PoS), which precisely quantifies the efficiency loss of the payoff dominant Nash equilibrium, as our primary performance criterion.
This leads us to our second research question:

\begin{quote}
    \textbf{\emph{Question 2: Are there simple and practical \WPSshort rules that achieve near-optimal coordination, as measured by the price of stability?}}
\end{quote}

To address Question 2, we first introduce \WPSo, a theoretical benchmark from the \WPSshort class that achieves an optimal PoS of exactly 1, demonstrating that perfect coordination efficiency is achievable in principle. 
However, despite its theoretical optimality, \WPSo has significant drawbacks that severely limit its practical use. 
Specifically, it introduces free-riders who contribute nothing to the major setup costs, and the weight assigned to each remaining retailer is constructed in a complex manner that depends not only on its own cost parameters but also on 
other retailers' cost parameters. 
This dependence on global information makes \WPSo unsuitable for decentralized environments.

Motivated by these limitations, we propose a new practical rule, called \WPSh, which allocates the major setup cost among all participating retailers proportionally to $H_i=\frac{1}{2}h_id_i$, where $h_i$ is retailer $i$'s holding cost rate per item per unit of time and $d_i$ is its demand rate.
Importantly, \WPSh eliminates free-riders entirely and assigns each retailer a simple, transparent weight depending only on its own parameters.
We show that the payoff dominant Nash equilibrium under \WPSh closely approximates the optimal centralized policy, in the sense that each retailer either 
adopts the same replenishment interval as in the optimal centralized solution or doubles it.
Moreover, this equilibrium can be computed efficiently in linear time.
Consequently, we prove that the PoS of \WPSh is at most 1.25, demonstrating that \WPSh achieves near-optimal coordination performance.
Complementing this positive result, we establish a fundamental lower bound showing that, no \WPSshort rule can achieve a PoS of 1 without access to minor setup costs $K_i$ (the best possible PoS is at least 1.05).

While \WPSh successfully resolves the practical drawbacks of \WPSo and achieves near-optimal efficiency, it still requires access to retailers' holding cost information $H_i$, which may be  private information in decentralized environments.
This naturally raises our final research question:

\begin{quote}
    \textbf{\emph{Question 3: What is the best achievable efficiency when all cost parameters are private, and how should we allocate costs under such information constraints?
    }}
\end{quote}

To address Question 3, we propose another \WPSshort rule, called \WPSd, which allocates the major setup cost proportionally to retailers' demand rates~$d_i$.
Notice that the demand rate of each retailer can be directly inferred from its replenishment interval~$T_i$ and order quantity~$d_iT_i$, both of which are observable and verifiable.
Thus, \WPSd does not reply on any private information.
To analyze the performance of \WPSd, we first establish a general upper bound on the PoS applicable to all \WPSshort rules.
From this result, we conclude that the PoS under \WPSd is $O(\sqrt{\log \gamma_d})$, where $\gamma_d$ is the ratio of the maximum to the minimum holding cost rates among all retailers.
Since holding cost rates typically vary moderately in practice, \WPSd guarantees strong performance in decentralized systems without requiring any private information. 

Moreover, we show that \WPSd is essentially optimal when cost information is private.
Specifically, we prove a matching lower bound: When holding cost information $H_i$ is private, no \WPSshort rule can achieve a PoS better than $\Omega(\sqrt{\log \gamma_d})$, even if all minor setup costs $K_i$ are zero.
This result reveals a fundamental distinction between the roles of the holding costs $H_i$ and the minor setup costs $K_i$:
When $K_i$ is private, \WPSh still achieves a near-optimal PoS of 1.25, whereas when $H_i$ is private, no \WPSshort rule can achieve a PoS below $\Omega(\sqrt{\log \gamma_d})$.

Recognizing the critical importance of holding cost information, we also explore a practical scenario where $H_i$ is not exactly known but can be reasonably estimated.
Building on our general PoS bound, we propose a robust \WPSshort rule based on estimated values of $H_i$, which achieves good coordination performance as long as the estimation errors remain moderate.

In summary, our work provides a comprehensive understanding of cost allocation in non-cooperative joint replenishment games by (1) characterizing a broad class of rules that guarantee the existence of payoff-dominant Nash equilibria, (2) establishing fundamental efficiency limits under varying information constraints, and (3) designing practical cost allocation rules that achieve or closely approximate optimal coordination in decentralized settings.

\subsection{Further Related Work}
\label{sec:related-work}

\paragraph{Joint Replenishment Problem.} 
A key feature of the game model studied by \citet{he_noncooperative_2017}, which we also examine in this paper, is that each retailer independently chooses its own replenishment interval. In contrast, the models discussed below assume a uniform replenishment interval for all retailers, typically determined by an intermediary.
\citet{meca_cooperation_2003} consider a non-cooperative game for a special case of the joint replenishment model, where there is no minor setup cost. 
In this game, every retailer submits its optimal number of orders per unit time if it orders alone.
Based on this information a final number of joint orders per unit of time is computed.
Then the allocation rule of \citet{meca_inventory_2004} is applied to allocate the major setup costs and each retailer pays its own holding cost.
\citet{meca_cooperation_2003} provides a necessary and sufficient condition for the existence of a constructive equilibrium (a Nash equilibrium where every retailer submits a positive number) in which all retailers make joint orders.  
\citet{korpeoglu_non-cooperative_2013} also study a non-cooperative joint replenishment model with zero minor setup costs.
In this game, each retailer reports to an intermediary how much it is willing to pay for replenishment costs per unit of time and the intermediary then determines the maximum ordering frequency for the joint orders that can be financed with retailers' contributions.
They show the existence of a pure-strategy Bayesian Nash equilibrium in this game, which can be characterized by a system of integral equations. 

Next, we turn to related work on cooperative models.
\citet{meca_inventory_2004} study a special case of the cooperative model of \citet{anily_cost_2007} where there is no minor setup cost.
They propose a cost-allocation rule in which each retailer’s share of the total cost is proportional to the square of its optimal number of orders per unit of time, and show that this rule leads to a core allocation.
Interestingly, this allocation rule is equivalent to our proposed \WPSh, although our game models are different.
For the same model as \citet{anily_cost_2007} but with the added restriction that retailers always place orders together, \citet{dror_shipment_2007} provide a sufficient and necessary condition for the non-emptiness of the core.
\citet{dror_cost_2012} further analyze the sensitivity of the core with respect to the changes in  cost parameters.

\paragraph{Network Cost-Sharing Games.}
Network cost-sharing games have been extensively studied in the field of algorithmic game theory.
\citet{anshelevich2008price} introduce a network cost-sharing game, where we are given a graph with fixed edge costs and $k$ players each having a source-sink pair.
Each player chooses a path in the graph to connect its source-sink pair.
The choices made by all players induce a network and a social cost, which is the sum of the cost of edges in the network.
\citet{anshelevich2008price} show that a pure-strategy Nash equilibrium always exists under the equal-division rule, i.e., the cost of each used edge is equally shared by agents who use this edge, and the PoS under this rule is $O(\log k)$.
The existence result is proved by using the potential function method due to \citet{monderer1996potential}.
By contrast, the existence of a pure-strategy Nash equilibrium in the JRP game under the equal-division rule follows by the submodularity of the game~\citep{he_noncooperative_2017}, and our more general existence result for all \WPSshort rules reply on structural properties of the game.

\citet{anshelevich2008price} also study a network cost-sharing game with weighted players under the rule that the cost of each used edge is shared by agents who use this edge in proportional to their weights, which is equivalent to our \WPSshort rules.
However, the weights in their work are given explicitly in the input to reflect different amounts of traffic of players, whereas the weights used in our \WPSshort rules are designed based on different cost parameters of retailers aiming to optimize the PoS.
Moreover, \citet{anshelevich2008price} show that the potential function technique cannot be directly used for weighted players. 
Indeed, as shown by \citet{ChenR09}, a pure-strategy Nash equilibrium is not guaranteed to exist for the network cost-sharing game with weighted players and consequently they study approximate Nash equilibrium, whereas we show that a payoff dominant pure-strategy Nash equilibrium always exists in our JRP game under any \WPSshort rule.

In the same spirit as us, \citet{ChenRV10} study the problem of designing better cost-sharing rules to improve the equilibrium efficiency of the resulting network cost-sharing game and show that different information assumptions about the designer can lead to rules with different equilibrium efficiency.

\section{Preliminaries}

\paragraph{First-Order Interaction Joint Replenishment Model.}
There is a set $N=\{1,2,\dots,n\}$ of retailers, each has a fixed market demand rate $d_i>0$. 
To reduce replenishment costs, retailers can place a joint order from the single warehouse to satisfy their demands. 
For a subset $S \subseteq N$ of retailers, the joint setup cost is
$K^0(S)=K_0+\sum_{i \in S}K_i$.
where $K_0$ is the fixed major setup cost independent of the set $S$ of retailers and their orders, and $K_i$ is the minor setup cost for retailer $i$.
In addition, each retailer $i \in N$ has a holding cost rate $h_i$ per item per-unit of time.
We assume zero lead times and no backlogging or shortage. 
Then the replenishment policy of each retailer is determined by its replenishment interval.
For a retailer $i \in N$ with a replenishment interval $T_i$, the order quantity is $d_iT_i$ and the holding cost per-unit of time is $\frac{1}{2}h_id_iT_i$. For convenience, we denote $H_i=\frac{1}{2}h_id_i$.
For any subset $S \subseteq N$ of retailers, denote $H(S)=\sum_{i \in S} H_i$ and $K(S)=\sum_{i \in S} K_i$.

\paragraph{Optimal Centralized Policy.}
Denote $T=(T_i:i\in N)$ the joint replenishment policy.
In this paper, we consider the class of power of two policies \citep{roundy198598,jackson1985joint}. That is, there is a common base planning period $B$, and each retailer's replenishment interval is required to be
$
T_i=2^{z_i}B
$,
where $z_i \in \mathbb{Z}$ is an integer.
Given a joint replenishment policy $T$, denote $T_{\min}=\min_{i \in N}T_i$, 
then the system’s average cost per unit of time is computed as
$C(T)=\sum_{i \in N}\left(H_iT_i+\frac{K_i}{T_i}\right)+\frac{K_0}{T_{\min}}$.
Assume that $\frac{K_1}{H_1} \leq \frac{K_2}{H_2} \leq \dots \leq \frac{K_n}{H_n}$.
Let $i^*=\max\{j: 1 \leq j \leq n \text{ such that } \frac{K_0+\sum_{1 \le i \le j}K_i}{\sum_{1 \le i \le j}H_i} \geq \frac{K_j}{H_j}\}$.
Denote $U=\{1,2,\dots,i^*\}$ and $V=N \setminus U$.
\citet{jackson1985joint} prove that in the optimal centralized policy~$T^c$, all agents in $U$ order together with the same replenishment interval, while each retailer in $V$ orders its own EOQ as if there is no major setup cost.
Formally,
\begin{equation}
\label{eq:opt}
T_i^c \overset{POT}{=}
\begin{cases}
\sqrt{s}, \text{ if } i \in U,\\
\sqrt{\frac{K_i}{H_i}}, \text{ if } i \in V,
\end{cases}
\end{equation}
where
\begin{align}\label{eq:s-def}
s=\frac{K_0+\sum_{j \in U}K_j}{\sum_{j \in U}H_j} = \min_{\emptyset \subset S \subseteq N}\frac{K_0+K(S)}{H(S)}.
\end{align}
Here we use the notation $a\overset{POT}{=}b$ to denote that $a$ is the multiplicatively nearest POT value to $b$, i.e., $a=2^{\lfloor \log_2(b)+0.5\rfloor}$. By definition we have $a \in (b/\sqrt{2}, \sqrt{2}b]$.
Note that sets $U$ and $V$ can also be characterized as follows:
\begin{align}\label{eq:U-V}
U=\{i \in N \mid K_i \leq sH_i\} \text { and }
V=\{i \in N \mid K_i > sH_i\}.
\end{align}

\paragraph{Non-cooperative Joint Replenishment Game \citep{he_noncooperative_2017}.}
The set of players is the set of retailers~$N$. Each retailer $i$ decides its own replenishment interval $T_i$. 
For each retailer $i \in N$, if it pays the whole $K_0$, then its overall cost is $H_iT_i+\frac{K_0 + K_i}{T_i}$ and the corresponding optimal replenishment interval is $\sqrt{(K_0+K_i) / H_i}$. If it does not share $K_0$ at all, then its optimal replenishment interval is $\sqrt{K_i / H_i}$.
Since we consider POT policies,
we can restrict the set of strategies of retailer $i$ by 
\[\Gamma_i=\{T_i \mid \sqrt{K_i / (2H_i)} \le T_i \le \sqrt{2(K_0+K_i) / H_i} \text{ such that } T_i=2^{z_i}B, z_i \in \mathbb{Z}\}.
\]
Notice that there exists at least one POT point in $\Gamma_i$.
Let $\Gamma=\times_{i \in N} \Gamma_i$ be the strategy profile set for all retailers and $\Gamma_{-i}=\times_{j \in N\setminus\{i\}} \Gamma_j$ be the strategy profile set for all retailers except for $i$.

Given the submitted joint replenishment policy, each retailer pays its own minor setup cost and holding cost, and the major setup cost is shared by retailers according to a preannounced rule.
Our goal is to design a cost allocation rule for the game such that the total cost of the system approaches that under the optimal centralized policy even if we allow retailers to strategically decide their own replenishment intervals.
Formally, denote~$\mathcal{P}=(K_0,N,\{K_i\}_{i \in N},\{h_i\}_{i \in N},\{d_i\}_{i \in N})$ the profile of the system.
A cost allocation rule is a function $x$ that maps a profile $\mathcal{P}$ and a joint replenishment policy $T$ to an allocation scheme $x(T)$ of the major setup cost per unit of time $K_0/T_{\min}$ to retailers.
Naturally, we require that $x_i(T) \geq 0$ for each $i \in N$.
The long-run average cost of retailer~$i$ under cost allocation rule~$x$ is
$f_i(T)=H_iT_i+\frac{K_i}{T_i}+x_i(T)$.
Given a cost allocation rule $x$, denote the induced game by $\mathcal{G}=(\mathcal{P},x)$.

\paragraph{Nash Equilibrium and Efficiency.}
A joint replenishment policy $T$ is called a (pure-strategy) Nash equilibrium under a cost allocation rule $x$ if no retailer can benefit by unilaterally changing its replenishment interval, i.e., $f_i(T_i;T_{-i}) \leq \min_{T'_i \in \Gamma_i}f_i(T'_i;T_{-i}), \forall i \in N, \forall T'_i \in \Gamma_i$.
Let $\text{NE}(\mathcal{G})$ denote the set of all Nash equilibria of a game~$\mathcal{G}$ and let $T^c(\mathcal{G})$ denote the optimal centralized policy for $\mathcal{G}$.
Given a cost allocation rule $x$, the price of anarchy $\text{PoA}(x)$ (resp. price of stability $\text{PoS}(x)$) is defined as the worst-case ratio over all induced games~$\mathcal{G}=(\mathcal{P},x)$ between the \emph{worst} (resp. \emph{best}) equilibrium cost and the optimal cost:
\[
\text{PoA}(x)=\sup_{\mathcal{G}} \frac{\max_{T \in \text{NE}(\mathcal{G})}C(T)}{C(T^c(\mathcal{G}))}
\text{ and }
\text{PoS}(x)=\sup_{\mathcal{G}} \frac{\min_{T \in \text{NE}(\mathcal{G})}C(T)}{C(T^c(\mathcal{G}))}.
\]
A joint replenishment policy $T$ is called a payoff dominant Nash equilibrium if it is a Nash equilibrium and 
it minimizes the cost for every retailer among all Nash equilibria.
Clearly, a payoff dominant Nash equilibrium achieves the lowest overall system cost among all Nash equilibria, making PoS a measure of the efficiency loss associated with the payoff dominant Nash equilibrium.

\section{Weighted Proportional Sharing Rules}
\label{sec:WPS}

In this section, we introduce a broad class of cost allocation rules for decentralized joint replenishment, which we call \WPSlong (\WPSshort) rules.
We begin by defining \WPSshort rules (Section~\ref{sec:WPS-intro}) and analyzing their structural properties (Section~\ref{sec:WPS-property}).
Based on these properties, we show that every \WPSshort rule guarantees the existence of a payoff-dominant Nash equilibrium (Section~\ref{sec:WPS-NE}).
Finally, we motivate the use of PoS as the criterion for evaluating the efficiency of different rules (Section~\ref{sec:WPS-efficiency}).

\subsection{Introducing \WPSshort Rules}
\label{sec:WPS-intro}

We start by formally defining the class of \WPSshort rules.

\begin{definition}
A cost allocation rule is called a \WPSlong (\WPSshort) rule, if for any profile $\mathcal{P}$ there exists a weight vector~$w=(w_i \geq 0)_{i \in N}$ such that the major setup cost is allocated as follows:
Whenever a subset $S \subseteq N$ of retailers place a joint order, the corresponding major setup cost is shared by retailers in $S$ proportionally to their weights $w_i$. (For joint orders including only retailers with $w_i=0$ we allocate the major setup cost equally among these retailers.)
\end{definition}

The equal-division rule corresponds to the special case with $w_i=1$ for each $i \in N$ for any profile~$\mathcal{P}$.
Note that the weight vector~$w$ in the above definition can be related to the parameters in the profile~$\mathcal{P}$ but has to be independent of retailers' actions (their chosen replenishment intervals).
For example, \citet{he_noncooperative_2017} analyze a rule called Proportional Sharing Rule (PSR), where retailers share the major setup cost proportionally to their order quantity $d_iT_i$.
Since the order quantity $d_iT_i$ relies on retailers' actions $T_i$, which is not prefixed for each profile $\mathcal{P}$,
PSR does not belong to the class of \WPSshort rules. 
As shown by \citet{he_noncooperative_2017}, Nash equilibrium is not guaranteed to exist under PSR.
In contrast, we will show the existence of Nash equilibrium for all \WPSshort rules. 

For a joint replenishment policy $T$, denote $N[T_i; T_{-i}]$ the set of retailers with replenishment interval at most $T_i$ under $T$, i.e., $N[T_i; T_{-i}]=\{j \in N \mid T_j \leq T_i\}$.
Without loss of generality, assume that $T_1 \leq T_2 \leq \dots \leq T_n$.
Since we consider POT policies, we have a nice property that for every joint order involving retailer $i$, it also includes all retailers whose replenishment interval does not exceed $T_i$, which are all the retailers from $\{1,2,\dots,i-1\}$.
Therefore, for any joint order the set of associated retailers can only be one of the $n$ cases: $\{1\},\{1,2,\},\dots,\{1,2,\dots,n\}$.
It follows that for any set $\{1,2,\dots,i\}$, the corresponding order frequency is $1/T_i-1/T_{i+1}$, where we set $1/T_{n+1} \coloneqq 0$.
Under \WPSshort rules, for each joint order associated with retailers from $\{1,2,\dots,i\}$, the major setup cost $K_0$ is shared by all retailers from this set proportionally to their weights, i.e., retailer $\ell \in \{1,2,\dots,i\}$ pays $\frac{w_\ell}{\sum_{j \in \{1,2,\dots,i\}}w_j}K_0$ for the major setup cost.
Therefore, the long-run average cost of retailer $i$ under a \WPSshort rule is
\[
f_i(T_i;T_{-i})=H_iT_i+\frac{K_i}{T_i}+x_i(T_i;T_{-i}),
\]
where 
\[
x_i(T_i;T_{-i})=\sum_{m=i}^{n}\left(\frac{1}{T_m}-\frac{1}{T_{m+1}}\right)\frac{w_i}{\sum_{j \in N[T_m; T_{-m}]}w_j}K_0.
\]

Although the above computation for the sharing of the major setup cost looks complex, the change in this part when a retailer doubles or halves its replenishment interval is easy to compute.
Indeed, when retailer $i$ doubles its replenishment interval, it skips the joint orders together with all retailers from $N[T_i; T_{-i}]$ that are placed at time points in $\{(2j+1)T_i\mid j=0,1,2,\dots\}$.
So the reduction in the major setup cost for retailer $i$ is
$x_i(T_i;T_{-i})-x_i(2T_i;T_{-i})=\frac{w_i}{\sum_{j \in N[T_i; T_{-i}]}w_j}\frac{K_0}{2T_i}$.
Similarly, the increase for halving the replenishment interval is
$x_i(T_i/2;T_{-i})-x_i(T_i;T_{-i})=\frac{w_i}{\sum_{j \in N[T_i/2; T_{-i}]}w_j}\frac{K_0}{T_i}$.

\begin{figure}[tb]
\begin{tikzpicture}[xscale=1.4]

\draw[->] (0,0) -- (9,0) node[right] {Time};

\foreach \x in {0,1,2,3,4,5,6,7,8} {
    \draw (\x,0.1) -- (\x,0); 
    \node[below] at (\x,-0.1) {\x}; 
}

\node[above] at (1,0.1) {\{1\}};
\node[above] at (2,0.1) {\{1, 2\}};
\node[above] at (3,0.1) {\{1\}};
\node[above] at (4,0.1) {\{1, 2\}};
\node[above] at (5,0.1) {\{1\}};
\node[above] at (6,0.1) {\{1, 2\}};
\node[above] at (7,0.1) {\{1\}};
\node[above] at (8,0.1) {\{1, 2, 3\}};

\end{tikzpicture}
\caption{An example of three retailers with $T_1=1,T_2=2,T_3=8$.}
\label{fig:example}
\end{figure}
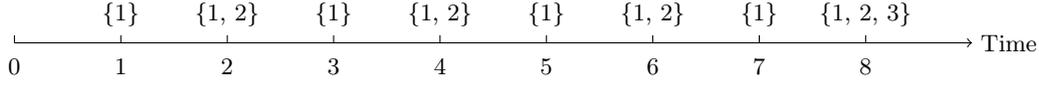

Consider the example in Figure~\ref{fig:example}, where there are three retailers with replenishment intervals $T_1=1,T_2=2,T_3=8$.
Retailer 2 participates in two different joint order sets: $\{1,2\}$, with order frequency of $\frac{1}{2}-\frac{1}{8}=\frac{3}{8}$, and
$\{1,2,3\}$, with order frequency of $\frac{1}{8}$.
Thus, the long-run average major setup cost for retailer 2 is
$\frac{3}{8}\frac{w_2}{w_1+w_2}K_0+\frac{1}{8}\frac{w_2}{w_1+w_2+w_3}K_0$.
If retailer 2 doubles its replenishment interval from 2 to 4, then it will skip the joint orders with retailer 1 that occur at time points  $\{2,6,10,14,\dots\}$ with frequency $\frac{1}{4}=\frac{1}{2T_2}$.
This reduces its major setup cost by $\frac{w_2}{w_1+w_2}\frac{K_0}{4}$.
Conversely, if retailer 2 halves its replenishment interval from 2 to 1, then it participate in additional joint orders with retailer 1 at time points $\{1,3,5,7,\dots\}$ with frequency $\frac{1}{2}=\frac{1}{T_2}$.
This increases its major setup cost by $\frac{w_2}{w_1+w_2}\frac{K_0}{2}$.

\subsection{Structural Properties}
\label{sec:WPS-property}


To build toward our main equilibrium results, we begin by examining key structural properties of \WPSshort rules.
We first characterize the condition under which a retailer can reduce its cost by increasing its replenishment interval.

\begin{lemma}
\label{lem:right-jump}
For any \WPSshort rule and any joint policy $T$, the following statements are equivalent:
\begin{enumerate}
\item Retailer $i$ can reduce its cost by increasing its replenishment interval to a certain extent;
\item Retailer $i$ can reduce its cost by doubling its replenishment interval;
\item It holds that
    \[
    T_i<\sqrt{\frac{1}{2}\frac{K_i+\frac{w_i}{\sum_{j \in N[T_i; T_{-i}]}w_j}K_0}{H_i}}.
    \]
\end{enumerate}
\end{lemma}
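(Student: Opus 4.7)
The plan is to establish the chain (3) $\iff$ (2), the trivial (2) $\Rightarrow$ (1), and (1) $\Rightarrow$ (2) via a short monotonicity argument. The first equivalence is a direct algebraic computation: substitute the identity
\[
x_i(T_i;T_{-i}) - x_i(2T_i;T_{-i}) = \frac{w_i}{\sum_{j \in N[T_i;T_{-i}]} w_j} \cdot \frac{K_0}{2T_i}
\]
derived in the preceding subsection into
\[
f_i(2T_i;T_{-i}) - f_i(T_i;T_{-i}) = H_i T_i - \frac{K_i}{2T_i} - \bigl(x_i(T_i;T_{-i}) - x_i(2T_i;T_{-i})\bigr),
\]
and verify that this is strictly negative exactly when the threshold inequality in (3) holds. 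The implication (2) $\Rightarrow$ (1) is immediate since doubling is a particular POT increase.

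For (1) $\Rightarrow$ (2) I would argue the contrapositive. Write $W_i(T) \coloneqq \sum_{j \in N[T;T_{-i}]} w_j$; this is monotone non-decreasing in $T$, since $N[T;T_{-i}] = \{j : T_j \leq T\}$ only grows with $T$ while $T_{-i}$ is held fixed. Suppose (2) fails at the state $(T_i, T_{-i})$, so by (2) $\iff$ (3) we have $T_i^2 \geq \frac{K_i + (w_i/W_i(T_i)) K_0}{2 H_i}$. For every $k \geq 1$,
\[
(2^{k-1} T_i)^2 \geq T_i^2 \geq \frac{K_i + (w_i/W_i(T_i)) K_0}{2 H_i} \geq \frac{K_i + (w_i/W_i(2^{k-1} T_i)) K_0}{2 H_i},
\]
the last step using $W_i(2^{k-1} T_i) \geq W_i(T_i)$. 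Applying (2) $\iff$ (3) now at the hypothetical state where retailer $i$ holds interval $2^{k-1} T_i$ (the other retailers' intervals, and hence the map $W_i(\cdot)$, are unaffected by $i$'s own deviations), this says doubling from $2^{k-1} T_i$ to $2^k T_i$ cannot strictly reduce retailer $i$'s cost. Chaining over $k$ yields $f_i(2^k T_i; T_{-i}) \geq f_i(T_i; T_{-i})$ for every $k \geq 0$, and since POT deviations only increase $T_i$ by powers of two, this is exactly the negation of (1).

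The main obstacle is the self-referential structure of condition (3): its right-hand side depends on $W_i(T_i)$, which itself depends on $T_i$. I must check that when retailer $i$ considers a larger POT interval, the threshold in (3) does not grow with $i$'s deviation faster than $T_i^2$ does. The key observation is that $W_i$ is monotone in its argument, so the right-hand side of (3) weakly shrinks as $i$'s interval grows while the left-hand side grows by a factor of at least four per doubling. With this monotonicity pinned down, the remainder of the argument is routine bookkeeping.
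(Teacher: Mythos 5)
Your proposal is correct and follows essentially the same route as the paper: the (2)$\iff$(3) equivalence is the same incremental cost computation, and your (1)$\Rightarrow$(2) step rests on exactly the monotonicity of $\sum_{j \in N[\,\cdot\,;T_{-i}]}w_j$ that the paper uses to bound the multi-step setup-cost reduction by a geometric series. The only difference is packaging — you argue the contrapositive by chaining non-improving doublings, while the paper directly sums the incremental changes of a beneficial multi-step jump — which amounts to telescoping the same inequality.
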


\begin{proof}
When retailer $i$ increases its replenishment interval from $T_i$ to $2T_i$, its holding cost is increased by $H_iT_i$ while its setup cost is decreased by 
\[
\frac{K_i+\frac{w_i}{\sum_{j \in N[T_i;T_{-i}]}w_j}K_0}{2T_i}.
\]
This change reduces retailer $i$'s total cost if and only if
\[
\frac{K_i+\frac{w_i}{\sum_{j \in N[T_i;T_{-i}]}w_j}K_0}{2T_i}>H_iT_i
\Leftrightarrow 
T_i< \sqrt{\frac{1}{2}\frac{K_i+\frac{w_i}{\sum_{j \in N[T_i;T_{-i}]}w_j}K_0}{H_i}}.
\]
Thus statements (2) and (3) are equivalent. 

Since (2) directly implies (1), it remains to show that (1) implies (2).
Suppose that retailer $i$ can reduce its cost by increasing its replenishment interval from $T_i$ to $2^{z}T_i$ for some $z \in \mathbb{Z^+}$.
After this change, the holding cost of retailer $i$ is increased by $(2^{z}-1)H_iT_i$ while the setup cost is decreased by 
\begin{align*}
 \sum_{x\in\{0,1,2,\dots,z-1\}}
 \frac{K_i+\frac{w_i}{\sum_{j \in N[2^xT_i;T_{-i}]}w_j}K_0}{2 \cdot 2^x \cdot T_i} 
& \leq   \sum_{x\in\{0,1,2,\dots,z-1\}} 
\frac{K_i+\frac{w_i}{\sum_{j \in N[T_i;T_{-i}]}w_j}K_0}{2 \cdot 2^x \cdot T_i} \\
 & \leq
\left(1-\frac{1}{2^z}\right)\frac{K_i+\frac{w_i}{\sum_{j \in N[T_i;T_{-i}]}w_j}K_0}{T_i}.
\end{align*}
Then
\[
\left(1-\frac{1}{2^z}\right)\frac{K_i+\frac{w_i}{\sum_{j \in N[T_i;T_{-i}]}w_j}K_0}{T_i}>(2^z-1)H_iT_i, 
\]
or
\begin{align*}
    T_i  <\sqrt{\frac{1}{2^z}\frac{K_i+\frac{w_i}{\sum_{j \in N[T_i;T_{-i}]}w_j}K_0}{H_i}}  \leq \sqrt{\frac{1}{2}\frac{K_i+\frac{w_i}{\sum_{j \in N[T_i;T_{-i}]}w_j}K_0}{H_i}},
\end{align*}
which means that retailer $i$ can also reduce its cost by doubling its replenishment interval according to the equivalence of statements (2) and (3).
Thus, (1) implies (2).
\hfill$\square$ \end{proof}

According to Lemma \ref{lem:right-jump}, except for the information about retailer $i$, whether retailer $i$ can benefits by doubling its replenishment interval depends only on $N[T_i;T_{-i}]$, the set of retailers with replenishment interval at most $T_i$.
Then we obtain the following result.

\begin{lemma}[Better-response Monotonicity]
\label{lem:no-back-jump}
Let $T^1$ and $T^2$ be two joint policies such that $T^1_i=T^2_i$ for some retailer $i \in N$.
If $N[T^2_i;T^2_{-i}] \subseteq N[T^1_i;T^1_{-i}]$, then retailer $i$ can benefit by doubling its replenishment interval under $T^1$ only if it can benefit by doubling its replenishment interval under $T^2$, i.e., 
\[
f_i(2T^1_i;T^1_{-i}) < f_i(T^1_i;T^1_{-i})
\Rightarrow
f_i(2T^2_i;T^2_{-i}) < f_i(T^2_i;T^2_{-i}).
\]
In particular, if $N[T^1_i;T^1_{-i}]=N[T^2_i;T^2_{-i}]$, then
\[
f_i(2T^1_i;T^1_{-i}) < f_i(T^1_i;T^1_{-i})
\Leftrightarrow
f_i(2T^2_i;T^2_{-i}) < f_i(T^2_i;T^2_{-i}).
\]
\end{lemma}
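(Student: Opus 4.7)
The plan is to reduce the statement entirely to the algebraic threshold characterized in Lemma~\ref{lem:right-jump}(3), and then exploit its monotonicity in the total weight of the retailers currently ordering at least as often as~$i$. Writing $W(T) \coloneqq \sum_{j \in N[T_i; T_{-i}]} w_j$, Lemma~\ref{lem:right-jump} says that the better response ``double $T_i$'' strictly decreases retailer $i$'s cost at profile $(T_i;T_{-i})$ if and only if
\[
T_i \;<\; g(W(T)) \;\coloneqq\; \sqrt{\tfrac{1}{2}\cdot \bigl(K_i + \tfrac{w_i}{W(T)}\,K_0\bigr)/H_i}.
\]
Since by hypothesis $T^1_i = T^2_i$, the left-hand side of this inequality is identical under both profiles, so the claim reduces to a comparison of the two thresholds $g(W(T^1))$ and $g(W(T^2))$.

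Next I would observe that $g$ is a (weakly) decreasing function of $W$ on the relevant range: enlarging the denominator $W$ shrinks $w_i/W$, shrinks the numerator inside the square root, and hence shrinks $g(W)$. The hypothesis $N[T^2_i; T^2_{-i}] \subseteq N[T^1_i; T^1_{-i}]$, together with $w_j \geq 0$ for every $j$, immediately gives $W(T^2) \leq W(T^1)$, and hence $g(W(T^2)) \geq g(W(T^1))$. Chaining the inequalities,
\[
T^2_i \;=\; T^1_i \;<\; g(W(T^1)) \;\leq\; g(W(T^2)),
\]
which, by Lemma~\ref{lem:right-jump} applied to $T^2$, is exactly the statement that doubling is improving under $T^2$. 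When the two sets coincide we have $W(T^1)=W(T^2)$ and hence $g(W(T^1))=g(W(T^2))$, so the one-way implication upgrades to an equivalence; that finishes both halves of the lemma.

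I do not expect a substantive obstacle: the real content was already packaged in Lemma~\ref{lem:right-jump}, whose key feature is that whether doubling helps depends on the co-ordering retailers only through the aggregate weight $W(T)$, not through their specific intervals. The only minor nuisance is the degenerate case $w_i = 0$: if some $w_j > 0$ with $j \in N[T_i;T_{-i}]$, retailer $i$'s share is $0$ and the improvement condition becomes independent of $T_{-i}$; if instead all weights in $N[T_i;T_{-i}]$ vanish and the equal-split clause is triggered, then the coefficient of $K_0$ in the share is $1/|N[T_i;T_{-i}]|$, which is still monotone decreasing under set inclusion, so the same threshold-monotonicity argument goes through verbatim.
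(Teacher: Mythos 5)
Your proposal is correct and follows essentially the same route as the paper's own proof: both reduce the claim to the threshold condition of Lemma~\ref{lem:right-jump} and use the fact that $N[T^2_i;T^2_{-i}] \subseteq N[T^1_i;T^1_{-i}]$ with $w_j \geq 0$ makes the threshold under $T^2$ at least as large as under $T^1$. Your explicit treatment of the degenerate all-zero-weight case is a small bonus the paper's proof does not spell out, but it does not change the argument.
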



Lemma \ref{lem:no-back-jump} implies the following monotonicity property of agents' better-response dynamics:
Suppose that under some joint policy~$T^1$, retailer $i$ doubles its replenishment interval from $T^1_i$ to $2T^1_i$, reducing its cost;
After a sequence of changes by other retailers, we arrive at a new joint policy $T^2$;
Then retailer $i$ has no incentive to revert its replenishment interval from $2T^1_i$ back to $T^1_i$, provided that all retailers who initially had larger replenishment intervals  than $T^1_i$ under $T^1$ still have  replenishment intervals larger than $T^1_i$ under $T^2$.
In particular, the last condition can be satisfied if no retailer reduces its replenishment interval during this process.
This monotonicity property is crucial for analyzing the payoff dominant Nash equilibrium and its efficiency later in the paper.

Next we characterize the condition under which a retailer can reduce its costs by decreasing the replenishment interval.
The proof is similar to the proof of Lemma \ref{lem:right-jump}
and is deferred to Appendix~\ref{app:lem:left-jump}.

\begin{lemma}
\label{lem:left-jump}
For any \WPSshort rule and any joint policy $T$, the following statements are equivalent:
\begin{enumerate}
\item Retailer $i$ can reduce its cost by decreasing its replenishment interval to a certain extent;
\item Retailer $i$ can reduce its cost by halving its replenishment interval;
\item It holds that
    \[
    T_i>\sqrt{2\frac{K_i+\frac{w_i}{\sum_{j \in N[T_i/2;T_{-i}]}w_j}K_0}{H_i}}.
    \]
\end{enumerate}
\end{lemma}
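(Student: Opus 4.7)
The plan is to mirror the structure of the proof of Lemma~\ref{lem:right-jump}, exploiting the observation that for halving steps the containment direction of the sets $N[\cdot;T_{-i}]$ is reversed relative to the doubling case.

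First I would establish $(2)\Leftrightarrow(3)$ by a one-step computation. Halving $T_i$ to $T_i/2$ decreases the holding cost by $H_iT_i/2$, increases the minor setup cost by $K_i/T_i$, and, by the increment formula stated in the paper, increases the major setup cost by $\frac{w_i}{\sum_{j\in N[T_i/2;T_{-i}]}w_j}\frac{K_0}{T_i}$. Requiring the net change to be strictly negative and rearranging gives exactly the inequality in~(3).

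Since $(2)\Rightarrow(1)$ is immediate, the real work is $(1)\Rightarrow(2)$. Suppose retailer $i$ strictly benefits from some multi-step reduction $T_i\mapsto T_i/2^z$ with $z\ge 1$. I would telescope the major-setup increase as
\[
\sum_{x=0}^{z-1}\bigl[x_i(T_i/2^{x+1};T_{-i})-x_i(T_i/2^x;T_{-i})\bigr]=\sum_{x=0}^{z-1}\frac{2^x w_i K_0}{T_i\sum_{j\in N[T_i/2^{x+1};T_{-i}]}w_j}.
\]
Because $T_i/2^{x+1}\le T_i/2$ for every $x\ge 0$, we have $N[T_i/2^{x+1};T_{-i}]\subseteq N[T_i/2;T_{-i}]$, so each denominator is at most $\sum_{j\in N[T_i/2;T_{-i}]}w_j$. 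This yields a \emph{lower} bound of $(2^z-1)\frac{w_iK_0}{T_i\sum_{j\in N[T_i/2;T_{-i}]}w_j}$ on the major-setup increase. Combining this with the exact minor-setup increase $(2^z-1)K_i/T_i$ and the holding-cost drop $(1-1/2^z)H_iT_i$, the cost-reduction hypothesis forces
\[
H_iT_i^2 > 2^z\biggl[K_i+\frac{w_i}{\sum_{j\in N[T_i/2;T_{-i}]}w_j}K_0\biggr]\ge 2\biggl[K_i+\frac{w_i}{\sum_{j\in N[T_i/2;T_{-i}]}w_j}K_0\biggr],
\]
which is exactly statement~(3), and hence gives~(2) by the equivalence already established.

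The main delicate point is keeping the inequality direction straight: in the doubling direction (Lemma~\ref{lem:right-jump}) one upper-bounds a setup-cost \emph{reduction}, whereas here one lower-bounds a setup-cost \emph{increase}, yet both go through the same containment-direction argument on $N[\cdot;T_{-i}]$. Once this bookkeeping is set up correctly, the rest is a routine telescoping calculation and a division by $2^z-1>0$.
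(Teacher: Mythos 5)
Your proof is correct and follows essentially the same route as the paper's: the identical one-step computation gives $(2)\Leftrightarrow(3)$, and $(1)\Rightarrow(2)$ is obtained by lower-bounding the multi-step setup-cost increase via the containment $N[T_i/2^{x+1};T_{-i}]\subseteq N[T_i/2;T_{-i}]$ and reducing to condition $(3)$. The only immaterial difference is that you telescope and bound every halving step (getting the factor $2^z$), whereas the paper lower-bounds the total increase by the contribution of the final halving step alone (getting the factor $2^{2z-1}/(2^z-1)$); both exceed $2$, so the conclusions coincide.
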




Based on Lemma \ref{lem:left-jump}, we show that if the replenishment intervals of all retailers are no smaller than that in the optimal centralized policy $T^c$, then any retailer whose replenishment interval equals that in $T^c$ does not benefit from decreasing it.

\begin{lemma}
\label{lem:no-left-jump}
For any \WPSshort rule and any joint policy $T$ with $T \geq T^c$, for any retailer $i$ with $T_i=T^c_i$, it cannot reduce its cost by decreasing its replenishment interval, i.e.,
$f_i(T^c_i;T_{-i}) \leq f_i(2^{-z}T^c_i;T_{-i}), \forall z\in \mathbb{Z^+}$.
\end{lemma}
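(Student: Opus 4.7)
The plan is to invoke Lemma \ref{lem:left-jump} and show that its statement (3) fails for any retailer $i$ with $T_i = T^c_i$, so that halving is unprofitable; by the equivalence of statements (1) and (2), this rules out every profitable decrease. Concretely, it suffices to verify
\[
T^c_i \leq \sqrt{2\,\frac{K_i + \frac{w_i}{\sum_{j \in N[T^c_i/2;\, T_{-i}]} w_j}\,K_0}{H_i}}.
\]

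I would split along the characterization \eqref{eq:opt}. For $i \in V$, the POT-rounding bound gives $T^c_i \leq \sqrt{2K_i/H_i}$, so the desired inequality follows from the nonnegativity of the $K_0$ term, regardless of what $N[T^c_i/2;\, T_{-i}]$ is.

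The substantive case is $i \in U$. Here I would first establish $N[T^c_i/2;\, T_{-i}] = \{i\}$: since POT-rounding is monotone, \eqref{eq:U-V} gives $\sqrt{K_j/H_j} \geq \sqrt{s}$ for $j \in V$ while $T^c_j$ is the POT-rounding of $\sqrt{s}$ for $j \in U$, so $T^c_j \geq T^c_i$ for every $j$; the hypothesis $T \geq T^c$ then forces $T_j \geq T^c_j \geq T^c_i > T^c_i/2$ for every $j \neq i$. Consequently the weight fraction equals $1$ (the degenerate case $w_i = 0$ is covered by the equal-split convention in the \WPSshort definition, which still has retailer $i$ paying all of $K_0$), and the inequality reduces to $(T^c_i)^2 H_i \leq 2(K_i + K_0)$. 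This I would obtain by chaining the POT upper bound $(T^c_i)^2 \leq 2s$ with the singleton bound $s \leq (K_0 + K_i)/H_i$ that drops out of the minimum in \eqref{eq:s-def}.

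The main care point is the singleton claim $N[T^c_i/2;\, T_{-i}] = \{i\}$ for $i \in U$, which hinges on POT-monotonicity together with the hypothesis $T \geq T^c$; once that is in hand, the remaining step is short algebraic manipulation of the POT-rounding gap and the defining minimum for $s$.
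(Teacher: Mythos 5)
Your proposal is correct and follows essentially the same route as the paper's proof: the same appeal to the equivalence in Lemma~\ref{lem:left-jump}, the same case split on $U$ versus $V$, the same singleton observation $N[T^c_i/2;T_{-i}]=\{i\}$ for $i\in U$ (forced by $T\geq T^c$ and $T^c_i=T^c_{\min}$), and the same chaining of the POT bound $(T^c_i)^2\leq 2s$ with $s\leq (K_0+K_i)/H_i$ from the minimum in~(\ref{eq:s-def}). The only cosmetic difference is that you verify directly that statement (3) of Lemma~\ref{lem:left-jump} fails, whereas the paper phrases the same computation as a proof by contradiction; your explicit handling of the $w_i=0$ degenerate case is a small bonus the paper leaves implicit.
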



\subsection{Payoff Dominant Nash Equilibrium}
\label{sec:WPS-NE}

Based on the above properties, we establish the existence of a payoff dominant Nash equilibrium under all \WPSshort rules.
Intuitively, we will show that if we let all retailers start with the optimal centralized policy and let them decide for themselves whether they want to double, halve, or keep the replenishment interval, then they will reach a Nash equilibrium after a finite number of changes.
Moreover, this Nash equilibrium 
is a payoff dominant Nash equilibrium.
We summarize our results in the following theorem.


  
  

\begin{theorem}
\label{thm:general-NE-exist}
For any \WPSshort rule, 
there exists a payoff dominant Nash equilibrium, which can be found in $O\left(n^2\log_2\left(1+\frac{K_0}{\min_i K_i}\right)\right)$ time.
\end{theorem}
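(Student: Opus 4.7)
The plan is to exhibit a natural one-sided better-response dynamics and show that, starting from $T^c$, it converges in the required time to a payoff-dominant Nash equilibrium. Concretely, initialize $T^0 = T^c$ and, while there exists a retailer $i$ who can strictly reduce its cost by doubling $T_i$ (detectable in $O(1)$ per retailer via the inequality in Lemma~\ref{lem:right-jump}), pick any such $i$ and double; halt when no retailer wants to double, and call the output $T^k$.

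The first task is to show that no halving is ever beneficial along the run, so restricting to doublings is without loss. Since the algorithm only doubles, the invariant $T^t \geq T^c$ holds throughout. For a retailer $i$ that has not moved yet, $T^t_i = T^c_i$ and Lemma~\ref{lem:no-left-jump} directly rules out a beneficial halving. For a retailer $i$ that has doubled, let $t^\ast$ be the last time $i$ doubled; only other retailers have moved (upward) since, so $N[T^{t^\ast}_i; T^{t}_{-i}] \subseteq N[T^{t^\ast}_i; T^{t^\ast}_{-i}]$. Applying Lemma~\ref{lem:no-back-jump} with $T^1 = T^{t^\ast}$ and $T^2$ equal to the current state with $i$'s coordinate replaced by $T^{t^\ast}_i$ shows that doubling from $T^{t^\ast}_i$ remains strictly beneficial in the current state, i.e., halving $T^{t}_i = 2T^{t^\ast}_i$ back to $T^{t^\ast}_i$ is not beneficial.

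For termination and complexity, each coordinate lies in $\Gamma_i$, whose endpoints are $\sqrt{K_i/(2H_i)} \leq T^c_i$ and $\sqrt{2(K_0+K_i)/H_i}$, so retailer $i$ can double at most $O(\log_2(1 + K_0/K_i))$ times; summing over $i$ gives $O(n \log_2(1 + K_0/\min_i K_i))$ doublings, each requiring an $O(n)$ scan to detect an improver, yielding the target time bound. Upon termination no retailer wants to double by one step (stopping rule) and no retailer wants to halve by one step (invariant); Lemmas~\ref{lem:right-jump} and~\ref{lem:left-jump} then upgrade these to ``no beneficial multi-step change,'' so $T^k$ is a Nash equilibrium.

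The main obstacle will be to upgrade ``Nash equilibrium'' to ``payoff-dominant Nash equilibrium,'' i.e.\ to show $f_i(T^k) \leq f_i(T^\ast)$ for every retailer $i$ and every other NE $T^\ast$. The plan is to leverage Lemma~\ref{lem:no-back-jump} globally: first, prove a confluence property that the output $T^k$ is independent of the order in which improvers are picked, so that $T^k$ is the canonical NE reachable from $T^c$ by upward moves; then compare any NE $T^\ast$ to $T^k$ by exhibiting a sequence of single-coordinate changes from $T^\ast$ toward $T^k$ along which no retailer's cost ever decreases. The subtlety is that when another retailer increases its interval the share $x_i$ of the major setup cost for $i$ can a~priori move either way (the pool sharing $K_0$ at some joint orders shrinks, while at other time points $i$ is left alone and pays more), so cost monotonicity is not free coordinate-wise. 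I would control this by sorting retailers according to their intervals in $T^k$ and inducting from the smallest interval upward, using the explicit form of $x_i(T)$ together with the fact that at $T^k$ each retailer sits at the smallest power-of-two interval incentive-compatible with no further improvement. This last step is where I expect the proof to demand the bulk of the technical work.
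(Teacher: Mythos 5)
The first half of your argument — doubling-only dynamics from $T^c$, the invariant that no halving is ever beneficial (via Lemma~\ref{lem:no-left-jump} for unmoved retailers and Lemma~\ref{lem:no-back-jump} for moved ones), the counting of at most $O(\log_2(1+K_0/K_i))$ doublings per retailer, and the upgrade from one-step to multi-step stability via Lemmas~\ref{lem:right-jump} and~\ref{lem:left-jump} — matches the paper's proof of Lemmas~\ref{lem:no-left-jump-ALG} and~\ref{lem:alg1-NE} and is correct.

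The gap is in the payoff-dominance step, and it is a real one. The ingredient you are missing is the \emph{leastness} of the output: $T^k \leq T^*$ coordinate-wise for \emph{every} Nash equilibrium $T^*$ (the paper's Lemma~\ref{lem:alg1-least}, proved by a minimal-counterexample argument: take the set $P$ of retailers with $T^k_i > T^*_i$, isolate a carefully chosen $i_0$, and show via Lemma~\ref{lem:right-jump} that $i_0$ would profitably double at $T^*$, contradicting that $T^*$ is an equilibrium). Your plan of ``confluence plus a monotone path from $T^*$ to $T^k$'' does not substitute for this: confluence by itself says nothing about other equilibria not reachable by the dynamics (and in the paper confluence is a \emph{corollary} of leastness, not a route to it); the path direction you state is backwards (costs never \emph{decreasing} from $T^*$ to $T^k$ would give $f_i(T^k)\geq f_i(T^*)$); and without first knowing $T^*\geq T^k$ you cannot even guarantee the path moves each coordinate in a single direction. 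Moreover, the ``subtlety'' you flag is not actually there: retailer $i$'s order epochs are the multiples of $T_i$ regardless of $T_{-i}$, and raising some $T_j$ only \emph{removes} $j$ from some of the groups containing $i$, so $x_i(T_i;T_{-i})$ is cleanly non-decreasing in each $T_j$. Once leastness is in hand, payoff dominance is the two-line chain $f_i(T^k_i,T^k_{-i}) \leq f_i(T^*_i,T^k_{-i}) \leq f_i(T^*_i,T^*_{-i})$: the first inequality is the equilibrium property of $T^k$, the second is exactly this coordinate-wise monotonicity applied with $T^*_{-i}\geq T^k_{-i}$. So the structure you need is leastness first, then monotonicity — not confluence plus an induction over sorted intervals.
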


Theorem~\ref{thm:general-NE-exist} generalizes the result of \citet{he_noncooperative_2017} about the existence of a payoff dominant Nash equilibrium from the equal-division rule to all \WPSshort rules.
Moreover, to achieve this broader result, we conduct a deeper analysis of these rules by characterizing the jumping conditions and identifying an interesting monotonicity property of agent better responses (in Section~\ref{sec:WPS-property}).
In contrast, the result of \citet{he_noncooperative_2017} follows from the submodularity of the game under the equal-division rule.

We begin by presenting a simple algorithm (called Algorithm~1) for computing a Nash equilibrium under the \WPSshort rules. 
The algorithm initializes each retailer with the optimal centralized policy~$T^c$. 
Then, for each retailer $i \in N$, if it can reduce its cost by doubling or halving its replenishment interval, its interval is updated accordingly. 
This process is repeated until no retailer can further reduce its cost through such adjustments.
Notice that in Algorithm~1, we do not specify the update order of retailers in each round.
In fact, we will show that the update order does not affect the outcome of Algorithm~1.
We start by showing that no retailer halves its replenishment interval in Algorithm~1.


\begin{lemma}
\label{lem:no-left-jump-ALG}
During the process of Algorithm~1 no retailer decreases its replenishment interval.
\end{lemma}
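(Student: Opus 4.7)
The plan is to prove the statement by induction on the sequence of updates produced by Algorithm~1, maintaining the invariant that every update performed so far is a doubling. The base case (no updates yet) is vacuous. For the inductive step, I would fix the $k$-th update, made by some retailer $i$ at the current state $T$, and argue that a halving cannot strictly reduce $i$'s cost at $T$. I would split into two cases depending on whether retailer $i$ has already moved.

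If retailer $i$ has never moved before, then $T_i = T^c_i$. By the inductive hypothesis every previous update was a doubling, so $T_j \geq T^c_j$ for all $j$, i.e.\ $T \geq T^c$. Lemma~\ref{lem:no-left-jump} then directly rules out any beneficial decrease of $T_i$, so the $k$-th move cannot be a halving.

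If retailer $i$ has previously doubled, let the $k'$-th update be its most recent move and let $T'$ denote the state immediately before that update. Then $T'_i = T_i/2$ and the doubling performed at move $k'$ was strictly beneficial, i.e.\ $f_i(T_i; T'_{-i}) < f_i(T_i/2; T'_{-i})$. Introduce the hypothetical state $T^* = (T_i/2,\, T_{-i})$, in which retailer $i$'s interval is rolled back to $T_i/2$ while the other retailers' intervals are left at their current values. Since every update between move $k'+1$ and move $k-1$ is a doubling (inductive hypothesis) and none of them is by retailer $i$ (maximality of $k'$), we have $T^*_i = T'_i$ and $T^*_j = T_j \geq T'_j$ for each $j \neq i$. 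This gives $N[T^*_i; T^*_{-i}] \subseteq N[T'_i; T'_{-i}]$, which is precisely the hypothesis of Lemma~\ref{lem:no-back-jump}. Applying that lemma with $T^1 = T'$ and $T^2 = T^*$ transfers the beneficial doubling at $T'$ to a beneficial doubling at $T^*$, namely $f_i(T_i; T_{-i}) < f_i(T_i/2; T_{-i})$. Thus halving $T_i$ at $T$ strictly increases retailer $i$'s cost, so Algorithm~1 does not perform a halving at step $k$, closing the induction.

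The hard part will be pinpointing the correct comparison state (the most recent prior doubling of retailer $i$) and verifying the subset condition $N[T^*_i; T^*_{-i}] \subseteq N[T'_i; T'_{-i}]$ that activates Lemma~\ref{lem:no-back-jump}. Once the inductive hypothesis is in place this reduces to the monotone fact that every other retailer's interval has only grown between $T'$ and $T$, so the set of retailers with interval at most $T_i/2$ can only shrink; the rest is an immediate application of the better-response monotonicity already established.
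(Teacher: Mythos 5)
Your proof is correct and follows essentially the same route as the paper's: induction on the update sequence, using Lemma~\ref{lem:no-left-jump} for retailers that have not yet moved and Lemma~\ref{lem:no-back-jump} (applied to the state just before the retailer's most recent doubling versus the current state) for retailers that have. Your write-up merely makes explicit the comparison states $T'$ and $T^*$ that the paper's terser argument leaves implicit.
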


\begin{proof}
We prove the result by induction.
At the beginning, according to Lemma \ref{lem:no-left-jump}, no retailer will halve its replenishment interval.
Suppose this holds for the first $t$ steps, and let us consider the next step.
For any retailer who has not changed its replenishment interval yet, it will not halve its replenishment interval according to Lemma \ref{lem:no-left-jump}.
For any remaining retailer $i$, by induction, its last change must be doubling its replenishment interval, which reduces its cost.
By induction, no retailer halves its replenishment interval up to now.
Then according to Lemma \ref{lem:no-back-jump}, halving retailer $i$'s replenishment interval will increase its cost.
Thus no retailer will halve its replenishment interval in the next step.
This finishes our induction.
Notice that the above proof does not depend on the update order of retailers in Algorithm~1.
\hfill$\square$ \end{proof}

Since retailers' strategy sets are bounded, it follows that Algorithm~1 will stop in polynomial time with a Nash equilibrium.

\begin{lemma}
\label{lem:alg1-NE}
Algorithm~1 finds a Nash equilibrium in $O\left(n^2\log_2\left(1+\frac{K_0}{\min_i K_i}\right)\right)$ time.
\end{lemma}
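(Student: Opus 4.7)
The plan is to derive the lemma as a direct consequence of the monotonicity shown in Lemma~\ref{lem:no-left-jump-ALG} together with the doubling/halving characterizations in Lemmas~\ref{lem:right-jump} and~\ref{lem:left-jump}. Two statements must be established: that Algorithm~1 terminates within the stated time bound, and that its terminal policy is a Nash equilibrium.

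For termination, I would exploit Lemma~\ref{lem:no-left-jump-ALG}, which guarantees that every update in Algorithm~1 is a doubling and no retailer ever reverts. Since retailer $i$'s strategy set $\Gamma_i$ consists of the POT points in the interval $[\sqrt{K_i/(2H_i)},\,\sqrt{2(K_0+K_i)/H_i}]$, whose endpoints have ratio $2\sqrt{1+K_0/K_i}$, retailer $i$ can double at most $O(\log_2(1+K_0/K_i))$ times throughout the entire execution. Summing over retailers bounds the total number of doubling events by $O(n\log_2(1+K_0/\min_i K_i))$. Each test of whether a given retailer strictly benefits from doubling reduces, by Lemma~\ref{lem:right-jump}, to evaluating a single threshold inequality involving $\sum_{j \in N[T_i;T_{-i}]} w_j$, which takes $O(n)$ time. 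Charging $O(n)$ work per update event then yields the claimed $O(n^2\log_2(1+K_0/\min_i K_i))$ running time.

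For the Nash-equilibrium claim, I would argue that the termination condition of Algorithm~1---no retailer can strictly reduce its cost by either doubling or halving---rules out every profitable unilateral deviation. Indeed, Lemma~\ref{lem:right-jump} upgrades the absence of a profitable doubling to the absence of any profitable increase, and Lemma~\ref{lem:left-jump} likewise upgrades the absence of a profitable halving to the absence of any profitable decrease; since $\Gamma_i$ is a grid of POT points, this closes off all feasible deviations. The delicate step I anticipate is the amortized complexity bookkeeping: naively re-scanning all $n$ retailers after every one of the $O(n\log_2(1+K_0/\min_i K_i))$ update events could inflate the running time by an extra factor of $n$. I expect to handle this either by maintaining the prefix weight sums $\sum_{j \in N[T_i;T_{-i}]} w_j$ incrementally as intervals change, or by observing that a retailer just verified to have no improving doubling need only be rechecked once some retailer with a smaller current interval subsequently changes, so that the cumulative recheck cost is $O(n)$ per update rather than $O(n^2)$.
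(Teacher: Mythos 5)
Your proposal is correct and follows essentially the same route as the paper: Lemma~\ref{lem:no-left-jump-ALG} ensures all updates are doublings, the size of each $\Gamma_i$ bounds the number of doublings per retailer by $O\bigl(\log_2(1+K_0/K_i)\bigr)$, at least one doubling occurs per round so the round count is $O\bigl(n\log_2(1+K_0/\min_i K_i)\bigr)$, and Lemmas~\ref{lem:right-jump} and~\ref{lem:left-jump} upgrade the terminal no-doubling/no-halving condition to a full Nash equilibrium. Your remark on the amortized bookkeeping for evaluating the prefix weight sums is in fact more careful than the paper, which simply counts retailer checks per round as unit cost.
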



Denote $T^w$ the Nash equilibrium found by Algorithm~1 under a \WPSshort rule with weight vector~$w$. 
We show that $T^w$ is the least Nash equilibrium, i.e., the replenishment interval of every retailer under $T^w$ is no larger than that under any other Nash equilibrium.

\begin{lemma}
\label{lem:alg1-least}
For any Nash equilibrium $T^*$, we have $T^w \leq T^*$.
\end{lemma}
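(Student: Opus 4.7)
The plan is to prove $T^w \le T^*$ by induction on the number of doubling steps executed by Algorithm~1, maintaining the invariant $T^{(k)} \le T^*$ along its trajectory $T^{(0)} = T^c, T^{(1)}, \ldots, T^{(K)} = T^w$.

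For the inductive step, I assume the invariant at step $k$ and suppose the $(k+1)$-st step doubles retailer~$i$'s interval from $T^{(k)}_i$ to $2T^{(k)}_i$; only this coordinate changes, so it suffices to verify $2T^{(k)}_i \le T^*_i$. If $T^{(k)}_i < T^*_i$, the desired inequality is immediate since both are power-of-two multiples of the common base~$B$. The delicate case is $T^{(k)}_i = T^*_i$, which I rule out by contradiction: because retailer~$i$ strictly benefits from doubling at $T^{(k)}$ (that is why Algorithm~1 performs this update), I transfer this strict benefit to $T^*$ via Lemma~\ref{lem:no-back-jump}. The required set inclusion $N[T^*_i; T^*_{-i}] \subseteq N[T^{(k)}_i; T^{(k)}_{-i}]$ follows directly from the inductive hypothesis combined with $T^{(k)}_i = T^*_i$, since any $j$ with $T^*_j \le T^*_i$ satisfies $T^{(k)}_j \le T^*_j \le T^*_i = T^{(k)}_i$. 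Lemma~\ref{lem:no-back-jump} then forces retailer~$i$ to also strictly benefit from doubling at $T^*$, contradicting the Nash equilibrium property.

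For the base case $T^c \le T^*$, I also argue by contradiction. If $T^*_i < T^c_i$ for some retailer~$i$, the POT structure forces $T^*_i \le T^c_i/2$, and using the closed form in~(\ref{eq:opt}) together with the doubling threshold of Lemma~\ref{lem:right-jump}, one can show that $T^*_i$ falls strictly below $\sqrt{(K_i + \alpha K_0)/(2H_i)}$ with $\alpha = w_i/\sum_{j \in N[T^*_i;T^*_{-i}]} w_j$. For $i \in V$ with $w_i > 0$ this is immediate from $T^c_i \le \sqrt{2K_i/H_i}$, yielding $T^*_i \le \sqrt{K_i/(2H_i)} < \sqrt{(K_i + \alpha K_0)/(2H_i)}$. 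For $i \in U$, I would pick the violator minimizing $T^*_i$ in order to control $N[T^*_i;T^*_{-i}]$ and combine $T^*_i \le \sqrt{s/2}$ with the minimality $s = \min_S (K_0+K(S))/H(S)$ to force $\alpha K_0 > sH_i - K_i$.

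The bulk of the work lies in the inductive step, which exploits the better-response monotonicity of Lemma~\ref{lem:no-back-jump} in a very clean way; I expect the main subtlety to be in the base case, because when some $T^c_i$ sits at the boundary of its POT range the relevant strict inequality can degenerate, so the base-case argument must carefully exploit either the strictness in the POT rounding or the weight ratio~$\alpha$ to conclude that retailer~$i$ still has a strictly profitable doubling deviation at $T^*$.
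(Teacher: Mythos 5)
Your inductive step is correct, and it reorganizes the argument in a genuinely different way from the paper. The paper runs a single extremal argument: it takes the set $P=\{i: T^w_i>T^*_i\}$ of violators, selects one retailer $i_0$ through a double-extremal choice (smallest $T^*_i$, then largest $T^w_i$, then first to arrive at that value), and proves the containment $N[T^*_{i_0};T^*_{-i_0}]\subseteq N_0$ directly, where $N_0$ is the co-orderer set at the moment of $i_0$'s last doubling; Lemma~\ref{lem:right-jump} then transfers the strict doubling benefit to $T^*$. Your version maintains the invariant $T^{(k)}\le T^*$ along the whole trajectory and invokes Lemma~\ref{lem:no-back-jump} locally, which makes the containment trivial. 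The core mechanism (transfer a strictly profitable doubling from the algorithm's trajectory to $T^*$ via a containment of co-orderer sets) is the same; the paper's bookkeeping buys it that containment without any invariant, while your invariant forces you to supply a base case $T^c\le T^*$ that the paper never states or proves.

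That base case is where the genuine gap lies. Your per-retailer argument for $i\in U$ does not close: you need $\alpha K_0>sH_i-K_i$ with $\alpha=w_i/\sum_{j\in N[T^*_i;T^*_{-i}]}w_j$, but the minimality of $s$ in (\ref{eq:s-def}) only controls a \emph{sum} over the co-orderer set, not an individual term; if $w_i$ is tiny relative to the other weights in $N[T^*_i;T^*_{-i}]$, then $\alpha K_0$ is arbitrarily small while $sH_i-K_i$ stays bounded away from zero, and the claimed strictly profitable doubling evaporates. The repair is to aggregate rather than argue retailer by retailer: let $S_0$ be the set of retailers at the minimal interval of $T^*$ (all of which are violators when a violator exists in $U$), sum the negation of the doubling condition of Lemma~\ref{lem:right-jump} over $j\in S_0$ so that the weight ratios sum to $1$ and disappear, and deduce $(T^*_{\min})^2\ge\frac{K_0+K(S_0)}{2H(S_0)}\ge\frac{s}{2}$, which contradicts $T^*_{\min}\le T^c_{\min}/2\le\sqrt{s/2}$ except when every inequality is tight. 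That still leaves the boundary ties you yourself flag (e.g.\ $i\in V$ with $w_i=0$ and $T^c_i=\sqrt{2K_i/H_i}$, where doubling is cost-neutral rather than strictly profitable, so no contradiction with the weak-inequality definition of equilibrium is available); note that the paper's proof also implicitly assumes its chosen violator $i_0$ performed at least one doubling, i.e.\ $T^w_{i_0}>T^c_{i_0}$, so the configuration $T^*_{i_0}<T^c_{i_0}=T^w_{i_0}$ that your base case must handle is not explicitly treated there either. In short: your inductive step is sound and clean, but the proof is incomplete until the base case is established via the aggregation argument and the tie cases are resolved.
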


\begin{proof}
Denote $P=\{i \in N \mid T^w_i > T^*_i\}$ and suppose~$P \neq \emptyset$.
Let $P_1=\{i \in P \mid T^*_i \leq T^*_j, \forall j \in P\}$ be the set of retailers from $P$ with the smallest replenishment interval under $T^*$.
Further, let 
$P_2=\{i \in P_1 \mid T^w_i \geq T^w_j, \forall j \in P_1\}$
be the set of retailers from $P_1$ with the largest replenishment interval under $T^w$.
Let $i_0 \in P_2$ be the retailer who first arrives at $T^w_{i_0}$ during Algorithm~1 (with an arbitrary update order) among all retailers in $P_2$.
Let us consider the step when retailer $i_0$ doubles its replenishment interval from $T^w_{i_0}/2$ to $T^w_{i_0}$.
According to Algorithm~1 and Lemma \ref{lem:right-jump}, we have
\begin{equation}
\label{eq:least-NE}
\frac{T^w_{i_0}}{2} < \sqrt{\frac{1}{2}\frac{K_{i_0}+\frac{w_{i_0}}{\sum_{j \in N_0}w_j}K_0}{H_{i_0}}},
\end{equation}
where $N_0$ is the set of retailers with replenishment interval at most $T^w_{i_0}/2$ just before $i_0$'s last doubling.

By the choice of $i_0$ we have $P_2 \subseteq N_0$.
By the definition of $P_1$ and $P_2$, we have $T^w_i\leq T^w_{i_0}/2$ for any $i \in P_1 \setminus P_2$.
Since no retailer decreases its replenishment interval during Algorithm~1, we have $P_1 \setminus P_2 \subseteq N_0$.
In addition, denote $Q=\{i \in N \setminus P \mid T^*_i \leq T^*_{i_0}\}$.
For any $i \in Q$, since $i \notin P$ and $i_0 \in P$, we have 
\[
T^w_i \overset{i \notin P}{\leq} T^*_i \overset{i \in Q}{\leq} T^*_{i_0} \overset{i_0 \in P}{<} T^w_{i_0} \Rightarrow T^w_i \leq T^w_{i_0}/2.
\]
Again, since no retailer decreases its replenishment interval during Algorithm~1, it follows that $Q \subseteq N_0$.
Combining $P_2\subseteq N_0$, $P_1 \setminus P_2\subseteq N_0$, and $Q \subseteq N_0$, we get $P_1 \cup Q \subseteq N_0$.

By the definition of $P_1$ and $i_0 \in P_1$, we have $P_1=\{i \in P \mid T^*_i \leq T^*_{i_0}\}$.
Now for retailer $i_0$ under $T^*$ we have $N[T^*_{i_0};T^*_{-i_0}]=P_1 \cup Q \subseteq N_0$.
Then
\begin{align*}
    T^*_{i_0}  \overset{i_0 \in P}{\leq} \frac{T^w_{i_0}}{2} \overset{(\ref{eq:least-NE})}{<} \sqrt{\frac{1}{2}\frac{K_{i_0}+\frac{w_{i_0}}{\sum_{j \in N_0}w_j}K_0}{H_{i_0}}} 
 \leq
\sqrt{\frac{1}{2}\frac{K_{i_0}+\frac{w_{i_0}}{\sum_{j \in N[T^*_{i_0};T^*_{-i_0}]}w_j}K_0}{H_{i_0}}}.
\end{align*}

According to Lemma \ref{lem:right-jump}, retailer $i_0$ can reduce its cost by doubling its replenishment interval, which contradicts with that $T^*$ is a Nash equilibrium.
Therefore, $P=\emptyset$ and hence $T^w \leq T^*$.
\hfill$\square$ \end{proof}

Since the proof of Lemma~\ref{lem:alg1-least} does not depend on the update order of retailers in Algorithm~1, we can conclude that Algorithm~1 always outputs the unique least Nash equilibrium $T^w$, regardless of the update order.
Finally, we show that the least Nash equilibrium $T^w$ is payoff dominant.
 
\begin{lemma}
\label{lem:alg1-dominant}
$T^w$ is a payoff dominant Nash equilibrium.
\end{lemma}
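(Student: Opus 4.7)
The plan is to combine the ordering $T^w \le T^*$ supplied by Lemma~\ref{lem:alg1-least} with the Nash-equilibrium property of $T^*$. Fix any Nash equilibrium $T^*$ and any retailer $i \in N$; the goal is to prove $f_i(T^w)\le f_i(T^*)$. Since $T^*$ is a Nash equilibrium, retailer $i$ cannot profit by unilaterally deviating to any strategy in $\Gamma_i$, and in particular the deviation to $T^w_i$ yields
\[
f_i(T^*) \;\le\; f_i(T^w_i;T^*_{-i}) \;=\; H_iT^w_i+\frac{K_i}{T^w_i}+x_i(T^w_i;T^*_{-i}).
\]
Because $f_i(T^w)=H_iT^w_i+K_i/T^w_i+x_i(T^w_i;T^w_{-i})$ shares the same holding and minor-setup terms, proving the theorem reduces to the single inequality
\[
x_i(T^w_i;T^w_{-i}) \;\le\; x_i(T^w_i;T^*_{-i}),
\]
which would then chain to give $f_i(T^w)\le f_i(T^w_i;T^*_{-i})\le f_i(T^*)$.

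The key step I would carry out is a monotonicity property of the major-setup share: with retailer $i$'s POT interval held fixed at some value $T_i$, the map $T_{-i}\mapsto x_i(T_i;T_{-i})$ is non-decreasing under any componentwise POT increase of $T_{-i}$. Rather than manipulating the sorted-order formula from Section~\ref{sec:WPS-intro}, whose index permutation can shift as $T_{-i}$ changes, I would argue order-by-order. Holding $T_i$ fixed keeps retailer $i$'s order epochs at $\{kT_i:k\ge1\}$. At each such epoch $t=kT_i$, the set of participants is $S_k=\{j:T_j\mid t\}$. When each $T_j$ is replaced by a POT value $T'_j\ge T_j$, we have $T_j\mid T'_j$, so $T'_j\mid t$ implies $T_j\mid t$; the participant set can therefore only shrink. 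The denominator $\sum_{j\in S_k}w_j$ weakly decreases while $w_i$ is unchanged, so retailer $i$'s share $w_iK_0/\sum_{j\in S_k}w_j$ at that order weakly increases. Averaging over all orders of $i$ per unit time yields the claimed monotonicity, and specializing to $T_{-i}=T^w_{-i}\le T^*_{-i}$ gives the desired inequality.

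The main subtlety lies in the zero-weight case of the \WPSshort definition: when all participants of a joint order have weight $0$, the cost is split equally instead of proportionally to weights. I would check this case directly and observe that monotonicity still holds: removing a zero-weight retailer from a mixed-weight participant set leaves the shares of all positive-weight retailers unchanged, while shrinking an all-zero participant set strictly raises the equal per-capita charge for every remaining member. In each sub-case retailer $i$'s per-order share is non-decreasing as $T_{-i}$ grows, so the aggregated monotonicity, and hence the whole payoff-dominance argument, goes through without modification.
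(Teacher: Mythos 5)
There is a genuine gap: your chain of inequalities runs in the wrong direction. The Nash property of $T^*$ gives, as you correctly write at first, $f_i(T^*)\le f_i(T^w_i;T^*_{-i})$ — deviating \emph{away} from a best response can only increase $i$'s cost. Your monotonicity claim (which is correct) gives $f_i(T^w)=f_i(T^w_i;T^w_{-i})\le f_i(T^w_i;T^*_{-i})$. So both $f_i(T^w)$ and $f_i(T^*)$ are bounded \emph{above} by the same quantity $f_i(T^w_i;T^*_{-i})$, which yields no comparison between them; the final step "$f_i(T^w_i;T^*_{-i})\le f_i(T^*)$" in your concluding chain contradicts the inequality you yourself derived from the equilibrium property of $T^*$.

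The fix is to invoke the equilibrium property of $T^w$ rather than of $T^*$, and to hold $i$ at $T^*_i$ rather than at $T^w_i$ in the intermediate profile. This is what the paper does: since $T^w$ is a Nash equilibrium, $f_i(T^w_i;T^w_{-i})\le f_i(T^*_i;T^w_{-i})$; then, because $T^w\le T^*$ (Lemma~\ref{lem:alg1-least}), raising the others' intervals from $T^w_{-i}$ to $T^*_{-i}$ while $i$ stays at $T^*_i$ can only shrink the participant set of every joint order containing $i$ and hence can only increase $i$'s share, so $f_i(T^*_i;T^w_{-i})\le f_i(T^*_i;T^*_{-i})=f_i(T^*)$. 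Your order-by-order divisibility argument for the monotonicity of $x_i(T_i;\cdot)$ under componentwise POT increases of $T_{-i}$, including the zero-weight edge cases, is sound and is exactly the ingredient needed for that second inequality — you have simply attached it to the wrong endpoint and the wrong equilibrium.
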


\begin{proof}
Let $T^*$ be an arbitrary Nash equilibrium with $T^* \neq T^w$.
For any retailer $i \in N$, we show that its cost under $T^*$ is no smaller than that under $T^w$.
Since $T^w$ is a Nash equilibrium, we have 
\[
f_i(T^w_i,T^w_{-i}) \leq f_i(T^*_i,T^w_{-i}).
\]
Next we compare $f_i(T^*_i,T^w_{-i})$ and $f_i(T^*_i,T^*_{-i})$.
According to Lemma \ref{lem:alg1-least}, we have $T^* \geq T^w$.
This implies that, for each joint order including retailer $i$, the set of other retailers included in this order under the policy $(T^*_i,T^*_{-i})$ is a subset of that under the policy $(T^*_i,T^w_{-i})$.
Thus, for retailer $i$, the average sharing for the major setup cost under $(T^*_i,T^w_{-i})$ is no greater than that under $(T^*_i,T^*_{-i})$.
Since the holding cost and the minor setup cost for retailer $i$ remain the same, we can conclude that 
\[
f_i(T^w_i,T^w_{-i}) \leq f_i(T^*_i,T^w_{-i}) \leq f_i(T^*_i,T^*_{-i}).
\]
Therefore, $T^w$ is a payoff dominant Nash equilibrium.
\hfill$\square$ \end{proof}

By combining Lemma \ref{lem:alg1-NE} and Lemma \ref{lem:alg1-dominant}, we obtain the result stated in
Theorem~\ref{thm:general-NE-exist}.

\subsection{PoS as a Measure of Efficiency}
\label{sec:WPS-efficiency}

Having established the existence of payoff dominant Nash equilibrium, the next step is to evaluate and compare different \WPSshort rules.
We start by showing that all reasonable cost allocation rules have essentially the same large PoA values.
\citet{he_noncooperative_2017} show that the PoA of the equal-division rule can be bounded from above by~$\frac{3}{2\sqrt{2}}\sqrt{n}$.
This result can be easily generalized for an arbitrary allocation rule. 


\begin{proposition}
\label{prop:POA-bound-upper}
The price of anarchy of any cost allocation rule is at most $\frac{3}{2\sqrt{2}}\sqrt{n}$.
\end{proposition}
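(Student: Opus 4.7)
The plan is to bound $C(T)$ for an arbitrary Nash equilibrium $T$ from above and the centralized optimum $C(T^c)$ from below, both in terms of the single quantity $Q:=K_0 H(N)+\sum_{i\in N}H_iK_i$, so that the claimed ratio $\tfrac{3}{2\sqrt{2}}\sqrt{n}$ drops out upon division. Since the statement must hold for \emph{any} cost-allocation rule, I cannot exploit any sharing-specific structure; the only rule-independent fact I will invoke is that a retailer's share satisfies $x_i(T)\le K_0/T_i$, because $i$ can be charged at most $K_0$ per joint order it is in, and it participates in joint orders at rate $1/T_i$.

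For the upper bound, at any Nash equilibrium $T$ and for each retailer $i$, I will use the deviation to the POT value $T_i'\in\Gamma_i$ nearest $\sqrt{(K_0+K_i)/H_i}$ (which lies in $\Gamma_i$ since $K_0\ge 0$). Combining the universal bound $x_i(T_i',T_{-i})\le K_0/T_i'$ with the classical $\tfrac{3}{2\sqrt{2}}$ POT approximation to the EOQ yields $f_i(T)\le f_i(T_i',T_{-i})\le H_iT_i'+(K_0+K_i)/T_i'\le \tfrac{3}{\sqrt{2}}\sqrt{H_i(K_0+K_i)}$. Summing over $i$ and applying Cauchy--Schwarz to $\sum_i\sqrt{H_i(K_0+K_i)}\le\sqrt{n\sum_i H_i(K_0+K_i)}$ gives $C(T)=\sum_i f_i(T)\le \tfrac{3}{\sqrt{2}}\sqrt{nQ}$.

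For the lower bound, I will write $C(T^c)=2(a+b)$ with $a:=\sqrt{H(U)(K_0+K(U))}$ and $b:=\sum_{i\in V}\sqrt{H_iK_i}$, and argue $C(T^c)\ge 2\sqrt{Q}$. Expanding $(a+b)^2$: the estimates $H(U)K(U)\ge\sum_{i\in U}H_iK_i$ and $(\sum_{i\in V}\sqrt{H_iK_i})^2\ge\sum_{i\in V}H_iK_i$ give $a^2+b^2\ge K_0 H(U)+\sum_{i\in N}H_iK_i=Q-K_0 H(V)$, falling short of $Q$ by exactly $K_0 H(V)$. I plan to close this gap through the cross term $2ab$: the defining condition $K_i>sH_i$ for $i\in V$ forces $b\ge\sqrt{s}\,H(V)$, while $s=(K_0+K(U))/H(U)\ge K_0/H(U)$ together with $a\ge\sqrt{K_0 H(U)}$ yields $2ab\ge 2K_0 H(V)$. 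Dividing the upper and lower bounds then cancels $\sqrt{Q}$ and delivers $C(T)/C(T^c)\le \tfrac{3\sqrt{n}}{2\sqrt{2}}$.

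The main obstacle is this lower bound on $C(T^c)$: when $U\subsetneq N$, neither the per-retailer EOQ bound $2\sqrt{H_iK_i}$ nor the single-coalition bound $2\sqrt{K_0 H(U)}$ is by itself strong enough to absorb the $K_0 H(V)$ shortfall, so one genuinely has to exploit the defining inequality $K_i>sH_i$ on $V$ to couple the $U$- and $V$-contributions of the optimal cost. The remaining steps (the deviation bound, the POT approximation, and Cauchy--Schwarz) are standard.
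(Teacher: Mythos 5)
Your proof is correct, but the route differs from the paper's in a substantive way, chiefly in how the optimal cost is bounded from below. The paper never touches the structure of $T^c$: it simply spreads the major setup term as $\frac{K_0}{T^c_{\min}}\ge\sum_i\frac{K_0/n}{T^c_i}$ (valid since $T^c_i\ge T^c_{\min}$), so that both $C(T^*)\le\sum_i\min_{T_i\in\Gamma_i}\bigl(H_iT_i+\frac{K_i+K_0}{T_i}\bigr)$ and $C(T^c)\ge\sum_i\min_{T_i\in\Gamma_i}\bigl(H_iT_i+\frac{K_i+K_0/n}{T_i}\bigr)$ become sums of single-retailer EOQ-type terms, and the ratio is bounded termwise by $\frac{3}{2\sqrt{2}}\sqrt{n}$ using $2\sqrt{ab}\le\min_{y\,\mathrm{POT}}(ay+b/y)\le\frac{3}{\sqrt{2}}\sqrt{ab}$ and $\sqrt{K_i+K_0}\le\sqrt{n}\sqrt{K_i+K_0/n}$. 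You instead aggregate the Nash upper bound via Cauchy--Schwarz into $\frac{3}{\sqrt{2}}\sqrt{nQ}$ with $Q=K_0H(N)+\sum_iH_iK_i$, and then prove $C(T^c)\ge 2\sqrt{Q}$ by invoking the Jackson--Maxwell--Muckstadt characterization of $T^c$, the value of $s$, and the defining inequality $K_i>sH_i$ on $V$ to recover the $K_0H(V)$ shortfall through the cross term $2ab$. Your deviation/upper-bound step and the rule-independent fact $x_i(T)\le K_0/T_i$ are exactly the paper's. The paper's lower bound is shorter and more elementary (it needs nothing about $T^c$ beyond $T^c_i\ge T^c_{\min}$), whereas yours demonstrates that the structural description of the optimum alone already forces $C(T^c)\ge 2\sqrt{Q}$, a clean standalone inequality. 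Two cosmetic points: you should write $C(T^c)\ge 2(a+b)$ rather than equality (AM--GM gives only the inequality once POT rounding is imposed, and the inequality is all you need), and your argument implicitly uses $U\neq\emptyset$, which does hold since $j=1$ always satisfies the defining condition of $i^*$.
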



We complement this result by providing a matching lower bound (up to a constant) for cost allocation rules that satisfy symmetry.
A cost allocation rule satisfies \emph{symmetry} if, for any joint replenishment policy $T$, whenever two retailers $i,j \in N$ have identical parameters, i.e., $K_i=K_j$, $h_i=h_j$, $d_i=d_j$, and $T_i=T_j$, their allocated costs are equal, i.e., $f_i(T)=f_j(T)$.
Notice that all \WPSshort rules satisfy symmetry.

\begin{proposition}
\label{prop:POA-bound}
The price of anarchy of any cost allocation rule satisfying symmetry (including all \WPSshort rules) is at least $\frac{\sqrt{2}}{3}\sqrt{n}$.
\end{proposition}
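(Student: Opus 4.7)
The plan is to find, for every symmetric rule, a single instance that admits a highly inefficient Nash equilibrium. The key idea is to use $n$ \emph{identical} retailers: on a permutation-invariant profile, the symmetry axiom pins down the cost allocation uniquely, so we can analyze the equilibrium cost even though the rule is otherwise unconstrained on asymmetric profiles.

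Concretely, I would take $n$ identical retailers with $H_i=1$, $K_i=0$, and $K_0=1$, and set the base period $B=1$. Using the centralized-optimum formula~\eqref{eq:opt}, all retailers share a common POT interval approximating $1/\sqrt{n}$, and the standard POT-rounding bound yields $C(T^c)\leq \frac{3}{\sqrt{2}}\sqrt{n}=\frac{3\sqrt{2n}}{2}$. I would then propose the symmetric profile $T^{\mathrm{NE}}_i=1$ for all $i$ as the bad equilibrium. By the symmetry of the rule, each retailer is charged exactly $K_0/n=1/n$ per unit time, giving per-retailer cost $1+1/n$ and total system cost $n+1$.

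The main obstacle is verifying that $T^{\mathrm{NE}}$ is actually a Nash equilibrium for \emph{every} symmetric rule, since we have no handle on the rule's behavior at asymmetric profiles. Upward deviations are ruled out because $\Gamma_i=[0,\sqrt{2}]$ contains no POT value in $(1,\sqrt{2}]$ when $B=1$. For a downward deviation to $T'_i=2^{-k}$ with $k\geq 1$, retailer $i$ orders at times $\{j\cdot 2^{-k}\}_{j\geq 1}$, of which $2^k-1$ out of every $2^k$ fall at non-integer times. At these ``alone'' orders the rule is \emph{forced} to assign the entire $K_0$ to retailer $i$, since the remaining $n-1$ retailers are not present and individual shares are nonnegative and must sum to $K_0/T_{\min}$. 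Consequently, the long-run setup cost of the deviator is at least $2^k-1$, and adding the holding cost $1/2^k$ gives a deviation cost of at least $\tfrac{1}{2^k}+2^k-1\geq\tfrac{3}{2}$, which exceeds $1+1/n$ for $n\geq 3$. Thus no retailer can profitably deviate.

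Finally, combining the NE cost with the OPT bound gives $\mathrm{PoA}\geq\frac{n+1}{3\sqrt{2n}/2}=\frac{2(n+1)}{3\sqrt{2n}}\geq\frac{\sqrt{2n}}{3}$ for $n\geq 3$; the cases $n\leq 2$ are trivial since $\sqrt{2n}/3<1\leq \mathrm{PoA}$. The conceptual lever making the argument go through is that on alone orders the rule has no allocation freedom at all, which lets us obtain a deviation lower bound independent of the rule.
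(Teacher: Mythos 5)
Your proof is correct and takes essentially the same route as the paper's: the same instance of $n$ identical retailers with $K_i=0$, $H_i=1$, $K_0=1$, the same candidate equilibrium $T_i=1$ whose cost is pinned down to $1+1/n$ per retailer by symmetry, and the same comparison against $C(T^c)\le \frac{3}{\sqrt{2}}\sqrt{n}$. The only (minor, welcome) differences are that you exclude upward deviations via the strategy set $\Gamma_i$ rather than by a cost bound, and you treat downward deviations to all $2^{-k}$ explicitly where the paper spells out only the halving case.
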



Combining Proposition \ref{prop:POA-bound-upper} and Proposition \ref{prop:POA-bound}, we can conclude that all rules satisfying symmetry have PoAs in $[\frac{\sqrt{2}}{3}\sqrt{n},\frac{3}{2\sqrt{2}}\sqrt{n}]$.
In other words, all reasonable rules will have large PoA values that differ by at most a constant.
Therefore, PoA is not a good metric to compare the efficiency of different cost allocation rules.
In contrast, we will show that PoS can differentiate the efficiency of various \WPSshort rules.
Moreover, since for \WPSshort rules a payoff dominant Nash equilibrium always exists and can be efficiently found in polynomial time (Theorem \ref{thm:general-NE-exist}), it is particularly appealing to consider PoS, which precisely quantifies the efficiency loss of the payoff dominant Nash equilibrium. 
Indeed, the payoff dominance, introduced by \citet{harsanyi1988general}, has been widely employed for selecting among Nash equilibria in non-cooperative games~\citep{colman1997payoff}.
Therefore, we will use PoS to compare the performance of cost allocation rules. 

\section{Developing Efficient and Practical \WPSshort Rules}
\label{sec:efficient-rules}

In this section, we focus on identifying practical \WPSshort rules that can guarantee strong coordination efficiency, as measured by PoS.
We begin by presenting a theoretical benchmark rule that achieves the optimal PoS of 1 but suffers from significant practical limitations (Section~\ref{sec:WPS-PoS-1}).
We then propose a simple and practical rule that overcomes these limitations while achieving near-optimal efficiency (Section~\ref{sec:WPSh}).

\subsection{\WPSo: Theoretically Optimal Benchmark}
\label{sec:WPS-PoS-1}

We first present a \WPSshort rule, called \WPSo, that achieves the theoretically optimal PoS of 1.
This rule demonstrates that, in principle, full coordination without efficiency loss is possible in a decentralized setting.
The weight vector $w^*$ of \WPSo is defined as $w^*_i = \frac{H_is-K_i}{K_0}$ if $i \in U$ and $w^*_i = 0$ if $i \in V$,
where the sets $U$ and $V$ correspond to the retailer partition under the optimal centralized policy~$T^c$.
Recall that under~$T^c$, retailers in $U$ order together with the minimum replenishment interval $T^c_{\min}$, while those in $V$ each follow their individual EOQ policy, ignoring the major setup cost.
The weight vector of \WPSo is carefully constructed to ensure that the centralized solution $T^c$ forms a Nash equilibrium.
Indeed, for each retailer in~$V$, since their weight is zero, they do not share the major setup cost and thus have no incentive to deviate from their EOQ interval.
For retailers in~$U$, the weights ensure that the total cost (including both setup costs and holding costs) is shared proportionally to their $H_i$ values, i.e., 
$
    f_i(T^c)=
\frac{H_i}{\sum_{j \in U}H_j}\left(\sum_{j \in U}H_jT^c_{\min}+\frac{K_0+\sum_{j \in U}K_j}{T^c_{\min}}\right)
$.
Then, for each retailer from $U$, the ratio between its holding cost and setup cost matches the ratio for the whole group $U$, which is optimized by $T^c_{\min}$.
Consequently, we can show that doubling its replenishment interval is not beneficial.
Therefore, the optimal centralized policy $T^c$ is a Nash equilibrium under \WPSo, implying that the PoS is 1.

\begin{proposition}
\label{prop:k-know-POS=1}
The price of stability of  \WPSo is~1.
\end{proposition}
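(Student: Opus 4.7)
The plan is to show that the optimal centralized policy $T^c$ is itself a pure-strategy Nash equilibrium under \WPSo; since $T^c$ attains the social optimum and $\mathrm{PoS} \geq 1$ trivially, this immediately yields $\mathrm{PoS}(\WPSo)=1$. The first key identity I would establish is $\sum_{j \in U} w^*_j = 1$, which follows directly from the definition of $s$ in \eqref{eq:s-def}: indeed $s \sum_{j \in U} H_j = K_0 + \sum_{j \in U} K_j$. In particular this gives $w^*_i \in [0,1]$ and the bound $sH_i - K_i \leq K_0$ for every $i \in U$, both of which will be needed later.

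For retailers in $V$, since $T^c_i / T^c_{\min}$ is a non-negative integer power of two, every order of $i$ under $T^c$ coincides with orders of all $U$-retailers. Hence the set of co-orderers always contains $U$, whose weights sum to $1$, and $i$'s share of $K_0$ per joint order equals $w^*_i \cdot K_0 / 1 = 0$. The effective cost of retailer $i$ therefore reduces to the pure EOQ expression $H_i T_i + K_i/T_i$, and the POT-rounded choice $T^c_i \in (\sqrt{K_i/(2H_i)}, \sqrt{2K_i/H_i}]$ passes both Lemma~\ref{lem:right-jump} and Lemma~\ref{lem:left-jump} applied with $w_i=0$, ruling out every unilateral deviation.

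The $U$-retailer case is the crux. For $i \in U$ we have $T^c_i = T^c_{\min}$ and $U \subseteq N[T^c_{\min};T^c_{-i}]$, so the share per joint order equals $w^*_i K_0 = sH_i - K_i$. To block doubling via Lemma~\ref{lem:right-jump}, the threshold simplifies to $\sqrt{(K_i + sH_i - K_i)/(2H_i)} = \sqrt{s/2}$, and the lower POT bound $T^c_{\min} > \sqrt{s/2}$ rules this out. To block halving via Lemma~\ref{lem:left-jump}, I observe that $N[T^c_{\min}/2; T^c_{-i}] = \{i\}$ (no retailer has interval strictly below $T^c_{\min}$), so the threshold becomes $\sqrt{2(K_i+K_0)/H_i}$; combining the upper POT bound $T^c_{\min} \leq \sqrt{2s}$ with $sH_i \leq K_i + K_0$ (i.e.\ $w^*_i \leq 1$) gives exactly the needed inequality. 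Combining both cases shows $T^c$ is a Nash equilibrium, and the proposition follows.

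The main obstacle is ensuring that both deviations are simultaneously blocked for every $i \in U$: the doubling check depends on the \emph{lower} end of the POT interval ($T^c_{\min} > \sqrt{s/2}$), while the halving check depends on the \emph{upper} end ($T^c_{\min} \leq \sqrt{2s}$) together with the identity $\sum_{j \in U} w^*_j = 1$. The definition $w^*_i = (sH_i - K_i)/K_0$ is calibrated precisely so that both inequalities hold at once for every retailer in $U$; intuitively, the weight of each $U$-retailer is chosen so that its setup-to-holding ratio matches the aggregate ratio of the group $U$, which is exactly what $T^c_{\min}$ balances.
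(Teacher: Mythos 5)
Your proof is correct and follows the same top-level strategy as the paper: show that the optimal centralized policy $T^c$ is itself a Nash equilibrium under \WPSo, treating $U$ and $V$ separately. The verification for $U$-retailers, however, takes a genuinely more direct route. The paper first shows (via the choice of $w^*_i$) that each $i \in U$ pays exactly the fraction $H_i/\sum_{j\in U}H_j$ of the group's total cost, and then argues that doubling cannot help because $T^c_{\min}$ already minimizes the group cost over POT points. You instead exploit the identity $\sum_{j\in U}w^*_j=1$ to collapse the doubling threshold of Lemma~\ref{lem:right-jump} to the closed form $\sqrt{s/2}$, and the halving threshold of Lemma~\ref{lem:left-jump} to $\sqrt{2(K_i+K_0)/H_i}$, so that both checks reduce to the POT rounding interval $T^c_{\min}\in(\sqrt{s/2},\sqrt{2s}]$ together with $w^*_i\le 1$. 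This is more elementary (pure threshold arithmetic, no appeal to group optimality of $T^c_{\min}$) and makes explicit why the weights are ``calibrated''; the paper's version has the advantage of exhibiting the proportional-to-$H_i$ cost-sharing structure, which it reuses when relating \WPSo to the core allocation of the cooperative model. Two small points worth making explicit in your writeup: when you replace $\sum_{j\in N[T^c_{\min};T^c_{-i}]}w^*_j$ by $\sum_{j\in U}w^*_j=1$, you should note that any additional retailers in that set lie in $V$ and carry zero weight (the paper states this); and for the halving check of a retailer in $V$ whose interval equals $T^c_{\min}$, the set $N[T^c_i/2;T^c_{-i}]$ is a singleton of weight zero, so the rule's equal-division convention applies --- the threshold only increases, so your conclusion stands, but the ``$w_i=0$'' shortcut does not literally apply there.
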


We remark that all retailers' costs under \WPSo are the same as the core allocation  described by \citet{anily_cost_2007}, who study a cooperative game-theoretical joint replenishment model. 
While the core concept in cooperative games and the Nash equilibrium concept in non-cooperative games are fundamentally different, Proposition~\ref{prop:k-know-POS=1} shows that, in our non-cooperative game model, where each retailer can freely decides its own replenishment interval, \WPSo ensures that the optimal centralized policy remains stable.

Although \WPSo achieves the optimal PoS of 1, it suffers from two key drawbacks that hinder its practical applicability in decentralized settings.
First, all retailers in $V$ are assigned zero weight and hence do not contribute to the major setup cost.
These retailers effectively act as free-riders, which may create fairness concerns among contributing retailers in $U$ and ultimately threaten the stability of the joint system.\footnote{This free-rider issue was also noted as a major drawback in the cooperative setting by~\citet{anily_cost_2007}.}
Second, the weight assigned to each retailer in $U$ depends not only on its own parameters $H_i$ and $K_i$, but also on the full set of parameters from all other retailers of $U$.
Specifically, recall that $s=\frac{K_0+\sum_{j \in U}K_j}{\sum_{j \in U}H_j} = \min_{\emptyset \subset S \subseteq N}\frac{K_0+K(S)}{H(S)}$, which requires global knowledge of the entire system.
This level of informational dependency is impractical in decentralized settings.

\subsection{\WPSh: Simple and Near-Optimal Solution}
\label{sec:WPSh}

To overcome the practical limitations of \WPSo, we propose a simple and practical alternative, denoted \WPSh.
This rule belongs to the class of \WPSshort rules and sets the weight of each retailer $i$ as $H_i=\frac{1}{2}h_id_i$.
Under \WPSh, whenever a subset $S \subseteq N$ of retailers places a joint order, the corresponding major setup cost is shared by retailers in $S$ proportionally to their respective $H_i$ values. 
Note that \WPSh avoids the issue of free-riders, as every participating retailer contributes to the joint cost.
Furthermore, the weight assigned to each retailer depends only on its own parameter $H_i$, without requiring information about other retailers.
These properties make \WPSh both simple and well-suited for decentralized scenarios.

Since \WPSh is one of the \WPSshort rules, the existence of a payoff dominant Nash equilibrium is already guaranteed by Theorem \ref{thm:general-NE-exist}.
In the remainder of this subsection, 
we provide a linear-time algorithm to find the payoff dominant Nash equilibrium under \WPSh and show that its PoS is upper bounded by~1.25.


\subsubsection{Linear-Time Algorithm}

The idea is that if we let retailers make decisions in the reverse order of their $\frac{K_i}{H_i}$ in Algorithm 1, then we can quickly reach a Nash equilibrium after a subset of retailers doubling their replenishment intervals.
Our algorithm (called Algorithm~2) works as follows.
Assume that the retailers are ordered such that $\frac{K_1}{H_1} \leq \frac{K_2}{H_2} \leq \dots \leq \frac{K_n}{H_n}$.
Initially, all retailers follow the optimal centralized policy $T^c$.
Then, according to the reverse order of the $K_i/H_i$ ratios (i.e., from retailer $n$ to retailer $1$), the algorithm examines each retailer $i$ to determine whether doubling its replenishment interval reduces its cost, according to Lemma \ref{lem:right-jump}.
If so, the replenishment interval of retailer~$i$ is doubled.
The algorithm ends after completing one full pass over all retailers.
Let $T^h$ denote the joint policy returned by Algorithm~2. 
For each $i \in \{1, 2, \dots, n\}$, let $T^i$ denote the joint policy after retailer $i$'s replenishment interval has been updated (if applicable) during the execution of the algorithm. 
Then the sequence of generated joint policies throughout the execution of the algorithm is thus: $T^c \rightarrow T^n \rightarrow T^{n-1} \rightarrow \dots \rightarrow T^1=T^h$.


The reason behind that Algorithm~2 traverses retailers in the descending order of their $K_i/H_i$ values is as follows.
According to Lemma \ref{lem:right-jump}, the condition for a beneficial doubling under a joint policy $T$ and \WPSh with weight $w_i=H_i$ is 
\begin{align*}
    T_i  <\sqrt{\frac{1}{2}\frac{K_i+\frac{w_i}{\sum_{j \in N[T_i;T_{-i}]}w_j}K_0}{H_i}} 
    = \sqrt{\frac{1}{2}\left(\frac{K_i}{H_i}+\frac{K_0}{\sum_{j \in N[T_i;T_{-i}]}w_j}\right)},
\end{align*}
which is more strict for agents with smaller $K_i/H_i$.
Thus, if retailer $i$ does not double its replenishment interval, then neither does any retailer who has a smaller $K_i/H_i$ and the same $T_i$, leading to the following.

\begin{lemma}
\label{lem:no-double}
For any $i \in N$, if $T^h_i=T^c_i$, then $T^h_j=T^c_j$ for every retailer $j$ with $j \leq i$ and $T^c_j=T^c_i$.
\end{lemma}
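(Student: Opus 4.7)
The plan is to proceed by contradiction combined with an induction on the order in which Algorithm~2 processes retailers sharing the same $T^c$-value as $i$. Suppose $T^h_i = T^c_i$, and let $M = \{j \in N : j < i,\, T^c_j = T^c_i\}$, which Algorithm~2 processes in descending-index order $k_1 > k_2 > \dots > k_m$. I will show by induction on $t$ that $T^h_{k_t} = T^c_{k_t}$ for every $t \in \{1,\dots,m\}$, which establishes the lemma.

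For the inductive step, assume the claim holds for all $s < t$, and suppose for contradiction that retailer $k_t$ doubles. Applying Lemma~\ref{lem:right-jump} specialized to \WPSh (where $w_\ell = H_\ell$) at the state $T^{k_t+1}$ when $k_t$ is processed yields
\[
T^c_{k_t} < \sqrt{\frac{1}{2}\left(\frac{K_{k_t}}{H_{k_t}} + \frac{K_0}{\sum_{\ell \in N[T^c_{k_t};T^{k_t+1}_{-k_t}]} H_\ell}\right)},
\]
while the hypothesis that $i$ does not double at state $T^{i+1}$ gives the reverse inequality for $i$:
\[
T^c_i \geq \sqrt{\frac{1}{2}\left(\frac{K_i}{H_i} + \frac{K_0}{\sum_{\ell \in N[T^c_i;T^{i+1}_{-i}]} H_\ell}\right)}.
\]

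The crucial and most delicate step is to verify that the two index sets $N[T^c_i;T^{i+1}_{-i}]$ and $N[T^c_{k_t};T^{k_t+1}_{-k_t}]$ coincide, so that the two weight sums agree. Both sets clearly contain $i$ and $k_t$, using $T^h_i = T^c_i = T^c_{k_t}$ together with the fact that $k_t$ has not yet been processed at state $T^{i+1}$. For any other index $\ell$, the two membership conditions can differ only when $k_t < \ell < i$: only then does $\ell$ appear with its final (possibly doubled) value $T^h_\ell$ in the $k_t$-side state and with its initial value $T^c_\ell$ in the $i$-side state. If $\ell$ did not double the conditions coincide; if $\ell$ doubled and the conditions disagree, a short POT-rounding argument forces $T^c_\ell = T^c_i$, which places $\ell$ in $M$ with $k_t < \ell < i$. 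But then $\ell = k_s$ for some $s < t$, and the induction hypothesis contradicts that $\ell$ doubled.

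With the set equality in hand, combining the two displayed inequalities with $K_{k_t}/H_{k_t} \leq K_i/H_i$ from the sorted ordering and $T^c_{k_t} = T^c_i$ yields the strict inequality $T^c_i < T^c_i$, the desired contradiction. The routine verifications behind the set equality (that $T^{k_t+1}_i = T^h_i = T^c_i$ since $i$ has already been processed without doubling, and that retailers $\ell < k_t$ or $\ell > i$ contribute identically in both states) follow directly from the definition of Algorithm~2. The main obstacle is the bookkeeping that shows no intermediate doubled retailer can create a discrepancy between the two weight sets; once that is settled, the rest of the argument reduces cleanly to Lemma~\ref{lem:right-jump}.
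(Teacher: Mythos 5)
Your proof is correct and follows essentially the same route as the paper's: an induction along the retailers sharing the value $T^c_i$, driven by Lemma~\ref{lem:right-jump} and the sorted order of the ratios $K_j/H_j$. The only structural difference is that the paper compares each retailer to its immediate predecessor in the chain (so the relevant set $N[\cdot]$ is unchanged for free, because a non-doubling retailer leaves the state untouched), whereas you compare each $k_t$ directly to the anchor $i$ and therefore need the extra POT-rounding and induction bookkeeping to certify the set equality across several processing steps; both arguments go through.
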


Since Algorithm~2 traverses each retailer only once, the doubling happens at most once for each retailer and we have $T^h_i \in\{T^c_i, 2 T^c_i\} $ for each $i \in N$.
Thus, for any two retailers $i,j \in N$ with $T^c_j<T^c_i$, it holds that $T^h_j \le T^h_i$.
Together with Lemma \ref{lem:no-double}, we conclude that for any retailer $i$ who does not double its replenishment interval during Algorithm~2, the set of retailers whose replenishment intervals are no larger than retailer $i$'s under $T^{i}$ remains the same under $T^h$ (at the end of Algorithm~2).

\begin{lemma}
\label{lem:same-N}
For any retailer $i \in N$ with $T^h_i=T^c_i$, we have $N[T^i_i;T^i_{-i}]=N[T^h_i;T^h_{-i}]$.
\end{lemma}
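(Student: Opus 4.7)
The plan is to leverage the structure of Algorithm~2's traversal order to compare the two sets index by index. Since the algorithm processes retailers in the order $n, n-1, \dots, 1$ and updates each retailer at most once, the intermediate policy $T^i$ (captured right after retailer $i$ has been examined) satisfies $T^i_j = T^h_j$ for every $j \geq i$ (already finalized) and $T^i_j = T^c_j$ for every $j < i$ (not yet touched). Under the hypothesis $T^h_i = T^c_i$, we have $T^i_i = T^h_i = T^c_i$, so both sets $N[T^i_i;T^i_{-i}]$ and $N[T^h_i;T^h_{-i}]$ consist of the indices $j$ whose relevant $T$-value does not exceed $T^c_i$.

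First I would dispense with the easy half: for each $j \geq i$ we have $T^i_j = T^h_j$ and $T^i_i = T^h_i$, so membership in either set is literally the same condition. The work then reduces to showing, for every $j < i$, the equivalence $T^c_j \leq T^c_i \iff T^h_j \leq T^c_i$. One direction is immediate from the subsection's earlier observation that $T^h_j \in \{T^c_j, 2T^c_j\}$, which gives $T^c_j \leq T^h_j$; hence $T^h_j \leq T^c_i$ forces $T^c_j \leq T^c_i$.

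For the other direction, fix $j < i$ with $T^c_j \leq T^c_i$ and split into two subcases. If $T^c_j < T^c_i$, then since both are power-of-two multiples of the common base period, the gap is at least a factor of two, so $T^c_j \leq T^c_i/2$ and hence $T^h_j \leq 2T^c_j \leq T^c_i$. If instead $T^c_j = T^c_i$, then Lemma~\ref{lem:no-double} applied to~$i$ (using the hypothesis $T^h_i = T^c_i$) directly yields $T^h_j = T^c_j = T^c_i$. Either way $T^h_j \leq T^c_i$, completing the equivalence and thus the lemma.

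The load-bearing step is the subcase $T^c_j = T^c_i$ with $j < i$, which is exactly what Lemma~\ref{lem:no-double} is engineered to handle via the descending $K/H$ processing order of Algorithm~2; everything else is bookkeeping on the POT grid and on which retailers the algorithm has visited. I do not foresee a technical obstacle beyond verifying that the case split over $j < i$ versus $j \geq i$ is exhaustive and that the POT-gap factor of two is applied correctly in the $T^c_j < T^c_i$ subcase.
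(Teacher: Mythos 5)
Your proof is correct and follows essentially the same route as the paper's: the paper partitions $N$ into four sets by comparing indices to $i$ and intervals to $T^i_i$, but the substance is identical — already-processed retailers are unchanged, retailers with $T^c_j = T^c_i$ and $j \le i$ are handled by Lemma~\ref{lem:no-double}, and retailers with $T^c_j < T^c_i$ are handled by the POT factor-of-two gap combined with the fact that each retailer doubles at most once.
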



Now we show that $T^h$ is indeed a payoff dominant Nash equilibrium.

\begin{theorem}
\label{thm:NE-alg}
Under \WPSh, Algorithm~2 finds a payoff dominant Nash equilibrium in $O(n)$ time.
\end{theorem}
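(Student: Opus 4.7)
The plan has three ingredients: (i) verify that $T^h$ is a Nash equilibrium by ruling out both halving and doubling unilateral deviations, (ii) identify $T^h$ with the least Nash equilibrium so that payoff dominance follows from Lemma~\ref{lem:alg1-dominant}, and (iii) confirm the linear running time.

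Since Algorithm~2 never halves, $T^h_i \in \{T^c_i, 2T^c_i\}$ and hence $T^h \geq T^c$. By Lemmas~\ref{lem:right-jump} and \ref{lem:left-jump}, it suffices to rule out single-step deviations. For halving: when $T^h_i = T^c_i$, apply Lemma~\ref{lem:no-left-jump} directly; when $T^h_i = 2T^c_i$, the doubling executed during the algorithm was strictly beneficial at the pre-processing state $T^{i+1}$, and since later steps only raise other retailers' intervals, $N[T^c_i; T^h_{-i}] \subseteq N[T^c_i; T^{i+1}_{-i}]$, so Lemma~\ref{lem:no-back-jump} transfers the strict benefit of doubling from $T^c_i$ to the state $(T^c_i; T^h_{-i})$ -- exactly the statement that halving from $2T^c_i$ at $T^h$ is not beneficial. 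For doubling in the case $T^h_i = T^c_i$: the test in Algorithm~2 was negative at $T^i = T^{i+1}$, and Lemma~\ref{lem:same-N} together with the iff part of Lemma~\ref{lem:no-back-jump} carries this negative test to $T^h$.

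The main obstacle is the remaining case: showing that further doubling from $2T^c_i$ to $4T^c_i$ is not beneficial when $T^h_i = 2T^c_i$. By Lemma~\ref{lem:right-jump} specialized to $w_i = H_i$, this reduces to verifying
\[
4(T^c_i)^2 \geq \frac{K_i}{2H_i} + \frac{K_0}{2W_h^+}, \qquad W_h^+ := \sum_{j \in N[2T^c_i; T^h_{-i}]} H_j.
\]
The key observation is that every $j \in U$ has $T^c_j = T^c_{\min} \leq T^c_i$, so $T^h_j \leq 2T^c_j \leq 2T^c_i$, giving $U \subseteq N[2T^c_i; T^h_{-i}]$ and hence $W_h^+ \geq \sum_{j \in U} H_j$. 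Combined with the identity \eqref{eq:s-def}, which forces $K_0 \leq s\,\sum_{j \in U} H_j$, this yields $K_0/W_h^+ \leq s$. A short case analysis using the POT bound $(T^c_i)^2 > s/2$ on $U$ and $(T^c_i)^2 > K_i/(2H_i)$ on $V$, together with $K_i/H_i \leq s$ on $U$ and $K_i/H_i > s$ on $V$, then settles the inequality.

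For payoff dominance, I would observe that Algorithm~2 coincides with a single pass of Algorithm~1 in the update order $n, n-1, \dots, 1$: no halving occurs by Lemma~\ref{lem:no-left-jump-ALG}, and the doubling tests are identical. Since $T^h$ is a Nash equilibrium, Algorithm~1 halts at $T^h$, and by Lemma~\ref{lem:alg1-least} this equals the unique least Nash equilibrium $T^w$, which is payoff dominant by Lemma~\ref{lem:alg1-dominant}. Linear running time follows because each retailer is examined exactly once, and since sorting by $K_i/H_i$ groups retailers by $T^c$-value in a monotone fashion, the weight sum $\sum_{j \in N[T^c_i; T^{i+1}_{-i}]} H_j$ needed for each doubling test can be maintained with $O(1)$ amortized work per step.
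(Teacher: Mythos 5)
Your proposal is correct and follows essentially the same route as the paper's proof: the same four-case analysis of unilateral deviations (using Lemmas~\ref{lem:no-left-jump}, \ref{lem:same-N}, and \ref{lem:no-back-jump} for three of them, and the key observation $U \subseteq N[T^h_i;T^h_{-i}]$ with $K_0 \le s\,H(U)$ for the doubling-from-$2T^c_i$ case), followed by identifying Algorithm~2 with a run of Algorithm~1 under a particular update order to inherit payoff dominance from Lemmas~\ref{lem:alg1-least} and \ref{lem:alg1-dominant}. The only cosmetic difference is that you package the final case as a single inequality with the bound $K_0/W_h^+\le s$ and then split into $U$ and $V$, whereas the paper writes out the two inequality chains directly; the underlying computation is identical.
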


\subsubsection{Near-Optimal Price of Stability}

We proceed to analyze the PoS of \WPSh.
Since $T^h_i \le 2T^c_i$ for every $i \in N$, it immediately follows that the PoS is at most 2.
To get a tighter upper bound, we explore the influence POT policies.
We show that if a retailer's replenishment interval is doubled during Algorithm~2, then it must be that the interval under $T^c$ was rounded down to the nearest POT point.

\begin{lemma}
\label{lem:POT-left}
For any retailer $i \in U$, if $T^h_i=2T^c_i$, then $T^c_{\min}=T^c_i=\alpha \sqrt{s}$ for some $\alpha \in [\frac{1}{\sqrt{2}},1)$.
For any retailer $i \in V$, if $T^h_i=2T^c_i$, then $T^c_i=\alpha_i \sqrt{\frac{K_i}{H_i}}$ for some $\alpha_i \in [\frac{1}{\sqrt{2}},1)$.
\end{lemma}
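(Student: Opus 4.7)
The plan is to apply the doubling criterion of Lemma~\ref{lem:right-jump} at the state $T^{i+1}$ that Algorithm~2 sees just before processing retailer $i$, and combine the resulting inequality with the identity $sH(U)=K_0+K(U)$ from \eqref{eq:s-def}. The two cases of the lemma correspond to different positions of $i$ in Algorithm~2's processing order, so I would treat them separately.

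For $i\in V$, the decisive observation is that every $j\in U$ has $K_j/H_j\le s<K_i/H_i$, so under Algorithm~2's decreasing-$K/H$ traversal such $j$ is still untouched at $T^{i+1}$ and holds $T^{i+1}_j=T^c_{\min}\le T^c_i$. Therefore $U\subsetneq S:=N[T^c_i;T^{i+1}_{-i}]$, and this strict inclusion together with $sH(U)\ge K_0$ forces $K_0/H(S)<s$. Plugging this bound together with $s<K_i/H_i$ into Lemma~\ref{lem:right-jump}(3) shows that its right-hand side is strictly less than $K_i/H_i$. Consequently, if we had $T^c_i\ge\sqrt{K_i/H_i}$, then $(T^c_i)^2\ge K_i/H_i$ would violate the doubling inequality, contradicting the hypothesis $T^h_i=2T^c_i$. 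Hence $T^c_i<\sqrt{K_i/H_i}$, i.e., $\alpha_i\in[1/\sqrt{2},1)$.

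For $i\in U$ we have $T^c_i=T^c_{\min}$, and the claim reduces to $T^c_{\min}<\sqrt{s}$. I would prove the contrapositive by induction along the processing order inside $U$: if $T^c_{\min}\ge\sqrt{s}$, then no $j\in U$ can double during Algorithm~2. When the first $U$-retailer $j$ is processed, the inductive hypothesis (for the earlier $U$-retailers) together with the fact that the later $U$-retailers have not yet been touched implies that all of $U$ still sits at $T^c_{\min}$, so $U\subseteq S:=N[T^c_{\min};T^{j+1}_{-j}]$. Then $K_j/H_j\le s$ and $K_0/H(S)\le K_0/H(U)\le s$ (both coming from \eqref{eq:s-def}) together bound the right-hand side of Lemma~\ref{lem:right-jump}(3) by $s\le(T^c_{\min})^2$; a brief case split on whether $K(U)>0$, or on whether $K_j/H_j<s$ in the degenerate case $K(U)=0$, makes this bound strict and yields the contradiction.

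The conceptually delicate point is that the set $S$ entering Lemma~\ref{lem:right-jump} depends on which retailers have already doubled in Algorithm~2, and $V$-retailers may double before $i$ is processed, potentially shrinking $S$. The structural observation that unlocks both cases is that Algorithm~2's decreasing-$K/H$ order keeps the whole of $U$ pinned at $T^c_{\min}$ throughout the $V$-phase, so $U\subseteq S$ whenever the doubling criterion is invoked; this is precisely the inclusion needed to tap into \eqref{eq:s-def}, and the rest of the argument reduces to a direct algebraic comparison.
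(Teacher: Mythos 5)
Your proof is correct and follows essentially the same route as the paper's: apply the doubling threshold of Lemma~\ref{lem:right-jump} at the state just before retailer $i$ is processed, use the inclusion $U \subseteq N[\,\cdot\,]$ (guaranteed by the decreasing-$K_i/H_i$ traversal) together with the definition of $s$ to cap the threshold, and conclude $\alpha<1$, with $\alpha\ge 1/\sqrt{2}$ coming from the POT rounding. The only cosmetic differences are that the paper reduces the $U$-case to the first-processed retailer $i^*$ by citing Lemma~\ref{lem:no-double} instead of re-running your induction, and your case split to make the bound strict when $K(U)=0$ is unnecessary because the doubling criterion in Lemma~\ref{lem:right-jump} is already a strict inequality.
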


Using Lemma~\ref{lem:POT-left}, we prove a tighter upper bound for the PoS under \WPSh.

\begin{theorem}
\label{thm:POS}
Under \WPSh, the price of stability is bounded above by $1.25$.
\end{theorem}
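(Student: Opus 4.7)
The plan is to upper bound the excess cost $C(T^h) - C(T^c)$ by $\tfrac{1}{4} C(T^c)$. Let $D = \{i : T^h_i = 2T^c_i\}$ be the set of doublers. Since Algorithm~2 only doubles intervals and at most once per retailer, $T^h \ge T^c$ componentwise, so $T^h_{\min} \ge T^c_{\min}$ and the major-setup term $K_0/T_{\min}$ can only shrink. Thus
\[
C(T^h) - C(T^c) \;\le\; \sum_{i \in D}\!\left[H_i T^c_i - \tfrac{K_i}{2T^c_i}\right].
\]
For $i \in V \cap D$, Lemma~\ref{lem:POT-left} gives $T^c_i = \alpha_i\sqrt{K_i/H_i}$ with $\alpha_i \in [1/\sqrt 2, 1)$, and $\frac{H_iT^c_i - K_i/(2T^c_i)}{H_iT^c_i + K_i/T^c_i} = \frac{\alpha_i^2 - 1/2}{\alpha_i^2 + 1} \le \tfrac{1}{4}$; the deficit of each $V$-doubler is at most a quarter of its $T^c$-cost.

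The remaining work is to bound the $D_U := D \cap U$ contribution together with the change in the major-setup term. By Lemma~\ref{lem:POT-left} all $i \in D_U$ share a common $T^c_{\min} = \alpha\sqrt s$ with $\alpha \in [1/\sqrt 2,1)$, and by Lemma~\ref{lem:no-double} $D_U$ is upward-closed in Algorithm~2's processing order, so it is either empty, all of $U$, or $\{i_0,\dots,u\}$ with $u$ the first $U$-retailer processed. For $D_U=\emptyset$ there is nothing to do. For $D_U = U$, Lemma~\ref{lem:no-double} applied to any $v \in V$ with $T^c_v = T^c_{\min}$ forces $v$ also to double: otherwise, since $j \le v$ and $T^c_j = T^c_v$ holds for every $j \in U$, the lemma would force no $U$-retailer to double, contradicting $D_U = U$. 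Hence $T^h_{\min} = 2T^c_{\min}$, and a direct computation using $A_U(T^c_{\min})^2 = \alpha^2(K_U+K_0)$ yields the block ratio $\frac{2A_UT^c_{\min} + (K_U+K_0)/(2T^c_{\min})}{A_UT^c_{\min} + (K_U+K_0)/T^c_{\min}} = \frac{4\alpha^2+1}{2(\alpha^2+1)} \le \tfrac{5}{4}$.

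For $\emptyset \subsetneq D_U \subsetneq U$, one has $T^h_{\min} = T^c_{\min}$, and the key observation is that at the moment each $i \in D_U$ is processed the weight sum $W_i$ (Lemma~\ref{lem:right-jump}, with $w_j = H_j$ under \WPSh) equals exactly $S_i := \sum_{j=1}^i H_j$: all later-processed $U$-retailers are in $D_U$ and have already doubled, and by Lemma~\ref{lem:no-double} every $V$-retailer with $T^c_v = T^c_{\min}$ has also doubled. Lemma~\ref{lem:right-jump} then reads $K_i > H_i\bigl(2(T^c_{\min})^2 - K_0/S_i\bigr)$ for each $i \in D_U$; summing reduces the required inequality $\sum_{i \in D_U}[H_iT^c_{\min} - K_i/(2T^c_{\min})] \le \tfrac{1}{4}[A_UT^c_{\min} + (K_U+K_0)/T^c_{\min}]$ to $2qX \le \alpha^2 + 1$, where $q = K_0/(K_U+K_0)$ and $X = \sum_{i \in D_U} H_i/S_i$. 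Telescoping $H_i/S_i \le \ln(S_i/S_{i-1})$ gives $X \le \ln(1/\tilde a)$ with $\tilde a = S_{i_0-1}/S_u$, while the non-doubling of $i_0-1$ (Lemma~\ref{lem:right-jump}) gives $q \le 2\alpha^2 \tilde a$. Using $\max_{t \in (0,1]} t\ln(1/t) = 1/e$ and $\alpha^2 < 1 < e/(4-e)$, we conclude $qX \le 2\alpha^2 \tilde a \ln(1/\tilde a) \le 2\alpha^2/e \le (\alpha^2+1)/2$.

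Combining the $V$-bound with the appropriate subcase bound yields $C(T^h) \le \tfrac{5}{4} C(T^c)$ and hence $\mathrm{PoS}(\WPSh) \le 1.25$. The main obstacle is the last subcase: identifying the clean closed form $W_i = S_i$ along Algorithm~2's trajectory, and then deploying the harmonic estimate $\sum H_i/S_i \le \ln(1/\tilde a)$ together with the $q$-bound from the boundary retailer $i_0 - 1$ to land on the tidy one-variable inequality $2qX \le \alpha^2 + 1$.
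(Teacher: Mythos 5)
Your proof is correct, and for the $V$-doublers and for the case where all of $U$ doubles it coincides with the paper's computation (the paper's ratio $2\frac{\alpha^2+1/4}{\alpha^2+1}\le \frac54$ is exactly the excess bound $\frac{\alpha^2-1/2}{\alpha^2+1}\le\frac14$ you use). Where you genuinely diverge is the case $\emptyset\subsetneq D_U\subsetneq U$. The paper never touches the doubling conditions of the doublers there: it takes the \emph{non-doublers} $U_0=U\setminus D_U$, sums their non-doubling inequalities at $T^h$ to get $\frac{K_0}{2T^c_{\min}}+\sum_{i\in U_0}\frac{K_i}{2T^c_{\min}}<\sum_{i\in U_0}H_iT^c_{\min}$, and uses this to replace the $U_0$-block of the cost by what those retailers \emph{would} have paid had they doubled; this collapses every subcase into the single ``everyone in $U$ doubles'' ratio $\frac{2\alpha^2+1/2}{\alpha^2+1}\le\frac54$ in two lines. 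You instead work with the \emph{doublers}: you identify the weight sums $W_i=S_i$ along Algorithm~2's trajectory (which needs the same observation you and the paper both rely on, that every $V$-retailer sitting at $T^c_{\min}$ must have doubled, via Lemma~\ref{lem:no-double}), convert each doubling condition into $K_i>H_i\bigl(2(T^c_{\min})^2-K_0/S_i\bigr)$, telescope $\sum_{i\in D_U}H_i/S_i\le\ln(1/\tilde a)$ over the suffix $D_U=\{i_0,\dots,i^*\}$, and close with the boundary inequality $q\le 2\alpha^2\tilde a$ from the first non-doubler together with $t\ln(1/t)\le 1/e$. I checked the reduction to $2qX\le\alpha^2+1$ and the final estimate $4\alpha^2/e\le\alpha^2+1$; both are valid, with room to spare. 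Your route is heavier and leans on a harmonic-sum/entropy bound, but it extracts quantitative information from the doublers themselves, whereas the paper's substitution trick is shorter and needs no trajectory bookkeeping beyond the final state. One point worth emphasizing: your opening reduction $C(T^h)-C(T^c)\le\sum_{i\in D}\bigl[H_iT^c_i-K_i/(2T^c_i)\bigr]$ throws away the savings in the major-setup term, which is genuinely not affordable when $D_U=U$ (for $K_U=0$ it would demand $3\alpha^2\le1$, false for $\alpha\ge 1/\sqrt2$); you correctly recognized this and treated that case separately with the block ratio $\frac{4\alpha^2+1}{2(\alpha^2+1)}$ after establishing $T^h_{\min}=2T^c_{\min}$.
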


Theorem~\ref{thm:NE-alg} implies that we can achieve a near-optimal PoS without knowing $K_i$.
We complement this result by showing that the PoS of any \WPSshort rule is strictly larger than 1 when $K_i$ is private information.

\begin{proposition}
\label{prop:k-unknow-POS>1}
When $K_i$ is private information, the PoS of any \WPSshort rule is at least~1.05.
\end{proposition}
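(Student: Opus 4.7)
The plan is to exhibit a single 3-retailer instance $\mathcal{P}^*$ on which every \WPSshort rule that cannot access $K_i$ is forced into a uniform weight vector, and to show that the payoff-dominant Nash equilibrium under those uniform weights costs at least $19/18 > 1.05$ times $C(T^c)$. I take $n=3$ with $h_1=h_2=h_3$ and $d_1=d_2=d_3$ chosen so $H_1=H_2=H_3=1$, $K_0=5/2$, $K_1=K_2=0$, and $K_3=5/4$. The uniform-weight argument is short: since the rule does not depend on $K_i$, it returns the same weight vector on $\mathcal{P}^*$ as on the profile $\mathcal{P}^\circ$ with the same $(K_0,\{h_i\},\{d_i\})$ but $K_1=K_2=K_3=0$; in $\mathcal{P}^\circ$ all three retailers have identical parameters, so the symmetry property shared by all \WPSshort rules (cited before Proposition~\ref{prop:POA-bound}) forces $w_1=w_2=w_3=1/3$.

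Next, I compute the centralized optimum and trace Algorithm~1. Since $K_3/H_3=5/4=s=(K_0+K_3)/3$, all retailers lie in $U$ and the unrounded optimum is $\sqrt{5/4}$; the POT convention rounds this down to $T^c_i=1$ for every $i$, giving $T^c=(1,1,1)$ and $C(T^c)=27/4$. Applying Lemma~\ref{lem:right-jump} to retailer~3 at $T^c$ (with $N[T_3;T_{-3}]=\{1,2,3\}$ and weight ratio $1/3$) yields the doubling threshold $\sqrt{(5/4+5/6)/2}=\sqrt{25/24}>1$, so retailer~3 doubles and Algorithm~1 moves to $(1,1,2)$.

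What remains is to confirm $(1,1,2)$ is a Nash equilibrium; Theorem~\ref{thm:general-NE-exist} then identifies it as the payoff-dominant equilibrium, which is the ``best'' NE for PoS. Using Lemmas~\ref{lem:right-jump} and \ref{lem:left-jump}, I would exhaust the six unilateral deviations, carefully updating $N[\cdot;\cdot]$ at each deviation target. Retailers~1 and~2 face a doubling threshold of $\sqrt{(K_0/2)/2}=\sqrt{5/8}<1$ and, upon halving to $1/2$ alone, a halving threshold of $\sqrt{2K_0}=\sqrt{5}>1$, so they prefer to stay; retailer~3 cannot double further since $2>\sqrt{25/24}$, and its halving threshold $\sqrt{2(5/4+5/6)}=\sqrt{25/6}$ just exceeds $2$, blocking halving as well. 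Once the equilibrium is confirmed, $C((1,1,2))=57/8$ yields $\text{PoS}\ge(57/8)/(27/4)=19/18>1.05$. The main obstacle is exactly this NE verification together with the POT-boundary check: one has to confirm under the paper's POT formula that $\sqrt{5/4}$ rounds down to $1$ rather than up to $2$, and that each of the six doubling/halving thresholds lies strictly on the correct side of the incumbent interval.
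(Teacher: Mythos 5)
Your proof is correct, and I verified the arithmetic: with $B=1$ one gets $s=5/4$, $T^c=(1,1,1)$, $C(T^c)=27/4$, retailer~3's doubling threshold $\sqrt{25/24}>1$ at $T^c$, and at $(1,1,2)$ every unilateral POT deviation is blocked (the tightest check being retailer~3's halving threshold $\sqrt{25/6}>2$), so by Lemmas~\ref{lem:right-jump} and~\ref{lem:left-jump} and Theorem~\ref{thm:general-NE-exist} the best equilibrium costs $57/8$ and the ratio is $19/18>1.05$. Your route, however, is genuinely different from the paper's. The paper uses \emph{two} two-retailer instances ($K_0=5$, $H_1=H_2=1$, with $(K_1,K_2)=(1,6)$ and $(6,1)$) that are permutations of each other in the private coordinate; since the rule cannot see $K_i$ it must output the same weight vector on both, so after assuming $w_1\le w_2$ one retailer with the large minor setup cost carries at least half the weight, doubles from $T^c=(2,2)$ to $(2,4)$, and the ratio is exactly $10.5/10=1.05$. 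You instead use a single three-retailer instance and pin the weights to be \emph{uniform} by invoking the symmetry assertion made before Proposition~\ref{prop:POA-bound}. What each approach buys: yours yields a marginally stronger constant ($19/18\approx1.056$ versus $21/20$), while the paper's permutation trick is more robust --- it needs only that the weight vector is a function of the public data, not that identical retailers receive identical weights. Note that the definition of \WPSshort rules by itself does not force equal weights on identical retailers (a rule could set $w_3=0$ in your instance, in which case retailer~3's doubling threshold drops to $\sqrt{5/8}<1$ and your equilibrium analysis collapses), so your argument leans on the paper's symmetry claim as an additional hypothesis; within the paper's stated framework this is legitimate, but it is the one place where your proof is strictly less general than the paper's.
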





\section{Coordination Under Information Constraints} 
\label{sec:H-unkown}

While \WPSh is simple and reasonable, with its PoS close to 1, it can still be unsatisfactory  when the exact information $H_i=\frac{1}{2}h_id_i$ for each retailer is unavailable.
In this section, we consider the scenario where both $H_i$ and $K_i$ are private information.
We first establish a general upper bound on the PoS that applies to all \WPSshort rules (Section~\ref{sec:PoS-general-bound}).
Then we propose a new rule that does not require any private information from retailers while still achieving strong efficiency in practice (Section~\ref{sec:WPSd}), and another rule based on estimated values of $H_i$ (Section~\ref{sec:estimation}).

\subsection{PoS Upper Bound for General \WPSshort Rules}
\label{sec:PoS-general-bound} 

We begin by analyzing the efficiency of arbitrary \WPSshort rules and derive general upper bounds on their PoS.
Let \WPSw denote a \WPSshort rule with weight vector~$w=(w_i)_{i \in N}$.
We will show that the PoS of \WPSw crucially depends on the ratio $\gamma_w \coloneqq \frac{\max_i H_i/w_i}{\min_i H_i/w_i}$.
Our analysis begins with a weaker upper bound on the PoS obtained by comparing each retailer’s replenishment interval in the payoff-dominant Nash equilibrium $T^w$ to its optimal interval in the centralized policy~$T^c$.

\begin{proposition}
\label{prop:jump-bound}
For \WPSw, we have $\max_{i \in N}\frac{T^w_i}{T^c_i} \leq 2\sqrt{1+\gamma_w}$ and hence the price of stability is bounded above by $2\sqrt{1+\gamma_w}$.
\end{proposition}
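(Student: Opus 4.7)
The plan is to reduce the PoS bound to the maximum interval ratio $\alpha := \max_{i \in N} T^w_i/T^c_i$ and then bound this ratio using the Nash equilibrium conditions together with the spread parameter $\gamma_w$.

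\emph{Reduction to interval ratio.} Since $T^w \geq T^c$ componentwise by Lemma~\ref{lem:no-left-jump-ALG}, the holding-cost term $\sum_i H_i T^w_i$ is at most $\alpha \sum_i H_i T^c_i$, while $\sum_i K_i/T^w_i \leq \sum_i K_i/T^c_i$ and $K_0/T^w_{\min} \leq K_0/T^c_{\min}$. Thus $C(T^w) \leq \alpha \, C(T^c)$, so it suffices to show $\alpha \leq 2\sqrt{1+\gamma_w}$.

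\emph{Bounding the interval ratio.} For any retailer $i$ with $T^w_i > T^c_i$, since halving is not beneficial at the NE, Lemma~\ref{lem:left-jump} gives
\[
T^w_i \leq \sqrt{\frac{2(K_i + w_i K_0/W^-_i)}{H_i}},
\]
where $W^-_i := \sum_{j \in N[T^w_i/2;T^w_{-i}]} w_j$. Combining with the POT lower bound $(T^c_i)^2 \geq \max(sH_i, K_i)/(2H_i)$ from~\eqref{eq:opt} yields
\[
\Bigl(\frac{T^w_i}{T^c_i}\Bigr)^2 \leq \frac{4(K_i + w_i K_0/W^-_i)}{\max(sH_i, K_i)} \leq 4 + \frac{4 w_i K_0}{sH_i W^-_i}.
\]
To control the residual, I write $r_j := w_j/H_j \in [r_{\min}, r_{\max}]$ with $\gamma_w = r_{\max}/r_{\min}$, so that $w_i \leq r_{\max} H_i$ and $W^-_i \geq r_{\min} H(N^-_i)$, giving $w_i/W^-_i \leq \gamma_w \, H_i/H(N^-_i)$. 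Using $K_0 \leq sH(U)$ (which follows from $s \leq (K_0 + K(U))/H(U)$ in \eqref{eq:s-def}), the residual becomes
\[
\frac{4 w_i K_0}{sH_i W^-_i} \leq \frac{4 \gamma_w \, H(U)}{H(N^-_i)}.
\]

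\emph{Main obstacle.} The remaining step is the structural inequality $H(N^-_i) \geq H(U)$ for the retailer $i$ attaining the maximum ratio, which then delivers $(T^w_i/T^c_i)^2 \leq 4(1+\gamma_w)$. Establishing this claim is the main obstacle: it requires leveraging the payoff-dominance (least-NE) property of $T^w$ from Lemma~\ref{lem:alg1-least} together with the POT regularity of $T^c$, where all retailers in $U$ share the same initial interval $T^c_{\min}$. The intuition is that if retailer $i$ doubles the most relative to its centralized interval, then the other retailers in $U$, which double less in the least NE, must satisfy $T^w_j \leq T^w_i/2$, placing them in $N^-_i$ and providing the required $H$-mass. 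Formalizing this via case analysis on $i \in U$ versus $i \in V$, and invoking the better-response monotonicity of Lemma~\ref{lem:no-back-jump} to rule out configurations in which $U$ is not captured, completes the proof.
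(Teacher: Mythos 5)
Your reduction of the PoS to $\alpha=\max_i T^w_i/T^c_i$ and the algebra that turns the equilibrium condition into $4+4w_iK_0/(sH_iW^-_i)$ both match the paper's proof, but the step you flag as ``the main obstacle'' is not merely unproven --- the inequality you propose to establish, $H(N^-_i)\ge H(U)$ with $N^-_i=N[T^w_i/2;T^w_{-i}]$ evaluated at the \emph{final} equilibrium, is false in general. Take $V=\emptyset$ and an instance in which every retailer of $U$ doubles exactly once (this occurs, e.g., when $T^c_{\min}$ is rounded down to $\alpha\sqrt{s}$ with $\alpha<1$; the case $U_0=\emptyset$ in the proof of Theorem~\ref{thm:POS} is exactly this situation). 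Then every $j\in U$ ends at $2T^c_{\min}=T^w_i$, so $N^-_i=\{i\}$ and $H(N^-_i)=H_i$, which can be a factor $|U|$ smaller than $H(U)$; your residual term degrades to $O(\gamma_w|U|)$ and the claimed bound $2\sqrt{1+\gamma_w}$ is lost. Your intuition that the other retailers of $U$ ``double less'' and hence land in $N^-_i$ fails precisely for those that double the same number of times as $i$, and neither Lemma~\ref{lem:alg1-least} nor Lemma~\ref{lem:no-back-jump} excludes that configuration.

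The paper sidesteps this by not working at the final equilibrium. It chooses $i_0$ as the \emph{first} retailer, during a run of Algorithm~1, to multiply its interval by the maximal factor $k$, and applies the beneficial-\emph{doubling} condition of Lemma~\ref{lem:right-jump} at the moment just before $i_0$'s last jump. At that moment no other retailer has yet reached $k$ times its centralized interval, so by the POT structure each $j\in U$ sits at most at $kT^c_{\min}/2\le kT^c_{i_0}/2=T^w_{i_0}/2$; hence $U$ is genuinely contained in the relevant set $N_{i_0}$ and $\sum_{j\in N_{i_0}}w_j\ge\sum_{j\in U}w_j$. To repair your argument, replace the halving condition at the NE by this doubling condition at the intermediate state; the rest of your algebra then goes through essentially verbatim (the paper splits into $i_0\in U$ and $i_0\in V$, using $K_{i_0}\le sH_{i_0}$ in the first case and $K_{i_0}>sH_{i_0}$ together with $s\ge K_0/H(U)$ in the second, which is what your $\max(sH_i,K_i)$ device accomplishes in one stroke).
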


Next, we introduce a refined analytical approach that yields a significantly stronger PoS bound.
Instead of analyzing retailers individually, we group them based on their ratios $H_i/w_i$ and collectively bound the cost for retailers in the same group.
In particular, for retailers in $U$, we partition them into $m \coloneqq \lceil \log \gamma_w \rceil$ groups such that the $H_i/w_i$ values within each group differ by at most a factor of 2.
We then bound the holding cost of each group using the largest replenishment interval in that group.
Based on this partition, we can prove that the total holding cost for retailers in $U$ is at most $\left(\sqrt{m}+\frac{1}{\sqrt{2}}\right)C(T^c)$. 
For retailers in $V$, in addition to $H_i/w_i$, we need to further partition retailers with similar $H_i/w_i$ based on their $K_i/H_i$ and employ a more intricate technique to bound the total holding cost by $3\sqrt{m}C(T^c)$.
Since the total setup cost under $T^w$ does not exceed that under $T^c$, we arrive at the following bound.

\begin{theorem}
\label{thm:general-POS}
For \WPSw, the price of stability is bounded above by $4\sqrt{\lceil \log \gamma_w \rceil}+\frac{1}{\sqrt{2}}+1$.
\end{theorem}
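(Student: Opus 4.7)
The plan is to decompose $C(T^w) = \sum_i H_i T^w_i + \sum_i K_i/T^w_i + K_0/T^w_{\min}$ and bound each piece separately against $C(T^c)$. Since Algorithm~1 is initialized at $T^c$ and only ever doubles replenishment intervals (Lemma~\ref{lem:no-left-jump-ALG}), we have $T^w_i \geq T^c_i$ for every $i$. This immediately gives $\sum_i K_i/T^w_i + K_0/T^w_{\min} \leq \sum_i K_i/T^c_i + K_0/T^c_{\min} \leq C(T^c)$, so the setup portion contributes at most $1 \cdot C(T^c)$. The real task is to bound the holding cost $\sum_i H_i T^w_i$ by $(4\sqrt{m}+\tfrac{1}{\sqrt{2}})\,C(T^c)$, where $m \coloneqq \lceil \log \gamma_w \rceil$.

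For the $U$-contribution, I would partition $U$ into $m$ groups $G_1,\dots,G_m$ so that within each $G_k$ the ratio $H_i/w_i$ lies in a geometric interval $[a_k, 2 a_k)$; hence $H(G_k) \leq 2 a_k W(G_k)$. Let $i^\star_k \in G_k$ maximize $T^w_i$, and set $T^\star_k \coloneqq T^w_{i^\star_k}$. Bounding the group's holding cost by $T^\star_k\, H(G_k)$ reduces the problem to bounding each $T^\star_k$. Since $T^w$ is a Nash equilibrium, $i^\star_k$ does not benefit from halving, so by Lemma~\ref{lem:left-jump},
\[
    T^\star_k \;\leq\; \sqrt{2\bigl(K_{i^\star_k} + \tfrac{w_{i^\star_k}}{W^\star_k}\, K_0\bigr)\big/H_{i^\star_k}} \;\leq\; \sqrt{2\, K_{i^\star_k}/H_{i^\star_k}} + \sqrt{2\, K_0 / (a_k W^\star_k)},
\]
where $W^\star_k$ is the total weight of retailers with $T^w_j \leq T^\star_k/2$. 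For $i^\star_k \in U$ the first term is $\leq \sqrt{2 s}$ by the characterization \eqref{eq:U-V}, and summing $H(G_k)\sqrt{2 s}$ over $k$ yields a term within $O(1)\cdot H(U)\sqrt{s}$, which is within $O(1)\cdot C(T^c)$ by the EOQ-style identity at $T^c$. The second term summed over $k$ is handled by Cauchy--Schwarz: $\sum_k H(G_k)\sqrt{K_0/(a_k W^\star_k)} \leq \sqrt{m}\,\sqrt{K_0 \sum_k a_k W(G_k)^2/W^\star_k}$, and a telescoping argument using $W^\star_k \geq \sum_{\ell \leq k} W(G_\ell)$ (a consequence of the monotonicity of $T^w$ across groups) bounds the inner sum by $H(U)$. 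This produces the $(\sqrt{m} + 1/\sqrt{2})\,C(T^c)$ bound for the $U$-holding cost.

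For retailers in $V$, the same single-level grouping is insufficient because the centralized interval $T^c_i \overset{POT}{=} \sqrt{K_i/H_i}$ itself varies inside a group. The remedy is to further sub-partition each $G_k \cap V$ into geometric sub-buckets of $K_i/H_i$, so that within each $(H/w,\, K/H)$-bucket both ratios are uniform up to a factor of 2. One then applies the Nash-halving bound bucket-by-bucket, using $\sqrt{K_{i^\star}/H_{i^\star}}$ to pay for the corresponding EOQ holding costs in $C(T^c)$ and routing the $K_0$ term through the same Cauchy--Schwarz as before. Careful bookkeeping yields a bound of $3\sqrt{m}\,C(T^c)$ for $V$.

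The main obstacle will be the $V$-analysis: the set defining $W^\star_k$ can mix retailers across both the $H/w$-groups and the $K/H$-sub-buckets, and one must avoid double-counting while absorbing the $\sqrt{2}$ losses from POT rounding. A useful structural fact here is that $T^w_j$ is non-decreasing in $j$ once retailers are reindexed by $T^w$, so the weights contributing to $W^\star_k$ form a prefix, which drives the telescoping. Combining the setup bound $\leq C(T^c)$, the $U$-holding bound $\leq (\sqrt{m} + 1/\sqrt{2})\,C(T^c)$, and the $V$-holding bound $\leq 3\sqrt{m}\,C(T^c)$ yields $C(T^w) \leq (4\sqrt{m} + 1/\sqrt{2} + 1)\,C(T^c)$, which is exactly the stated bound on the price of stability.
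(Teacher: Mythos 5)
Your overall architecture matches the paper's: bound the setup cost by $C(T^c)$ using $T^w\ge T^c$, partition $U$ into $m=\lceil\log\gamma_w\rceil$ geometric groups in $H_i/w_i$, sub-partition $V$ additionally by $K_i/H_i$, and control each group through its maximal interval via Cauchy--Schwarz. However, there is a genuine gap in how you bound the group maximum $T^\star_k$. You invoke the no-beneficial-halving condition at the final Nash equilibrium (Lemma~\ref{lem:left-jump}), which yields $T^\star_k \le \sqrt{2\bigl(K_{i^\star_k}+\tfrac{w_{i^\star_k}}{W^\star_k}K_0\bigr)/H_{i^\star_k}}$ with $W^\star_k$ the weight of retailers whose \emph{final} interval is at most $T^\star_k/2$. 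This set contains no retailer $j\neq i^\star_k$ with $T^w_j=T^\star_k$ --- in particular it can exclude essentially all of $G_k$ itself --- so neither $W(G_k)\le W^\star_k$ nor your stronger prefix claim $W^\star_k\ge\sum_{\ell\le k}W(G_\ell)$ holds; the latter also presupposes a monotonicity of $T^w$ across the $H/w$-groups that does not exist, since final intervals depend on $K_i/H_i$ as well. Without $W(G_k)\le W^\star_k$, your Cauchy--Schwarz step degrades to a bound of order $\sqrt{H(G_k)\,|G_k|\,K_0}$ rather than $\sqrt{H(G_k)K_0}$ (consider a group whose members all sit at the common minimum interval: then $W^\star_k$ reduces to $w_{i^\star_k}$ alone), and the $\sqrt{m}$ conclusion is lost.

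The paper's mechanism avoids this by exploiting the \emph{dynamics} of Algorithm~1 rather than a static equilibrium condition: it takes $r_i$ to be the \emph{first} retailer of the group to reach the group maximum $T(N_i)$ and applies the beneficial-doubling condition of Lemma~\ref{lem:right-jump} at the moment of that last jump. Since intervals only increase and $r_i$ is first, every other member of the group still has interval at most $T(N_i)/2$ at that moment, so the sharing set $\overline{N_i}$ satisfies $N_i\subseteq\overline{N_i}$ and the ratio $\sum_{j\in N_i}w_j/\sum_{j\in\overline{N_i}}w_j\le 1$ makes the Cauchy--Schwarz step go through; this is the step your argument needs and cannot obtain from the halving condition (the doubling condition at equilibrium only lower-bounds $T^w_i$, so the algorithmic viewpoint is essential here). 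Your $V$-part sketch inherits the same flaw and also misses the paper's key device there: the balancing split $2H_\ell T^w_\ell\le \frac{H_\ell (T^w_\ell)^2}{2T^c_{r_{i,j}}\sqrt m}+2\sqrt m\,H_\ell T^c_{r_{i,j}}$ followed by the geometric series $\sum_{j}2^{-\lceil (j-1)/2\rceil}\le 3$ over the $K/H$-buckets, rather than a second Cauchy--Schwarz.
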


\subsection{\WPSd: Efficient Rule without Private Information}
\label{sec:WPSd}

For the scenario where both $H_i$ and $K_i$ are private information, 
we propose a new rule, called \WPSd, under which each retailer $i$'s weight is exactly its own demand rate~$d_i$.
That is, whenever a subset $S \subseteq N$ of retailers place a joint order, the corresponding major setup cost is shared by retailers in $S$ proportionally to their demand rates $d_i$.
Importantly, each retailer’s demand rate $d_i$ can be directly computed according to its order quantity $d_iT_i$ and replenishment interval $T_i$, both of which are observable and verifiable in practice.
Therefore, demand rates $d_i$ can be treated as public information, and \WPSd does not need any private cost information from retailers.

Using Theorem~\ref{thm:general-POS}, we derive the following upper bound for the PoS of \WPSd.

\begin{corollary}
\label{cor:WPSd-pos}
For \WPSd, the price of stability is bounded above by $4\sqrt{\lceil \log \gamma_d \rceil}+\frac{1}{\sqrt{2}}+1$, where
$\gamma_d \coloneqq \frac{\max_i H_i/d_i}{\min_i H_i/d_i}=\frac{\max_i h_i}{\min_i h_i}$.
\end{corollary}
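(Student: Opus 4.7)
The plan is to derive this corollary as a direct specialization of Theorem~\ref{thm:general-POS}, which provides a general PoS upper bound of $4\sqrt{\lceil \log \gamma_w \rceil}+\frac{1}{\sqrt{2}}+1$ for any \WPSshort rule with weight vector $w$ and associated ratio $\gamma_w = \frac{\max_i H_i/w_i}{\min_i H_i/w_i}$. The only nontrivial step is to verify that the ratio $\gamma_w$ specialized to the weights of \WPSd coincides with the quantity $\gamma_d$ in the corollary statement.

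First, I would recall the definition of \WPSd: each retailer $i$ is assigned weight $w_i = d_i$, so that \WPSd is exactly the instance \WPSw of the general class corresponding to $w = (d_i)_{i \in N}$. Since Theorem~\ref{thm:general-POS} applies to every \WPSshort rule, including \WPSd, the bound $4\sqrt{\lceil \log \gamma_w \rceil}+\frac{1}{\sqrt{2}}+1$ is immediately available once $\gamma_w$ is computed for this specific choice of weights.

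Next, I would compute $\gamma_w$ explicitly. Using the preliminary identity $H_i = \tfrac{1}{2} h_i d_i$, we have for each retailer $i$
\[
\frac{H_i}{w_i} \;=\; \frac{H_i}{d_i} \;=\; \frac{1}{2} h_i.
\]
Taking the maximum and the minimum over $i \in N$, the common factor $\tfrac{1}{2}$ cancels in the ratio, yielding
\[
\gamma_w \;=\; \frac{\max_i H_i/d_i}{\min_i H_i/d_i} \;=\; \frac{\max_i h_i}{\min_i h_i} \;=\; \gamma_d.
\]
Substituting this into the bound from Theorem~\ref{thm:general-POS} produces $4\sqrt{\lceil \log \gamma_d \rceil}+\frac{1}{\sqrt{2}}+1$, which is exactly the claimed bound on the PoS of \WPSd.

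There is essentially no substantive obstacle in the argument: all the analytical work has already been done in proving Theorem~\ref{thm:general-POS}, and the only thing to check is that \WPSd is a valid \WPSshort rule (which follows because $d_i > 0$ so no special rule for zero-weight groups is needed) and that the parameter $\gamma_w$ in the general bound reduces to $\gamma_d$ for this choice of weights. Thus the proof is effectively a one-line substitution following the identity $H_i/d_i = h_i/2$.
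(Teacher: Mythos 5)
Your proof is correct and follows exactly the paper's route: the corollary is obtained by instantiating Theorem~\ref{thm:general-POS} with $w_i=d_i$ and observing that $H_i/d_i=h_i/2$, so $\gamma_w=\max_i h_i/\min_i h_i=\gamma_d$. Nothing more is needed.
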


In real-world applications, the variation in the holding cost rate~$h_i$ among retailers is typically not substantial, so the value of $\sqrt{\lceil \log \gamma_d \rceil}$ is usually very small.
Thus \WPSd can guarantee a good performance for the decentralized system in practice while requiring no private information.
In contrast, the PoS of the equal-division rule studied in \citet{he_noncooperative_2017} is shown to be $O(\sqrt{\ln n})$.
We note that \WPSd is similar in spirit to another rule analyzed in \citet{he_noncooperative_2017}, called Proportional Sharing Rule (PSR), where retailers share the major setup cost proportionally to their order quantity~$d_iT_i$.
However, as shown by \citet{he_noncooperative_2017}, a key drawback of PSR is the lack of a guaranteed Nash equilibrium.
Our results demonstrate that by allocating costs based on $d_i$ instead of $d_iT_i$, \WPSd overcomes this issue while also achieving a better PoS than the equal-division rule.

Although the upper bound on PoS of \WPSd in Corollary~\ref{cor:WPSd-pos} is derived from a general bound applicable to arbitrary \WPSshort rules, we show in Proposition~\ref{prop:h-unknow-POS>gamma} that, surprisingly, this bound is essentially tight.
Specifically, we prove that when holding costs are private, no \WPSshort rule can achieve a PoS better than $\Omega(\sqrt{\ln \gamma_d})$.
This result implies that \WPSd attains the optimal PoS up to a constant factor when holding costs are private.

\begin{proposition}
\label{prop:h-unknow-POS>gamma}
When $h_i$ is private information, the PoS of any \WPSshort rule is at least~$\frac{\sqrt{\ln\gamma_d}}{3}$, even if $K_i=0$ and $d_i=2$ for every retailer $i \in N$.
\end{proposition}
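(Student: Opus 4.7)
The plan is to exhibit an adversarial family of instances that defeats every \WPSshort rule. Take $K_i=0$ and $d_i=2$ for all $i\in N$, so the visible profile is permutation-invariant in the retailer labels. For a given \WPSshort rule, let $(w_1,\dots,w_n)$ be the weights it outputs on this profile; since $h_i$ is private, these weights must be independent of the $h_i$'s. After relabeling retailers by a permutation that sorts the weights in increasing order, we may assume $w_1\leq w_2\leq\cdots\leq w_n$, and the adversary then assigns holding costs $h_i = 2/i$, equivalently $H_i = \tfrac{1}{2}h_id_i = 1/i$. Then $\gamma_d = h_1/h_n = n$ and $H(N) = \sum_i H_i = \sum_{i=1}^n 1/i$ is the $n$-th harmonic number, so the centralized optimum has $T^c_i = \sqrt{K_0/H(N)}$ with $C(T^c) = 2\sqrt{K_0\,H(N)}$ (POT rounding absorbed into the choice of $K_0$ and $B$).

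The key claim is that under this adversarial pairing, every Nash equilibrium $T^w$ satisfies $T^w_i \geq \sqrt{K_0/2}$ for every $i$. To prove it, set $T_{\min}=\min_i T^w_i$, let $I = \{i \in N : T^w_i = T_{\min}\}$, $k = |I|$, and apply the stationarity condition of Lemma~\ref{lem:right-jump} (with $K_i=0$ and $|N[T_{\min};T_{-i}]| = k$) to each $i\in I$:
\begin{equation*}
T_{\min}\;\geq\;\sqrt{\frac{K_0\,w_i}{2\,(\sum_{j\in I}w_j)\,H_i}}\;=\;\sqrt{\frac{K_0\,w_i\,i}{2\sum_{j\in I}w_j}}.
\end{equation*}
Write $i_k = \max I$. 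Then $i_k \geq k$ because $I$ is a $k$-subset of $\{1,\dots,n\}$, and because $w$ is sorted in increasing order, $w_j \leq w_{i_k}$ for every $j\in I$, so $\sum_{j\in I} w_j \leq k\,w_{i_k} \leq i_k\,w_{i_k}$. Substituting the stationarity bound at $i = i_k$ gives $T_{\min} \geq \sqrt{K_0/2}$, and hence $T^w_i \geq T_{\min} \geq \sqrt{K_0/2}$ for every $i$.

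Consequently $C(T^w)\geq\sum_i H_i T^w_i \geq \sqrt{K_0/2}\cdot H(N)$ and
\begin{equation*}
\frac{C(T^w)}{C(T^c)}\;\geq\;\frac{\sqrt{K_0/2}\,H(N)}{2\sqrt{K_0\,H(N)}}\;=\;\frac{\sqrt{H(N)}}{2\sqrt{2}}\;\geq\;\frac{\sqrt{\ln n}}{2\sqrt{2}}\;>\;\frac{\sqrt{\ln n}}{3}\;=\;\frac{\sqrt{\ln\gamma_d}}{3},
\end{equation*}
using $H(N) > \ln n$ and $2\sqrt{2} < 3$. The main obstacle is handling arbitrary (possibly asymmetric) weight vectors: the combinatorial inequality $\sum_{j\in I}w_j \leq k\,w_{i_k}\leq i_k\,w_{i_k}$ fails for general orderings but holds exactly under the adversarial pairing that ranks retailers by increasing weight and then assigns $H_i = 1/i$; this rearrangement, together with the stationarity of Lemma~\ref{lem:right-jump}, forces $T^w_i \geq \sqrt{K_0/2}$ in every Nash equilibrium and yields the $\sqrt{\ln\gamma_d}/3$ lower bound.
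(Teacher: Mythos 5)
Your proof is correct, and it follows the same overall blueprint as the paper's: a symmetric public profile ($K_i=0$, $d_i=2$) pins down the weight vector, the adversary then pairs harmonically decaying holding costs against the sorted weights, Lemma~\ref{lem:right-jump} applied at a Nash equilibrium forces every replenishment interval up to $\Omega(\sqrt{K_0})$, and the harmonic sum $\sum_i H_i \ge \ln\gamma_d$ delivers the gap. The executions differ in the adversarial instance and in how the key uniform lower bound on equilibrium intervals is obtained. The paper sets $H_i=w_i/\sum_{j\le i}w_j$ and proves $T^w_i\ge 1/\sqrt{2}$ by a downward induction on $i$, using the inductive hypothesis to argue $N[T^w_k;T^w_{-k}]\subseteq\{1,\dots,k\}$; you instead set $H_i=1/i$ and run a single extremal argument on the argmin set $I$: taking the largest index $i_k\in I$ and combining $|I|\le i_k$ with the sortedness of $w$ gives $\sum_{j\in I}w_j\le i_k w_{i_k}$, hence $T_{\min}\ge\sqrt{K_0/2}$ with no induction. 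Your variant is arguably cleaner, keeps $h_i>0$ even when some weights vanish (an edge case the paper's formula $h_i=w_i/\sum_{j\le i}w_j$ does not handle), and pins $\gamma_d=n$ exactly rather than only $\gamma_d\le n$. One small caveat: asserting $C(T^c)=2\sqrt{K_0 H(N)}$ by ``absorbing'' the POT rounding into $B$ is a bit cavalier, but harmless --- even with the paper's rounding bound $C(T^c)\le\frac{3}{\sqrt{2}}\sqrt{K_0 H(N)}$ your chain still yields the ratio $\sqrt{H(N)}/3\ge\sqrt{\ln\gamma_d}/3$.
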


\subsection{Proportional Sharing by Estimation of $H_i$}
\label{sec:estimation} 

We have shown that when $H_i$ is known, \WPSh achieves a near-optimal PoS of 1.25, whereas when $H_i$ is unknown, the PoS of any \WPSshort rule is at least $\sqrt{\ln\gamma_d}/3$.
This contrast highlights the critical role of $H_i$ (or equivalently $h_i$, since $d_i$ is public) in maintaining efficiency.

In practice, $H_i$ may not be known precisely but can often be estimated.
To analyze this setting, let $\hat{h}_i$ denote an estimate of retailer $i$'s true holding cost rate~$h_i$.
We quantify estimation accuracy via the maximum multiplicative error
$\epsilon_{\max} \coloneqq \max_{i \in N} \max\{\frac{\hat{h}_i}{h_i},\frac{h_i}{\hat{h}_i}\}$, which captures the worst-case ratio between estimates and true values.
Using weights $w_i=\frac{1}{2}\hat{h}_id_i$, the ratio $\gamma_w=\frac{\max_i H_i/w_i}{\min_i H_i/w_i}=\frac{\max_i h_i/\hat{h}_i}{\min_i h_i/\hat{h}_i}$ directly captures the estimation accuracy.
Crucially, we have $\gamma_w \leq \epsilon_{\max}^2$.
We refer to the resulting \WPSshort rule with weights $w_i=\frac{1}{2}\hat{h}_id_i$ as \WPShat.
By Proposition~\ref{prop:jump-bound} and Theorem~\ref{thm:general-POS}, the PoS of \WPShat can be upper bounded in terms of $\epsilon_{\max}$.

\begin{corollary}
\label{cor:estimation}
For \WPShat, the price of stability is bounded by $\min\big\{4\sqrt{\lceil 2\log \epsilon_{\max} \rceil}+\frac{1}{\sqrt{2}}+1, \  2\sqrt{1+\epsilon_{\max}^2}\big\}$.
\end{corollary}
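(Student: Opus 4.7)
The plan is to observe that \WPShat is a special instance of the \WPSshort class, with weight vector $w_i=\tfrac{1}{2}\hat{h}_i d_i$, so we can directly invoke the two general upper bounds already proved for arbitrary \WPSshort rules, namely Proposition~\ref{prop:jump-bound} and Theorem~\ref{thm:general-POS}. The entire task reduces to expressing the relevant parameter $\gamma_w$ in terms of the estimation-error quantity $\epsilon_{\max}$ and then substituting it into the existing bounds.

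First, I would compute $H_i/w_i$: since $H_i=\tfrac{1}{2}h_id_i$ and $w_i=\tfrac{1}{2}\hat{h}_id_i$, the $d_i$ factors cancel and we get $H_i/w_i=h_i/\hat{h}_i$. Therefore
\[
\gamma_w=\frac{\max_{i\in N} h_i/\hat{h}_i}{\min_{i\in N} h_i/\hat{h}_i}.
\]
By the definition of $\epsilon_{\max}$, for every retailer $i$ we have $h_i/\hat{h}_i\in[1/\epsilon_{\max},\epsilon_{\max}]$, so the numerator is at most $\epsilon_{\max}$ and the denominator is at least $1/\epsilon_{\max}$, yielding $\gamma_w\le \epsilon_{\max}^2$ as claimed in the paragraph preceding the statement.

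Next, I would plug this bound into the two theorems. Proposition~\ref{prop:jump-bound} gives a PoS upper bound of $2\sqrt{1+\gamma_w}\le 2\sqrt{1+\epsilon_{\max}^2}$, which handles the second term in the minimum. Theorem~\ref{thm:general-POS} gives a PoS upper bound of $4\sqrt{\lceil\log\gamma_w\rceil}+\tfrac{1}{\sqrt{2}}+1$. Since $\log\gamma_w\le\log(\epsilon_{\max}^2)=2\log\epsilon_{\max}$ and the ceiling and square root are both monotone nondecreasing, this is at most $4\sqrt{\lceil 2\log\epsilon_{\max}\rceil}+\tfrac{1}{\sqrt{2}}+1$, handling the first term. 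Since both bounds apply simultaneously, the PoS is bounded by their minimum, which is exactly the claim.

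There is essentially no hard step here: the corollary is a clean specialization of earlier general results, and the only potentially subtle point is the elementary verification $\gamma_w\le \epsilon_{\max}^2$, which follows immediately from the multiplicative two-sided definition of $\epsilon_{\max}$. I would therefore keep the writeup short, emphasizing the substitution rather than re-deriving anything.
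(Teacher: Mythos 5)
Your proposal is correct and follows exactly the paper's own argument: identify $H_i/w_i=h_i/\hat{h}_i$, deduce $\gamma_w\le\epsilon_{\max}^2$ from the two-sided definition of $\epsilon_{\max}$, and substitute into Proposition~\ref{prop:jump-bound} and Theorem~\ref{thm:general-POS} using monotonicity of the ceiling and square root. Nothing is missing.
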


This result implies that even when the exact value of $H_i$ is unknown, as long as the estimation error is not big, the efficiency loss caused by incomplete information remains very well controlled.

\section{Conclusion}

We have investigated a broad class of \WPSshort rules in the non-cooperative game-theoretical joint replenishment model, with the goal of improving efficiency of decentralized systems.
After establishing both the existence and efficient compatibility of payoff dominant Nash equilibria for all rules in this class, 
we proposed a suite of theoretically optimal or near-optimal cost allocation rules tailored to various practical and informational constraints.
Specifically, we introduced \WPSh, a simple and practical rule that avoids the drawbacks of free-riders and global information dependence inherent in the theoretically optimal benchmark rule \WPSo, while simultaneously achieving a near-optimal PoS of at most 1.25.
To further accommodate scenarios without access to private information, we proposed two additional practical rules: \WPSd, which relies only on publicly available retailer demand rates, and \WPShat, which effectively incorporates estimated retailer holding cost rates. 
Both rules ensure strong coordination performance under realistic information constraints.
Moreover, we complemented these positive results by providing lower bounds on the best achievable PoS under corresponding informational constraints, thereby providing a complete characterization of efficiency limits for decentralized joint replenishment systems.
A promising direction for future research is to extend our results to joint replenishment models with general setup costs, 
particularly those involving submodular setup costs~\citep{he2012polymatroid,federgruen1992joint,zhang_cost_2009}.

\bibliography{JRP}

\newpage
\appendix
\section{Missing Proofs in Section \ref{sec:WPS}}

\subsection{Proof of Lemma \ref{lem:no-back-jump}}
\label{app:lem:no-back-jump}

\begin{proof}
According to Lemma \ref{lem:right-jump}, $f_i(2T^1_i;T^1_{-i}) < f_i(T^1_i;T^1_{-i})$ implies that
\begin{align*}
    T^2_i=T^1_i < \sqrt{\frac{1}{2}\frac{K_i+\frac{w_i}{\sum_{j \in N[T^1_i;T^1_{-i}]}w_j}K_0}{H_i}} 
     \leq  \sqrt{\frac{1}{2}\frac{K_i+\frac{w_i}{\sum_{j \in N[T^2_i;T^2_{-i}]}w_j}K_0}{H_i}},
\end{align*}
which implies that $f_i(2T^2_i;T^2_{-i}) < f_i(T^2_i;T^2_{-i})$.
\hfill$\square$ \end{proof}

\subsection{Proof of Lemma \ref{lem:left-jump}}
\label{app:lem:left-jump}

\begin{proof}
The proof is analogous to the proof of Lemma \ref{lem:right-jump}.
When retailer $i$ decreases its replenishment interval from $T_i$ to $T_i/2$, its holding cost is decreased by $H_iT_i/2$ while its setup cost is increased by 
\[
\frac{K_i+\frac{w_i}{\sum_{j \in N[T_i/2;T_{-i}]}w_j}K_0}{T^c_i}.
\]
This change reduces retailer $i$'s total cost if and only if
\[
H_iT_i/2>\frac{K_i+\frac{w_i}{\sum_{j \in N[T_i/2;T_{-i}]}w_j}K_0}{T_i}, 
\]
or
\[
T_i> \sqrt{2\frac{K_i+\frac{w_i}{\sum_{j \in N[T_i/2;T_{-i}]}w_j}K_0}{H_i}}.
\]
Thus statements (2) and (3) are equivalent. 

Since (2) directly implies (1), it remains to show that (1) implies (2).
Suppose that retailer $i$ can reduce its cost by decreasing its replenishment interval from $T_i$ to $2^{-z}T_i$ for some $z \in \mathbb{Z^+}$.
After this change, the holding cost of retailer $i$ is decreased by $(1-2^{-z})H_iT_i$ while the setup cost is increased by at least
\[
\frac{K_i+\frac{w_i}{\sum_{j \in N[2^{-z}T_i;T_{-i}]}w_j}K_0}{2^{1-z}T_i}.
\]
Then
\[
(1-2^{-z})H_iT_i>\frac{K_i+\frac{w_i}{\sum_{j \in N[2^{-z}T_i;T_{-i}]}w_j}K_0}{2^{1-z}T_i}, 
\]
or
\[
T_i>\sqrt{\frac{2^{z-1}}{1-2^{-z}}\frac{K_i+\frac{w_i}{\sum_{j \in N[2^{-z}T_i;T_{-i}]}w_j}K_0}{H_i}} \geq \sqrt{2\frac{K_i+\frac{w_i}{\sum_{j \in N[T_i/2;T_{-i}]}w_j}K_0}{H_i}},
\]
which means that retailer $i$ can also reduce its cost by halving its replenishment interval.
Thus, (1) implies (2).
\hfill$\square$ \end{proof}

\subsection{Proof of Lemma \ref{lem:no-left-jump}}
\label{app:lem:no-left-jump}

\begin{proof}
Suppose, for contradiction,  that retailer $i$ can reduce its cost by decreasing its replenishment interval from $T^c_i$ to $2^{-z}T^c_i$ for some $z \in \mathbb{Z^+}$.
If $i \in U$, then $T^c_i  \leq \sqrt{2s}$, and $N[2^{-z}T^c_i;T_{-i}]=\{i\}$.
According to Lemma \ref{lem:left-jump}, we have
\[
\sqrt{2s}\geq T^c_i >\sqrt{2\frac{K_i+K_0}{H_i}},
\]
which contradicts with the definition of $s$ in (\ref{eq:s-def}).
If $i \in V$, then $T^c_i  \leq \sqrt{2\frac{K_i}{H_i}}$.
According to Lemma \ref{lem:left-jump}, we have
\[
\sqrt{2\frac{K_i}{H_i}}\geq T^c_i >\sqrt{2\frac{K_i+\frac{w_i}{\sum_{j \in N[2^{-z}T^c_i;T_{-i}]}w_j}K_0}{H_i}},
\]
which is a contradiction since $w_i \geq 0$.
Thus we get a contradiction for any retailer $i \in N$. 
\hfill$\square$ \end{proof}

\subsection{Proof of Lemma \ref{lem:alg1-NE}}
\label{app:lem:alg1-NE}

\begin{proof}
Since no retailer halves its replenishment interval (Lemma~\ref{lem:no-left-jump-ALG}) and retailer $i$'s strategy set is 
\[
\Gamma_i=\{T_i \mid \sqrt{K_i / (2H_i)} \le T_i \le \sqrt{2(K_0+K_i) / H_i} \text{ such that } T_i=2^{z_i}B, z_i \in \mathbb{Z}\},
\]
in Algorithm~1 retailer $i$ can make at most $\log_2 \sqrt{4\frac{K_0+K_i}{K_i}}$ number of changes.
Overall, there are at most 
$O\left(n\log_2\left(1+\frac{K_0}{\min_i K_i}\right)\right)$
changes.
In each round checking every $i \in N$, at least one retailer doubles its replenishment interval.
So we get the running time $O\left(n^2\log_2\left(1+\frac{K_0}{\min_i K_i}\right)\right)$.
Since no retailer wants to double or halve its replenishment interval at the final state, according to Lemma~\ref{lem:right-jump} and Lemma~\ref{lem:left-jump}, the final state is a Nash equilibrium. 
\hfill$\square$ \end{proof}

\subsection{Proof of Proposition \ref{prop:POA-bound-upper}}
\label{app:prop:POA-bound-upper}

\begin{proof}
The share of the major setup cost for any retailer $i$ is at most $\frac{K_0}{T_i}$ under any joint replenishment policy $T$, i.e., 
\[
x_i(T_i;T_{-i}) \leq \frac{K_0}{T_i}, \forall i \in N, \forall T \in \Gamma.
\]
Let $T^*$ be an arbitrary Nash equilibrium. We have
\begin{align*}
C(T^*) &=\sum_{i=1}^n f_i(T_i^*;T_{-i}^*) \leq \sum_{i=1}^n \min_{T_i \in \Gamma_i} f_i(T_i;T_{-i}^*) \\
&= \sum_{i=1}^n \min_{T_i \in \Gamma_i} \left(H_iT_i+\frac{K_i}{T_i}+x_i(T_i^*;T_{-i}^*)\right) \\
&\leq \sum_{i=1}^n \min_{T_i \in \Gamma_i} \left(H_iT_i+\frac{K_i+K_0}{T_i}\right). 
\end{align*}
On the other hand, for the social optimum we have
\begin{align*}
C(T^c) &=\frac{K_0}{T_{\min}^c}+\sum_{i=1}^n \left(H_iT_i^c+\frac{K_i}{T_i^c}\right) \\
&\geq \sum_{i=1}^n \left(H_iT_i^c+\frac{K_i+K_0/n}{T_i^c}\right) \\
&\geq \sum_{i=1}^n \min_{T_i \in \Gamma_i} \left(H_iT_i+\frac{K_i+K_0/n}{T_i}\right). 
\end{align*}
So the PoA is at most 
\begin{align*}
    \frac{C(T^*)}{C(T^c)} &\leq \frac{\sum_{i=1}^n \min_{T_i \in \Gamma_i} \left(H_iT_i+\frac{K_i+K_0}{T_i}\right)}{\sum_{i=1}^n \min_{T_i \in \Gamma_i} \left(H_iT_i+\frac{K_i+K_0/n}{T_i}\right)}  
    \leq \frac{3}{2\sqrt{2}}\sqrt{n},
\end{align*}
where the last inequality follows from the fact that $2\sqrt{ab} \leq \min_{y\in\{2^z \mid z \in \mathbb{Z}\}} ay+\frac{b}{y} \leq \frac{3}{\sqrt{2}}\sqrt{ab}$ for any $a,b \geq 0$.
\hfill$\square$ \end{proof}

\subsection{Proof of Proposition \ref{prop:POA-bound}}
\label{app:prop:POA-bound}

\begin{proof}
Let $K_0=1$. Let $K_i=0$, $h_i=1$, $d_i=2$, and $H_i=\frac{1}{2}h_id_i=1$ for each retailer $i \in N$. 
For the optimal centralized policy $T^c$ we have $s=\frac{1}{n}$ and $C(T^c) \leq (\sqrt{2}+\frac{1}{\sqrt{2}})\sqrt{n}=\frac{3}{\sqrt{2}}\sqrt{n}$.
On the other hand, for any rule satisfying symmetry, there is a Nash equilibrium $\tilde{T}$ with $\tilde{T}_i=1$.
Indeed, the cost of every retailer under $\tilde{T}$ is $H_i\tilde{T}_i+\frac{1}{n}\frac{K_0}{\tilde{T}_i}=1+\frac{1}{n}$ by symmetry.
If a retailer doubles its replenishment interval, its cost will be at least $2H_i\tilde{T}_i=2>1+\frac{1}{n}$.
If a retailer halves its replenishment interval, its cost will be at least $\frac{1}{2}H_i\tilde{T}_i+\frac{1}{2}\frac{K_0}{\tilde{T}_i/2}=1.5 \geq 1+\frac{1}{n}$.
Since $C(\tilde{T})=n+1$, the PoA is at least $\frac{n+1}{\frac{3}{\sqrt{2}}\sqrt{n}} \geq \frac{\sqrt{2}}{3}\sqrt{n}$.
\hfill$\square$ \end{proof}

\newpage
\section{Missing Proofs in Section \ref{sec:efficient-rules}}

\subsection{Proof of Proposition \ref{prop:k-know-POS=1}}
\label{app:prop:k-know-POS=1}

\begin{proof}
We show that the optimal centralized policy is a Nash equilibrium under the defined rule.
Given an instance profile $(K_0,N,\{K_i\}_{i \in N},\{H_i\}_{i \in N})$, we compute the partition $N=U \cup V$ according to (\ref{eq:U-V}).
Due to the selected value for $w^*_i$, for any retailers $i \in U$, we have
\begin{align}
\frac{K_i}{T^c_i}+x_i(T^c_i;T^c_{-i})
&=\frac{K_i}{T^c_{\min}}+\frac{w^*_i}{\sum_{j \in U}w^*_j}\frac{K_0}{T^c_{\min}} \nonumber\\
&=\frac{K_i}{T^c_{\min}}+\frac{H_is-K_i}{K_0}\frac{K_0}{T^c_{\min}} \nonumber\\
&=\frac{H_is}{T^c_{\min}} \nonumber\\
&=\frac{H_i}{\sum_{j \in U}H_j}\frac{K_0+\sum_{j \in U}K_j}{T^c_{\min}}.\label{eq:rule-H+K-setup}
\end{align}
Notice that $\frac{K_0+\sum_{j \in U}K_j}{T^c_{\min}}$ is the total setup cost for all retailers in $U$, consisting of the major setup cost and the minor setup costs of retailers in $U$.
The above equation shows that, under \WPSo, retailers in $U$ share their total setup cost in proportion to their $H_i$.
Since $T_i^c=T_{\min}^c$ for each $i \in U$, it follows that
\begin{align}\label{eq:rule-H+K-holding}
f_i(T^c_i,T^c_{-i})
& =H_iT^c_{\min}+\frac{K_i}{T^c_{\min}}+x_i(T^c_i;T^c_{-i}) \nonumber\\
& \overset{(\ref{eq:rule-H+K-setup})}{=}\frac{H_i}{\sum_{j \in U}H_j}\left(\sum_{j \in U}H_jT^c_{\min}+\frac{K_0+\sum_{j \in U}K_j}{T^c_{\min}}\right).
\end{align}
That is, retailers in $U$ share their total cost in proportion to their $H_i$.
In addition, for each retailer $i \in U$, the ratio between its holding cost and its setup cost (including its minor setup cost and its share of the major setup cost) is the same as the ratio for the whole set $U$.
Since $T^c_{\min}$ minimizes the total cost for $U$, we can show that no retailer from $U$ can reduce its cost by doubling its replenishment interval.
Indeed, if retailer $i \in U$ doubles its replenishment interval, then its share for the major setup cost will be
\begin{align*}
x_i(2T^c_i,T^c_{-i}) 
&=\frac{w^*_i}{\sum_{j \in N[2T_i^c;T_{-i}^c]}w^*_j}\frac{K_0}{2T^c_{\min}} \\
&=\frac{w^*_i}{\sum_{j \in U}w^*_j}\frac{K_0}{2T^c_{\min}} \\
&=\frac{x_i(T^c_i,T^c_{-i})}{2},
\end{align*}
where the second equation follows by that $w^*_j=0$ for any $j \in N[2T_i^c;T_{-i}^c] \setminus U \subseteq V$.
Then its overall cost will be
\begin{align*}
f_i(2T^c_i,T^c_{-i})
&=2H_iT^c_i+\frac{K_i}{2T^c_i}+x_i(2T^c_i,T^c_{-i}) \\
&=2H_iT^c_i+\frac{K_i}{2T^c_i}+\frac{x_i(T^c_i,T^c_{-i})}{2} \\
&=
\frac{H_i}{\sum_{j \in U}H_j}\left(2\sum_{j \in U}H_jT^c_{\min}+\frac{K_0+\sum_{j \in U}K_j}{2T^c_{\min}}\right) \\
&\geq \frac{H_i}{\sum_{j \in U}H_j}C(T^c) \overset{(\ref{eq:rule-H+K-holding})}{=}f_i(T^c_i,T^c_{-i}),
\end{align*}
where the last inequality follows from that $T^c$ is an optimal centralized policy.
So no retailer from $U$ can benefit by doubling its replenishment interval.
According to Lemma \ref{lem:no-left-jump}, no retailer from $U$ can reduce its cost by decreasing its replenishment interval.
For each retailer in $V$, since its total cost already matches its optimal cost without sharing $K_0$, so it has no incentive to change.
Therefore, $T^c$ is a Nash equilibrium and hence the PoS is 1.
\hfill$\square$ \end{proof}

\subsection{Proof of Lemma \ref{lem:no-double}}
\label{app:lem:no-double}

\begin{proof}
Denote $P= \{j \in N \mid j \leq i \text{ and } T^c_j=T^c_i\}$.
We show that $T^h_i=T^c_i$ implies $T^h_{i-1}=T^c_{i-1}$.
Then applying the same argument for each pair of consecutive retailers from $P$ we will get $T^h_j=T^c_j$ for any $j \in P$.
Since $T^h_i=T^c_i$, we know that retailer $i$ does not double its replenishment interval under policy $T^{i+1}$.
According to Lemma \ref{lem:right-jump}, we have
\[
T^c_i=T^{i+1}_i \geq \sqrt{\frac{1}{2}\left(\frac{K_i}{H_i}+\frac{K_0}{\sum_{j \in N[T^{i+1}_i;T^{i+1}_{-i}]}H_j}\right)}.
\]
Since $T^i_{i-1}=T^i_i$ (by the definition of $P$ and $i-1 \in P$) and $T^{i+1}=T^i$ (since $i$ does not double its replenishment interval), we have $N[T^{i+1}_i;T^{i+1}_{-i}] = N[T^{i}_i;T^{i}_{-i}] = N[T^{i}_{i-1};T^{i}_{-(i-1)}]$.
Then
\begin{align*}
    T^c_{i-1}=T^c_i &\geq \sqrt{\frac{1}{2}\left(\frac{K_i}{H_i}+\frac{K_0}{\sum_{j \in N[T^{i+1}_i;T^{i+1}_{-i}]}H_j}\right)}\\
    & \geq \sqrt{\frac{1}{2}\left(\frac{K_{i-1}}{H_{i-1}}+\frac{K_0}{\sum_{j \in N[T^{i}_{i-1};T^{i}_{-(i-1)}]}H_j}\right)}.
\end{align*}
By Lemma \ref{lem:right-jump}, retailer $i-1$ cannot reduce its cost by doubling its replenishment interval under $T^i$, so $T^h_{i-1}=T^c_{i-1}$.
\hfill$\square$ \end{proof}

\subsection{Proof of Lemma \ref{lem:same-N}}
\label{app:lem:same-N}

\begin{proof}
We partition the retailer set $N$ into four subsets:
$S^1=\{j \in N \mid j \leq i \text{ and } T^i_j<T^i_i\}$, $S^2=\{j \in N \mid j \leq i \text{ and } T^i_j=T^i_i\}$, $S^3=\{j \in N \mid j > i \text{ and } T^i_j=T^i_i\}$, and $S^4=\{j \in N \mid j > i \text{ and } T^i_j> T^i_i\}$.
By definition, $N[T^i_i;T^i_{-i}]=S^1 \cup S^2 \cup S^3$.
Since the replenishment intervals of retailers from $S^3 \cup S^4$ do not change from $T^i$ to $T^h$, we have $S^3 \subseteq N[T^h_i;T^h_{-i}]$ and $S^4 \cap N[T^h_i;T^h_{-i}]=\emptyset$.
Then it suffices to show $S^1 \cup S^2 \subseteq N[T^h_i;T^h_{-i}]$, or equivalently, $T^h_j \leq T^h_i$ for any $j \in S^1 \cup S^2$. 
Since $T^h_i=T^c_i$, we have $T^h_i=T^c_i=T^i_i$.
According to Lemma \ref{lem:no-double} and $T^h_i=T^c_i$, we have $T^h_j=T^i_j=T^i_i=T^h_i$ for any $j \in S^2$.
For any $j \in S^1$, from $T^i_j<T^i_i$ we have $T^i_j \leq \frac{1}{2}T^i_i= \frac{1}{2}T^h_i$.
Since every retailer doubles its replenishment interval at most once during Algorithm~2, we have $T^h_j \leq 2T^i_j \leq T^h_i$ for any $j \in S^1$.
\hfill$\square$ \end{proof}

\subsection{Proof of Theorem \ref{thm:NE-alg}}
\label{app:thm:NE-alg}

\begin{proof}
The running time of Algorithm~2 is clear.
We first show that $T^h$ is a Nash equilibrium.
According to Lemma~\ref{lem:right-jump} and Lemma~\ref{lem:left-jump}, it suffices to show that no retailer has an incentive to double or halve its replenishment interval under $T^h$.

For any retailer $i \in N$ with $T^h_i=T^c_i$, according to Lemma~\ref{lem:no-left-jump}, retailer cannot reduce its cost by decreasing its replenishment interval under $T^h$.
For the other direction, it follows from $T^h_i=T^c_i$ that retailer $i$ cannot reduce its cost by doubling its replenishment interval under $T^{i+1}=T^i$.
Since $N[T^i_i;T^i_{-i}]=N[T^h_i;T^h_{-i}]$ (Lemma~\ref{lem:same-N}), according to Lemma~\ref{lem:no-back-jump}, retailer $i$ cannot reduce its cost by doubling its replenishment interval under $T^h$.

For any retailer $i \in N$ with $T^h_i=2T^c_i$, since retailer $i$ doubles its replenishment interval under $T^{i+1}$, we have 
$
f_i(2T^{i+1}_i;T^{i+1}_{-i}) < f_i(T^{i+1}_i;T^{i+1}_{-i})
$.
Since $T^h \geq T^{i+1}$, we have $N[T^{i+1}_i;T^h_{-i}] \subseteq N[T^{i+1}_i;T^{i+1}_{-i}]$.
According to Lemma \ref{lem:no-back-jump}, we get
$
f_i(2T^{h}_i;T^{h}_{-i}) < f_i(T^{h}_i;T^{h}_{-i})
$,
which means retailer $i$ cannot reduce its cost by halving its replenishment interval under $T^h$.
For the other direction, for any retailer $i \in N$ with $T^h_i=2T^c_i$, we have $U \subseteq N[T^h_i;T^h_{-i}]$.
If $i \in U$, then
\begin{align*}
    T^h_i =2T^c_i=2T^c_{\min}  \geq \sqrt{2\frac{K_0+\sum_{j \in U}K_j}{\sum_{j \in U}H_j}} 
    \geq \sqrt{\frac{K_i}{H_i}+\frac{K_0}{\sum_{j \in U}H_j}}
    \geq \sqrt{\frac{K_i}{H_i}+\frac{K_0}{\sum_{j \in N[T^h_i;T^h_{-i}]}H_j}}.
\end{align*}
Similarly, if $i \in V$, then
\begin{align*}
    T^h_i=2T^c_i\geq \sqrt{2\frac{K_i}{H_i}}  \geq \sqrt{\frac{K_i}{H_i}+\frac{K_0+\sum_{j \in U}K_j}{\sum_{j \in U}H_j}}  \geq \sqrt{\frac{K_i}{H_i}+\frac{K_0}{\sum_{j \in N[T^h_i;T^h_{-i}]}H_j}}.
\end{align*}
Then according to Lemma \ref{lem:right-jump}  and $w_j=H_j$ for every $j \in N$, retailer $i$ cannot reduce its cost by doubling its replenishment interval under $T^h$.

To show that $T^h$ is payoff dominant, notice that the proof of Theorem \ref{thm:general-NE-exist} does not depend on the update order of retailers in Algorithm~1.
Furthermore, no retailer ever halves its replenishment interval during Algorithm~1.
Therefore, Algorithm~2 can be seen as a special version of Algorithm~1 with a special update order.
Then, according to Theorem \ref{thm:general-NE-exist}, $T^h$ found by Algorithm~2 is a payoff dominant Nash equilibrium.
\hfill$\square$ \end{proof}

\subsection{Proof of Lemma \ref{lem:POT-left}}
\label{app:lem:POT-left}

\begin{proof}
We show the claim for $i \in U$. The other case for $i \in V$ is analogous.
According to Lemma~\ref{lem:no-double}, if there exists some retailer $i \in U$ such that $T^h_i=2T^c_i$, then it must holds that $T^h_{i^*}=2T^c_{i^*}$, where $i^*$ is the first retailer from $U$ considered by Algorithm~2.
Thus it suffices to prove the claim for retailer $i^*$.
Since $T^c_{i^*}=T^c_{\min} \overset{POT}{=} \sqrt{s}$, we have $\alpha \in [\frac{1}{\sqrt{2}},\sqrt{2}]$.
Suppose that $\alpha \in [1,\sqrt{2}]$, then 
\begin{align*}
    T^c_{i^*} \geq \sqrt{s}&=\sqrt{\frac{K_0+\sum_{i \in U}K_j}{\sum_{i \in U}H_j}}
    \geq \sqrt{\frac{1}{2}\left(\frac{K_{i^*}}{H_{i^*}}+\frac{K_0}{\sum_{i \in U}H_j}\right)}.
\end{align*}
On the other hand, since $T^h_{i^*}=2T^c_{i^*}$, retailer $i^*$ doubles its replenishment interval during Algorithm~2.
By the choice of $i^*$, the set of retailers with replenishment interval at most $T^c_{i^*}$ just before the doubling is $U$. 
Then according to Lemma \ref{lem:right-jump}, we have
\[
T^c_{i^*}<
\sqrt{\frac{1}{2}\left(\frac{K_{i^*}}{H_{i^*}}+\frac{K_0}{\sum_{i \in U}H_j}\right)},
\]
which is a contradiction.
\hfill$\square$ \end{proof}

\subsection{Proof of Theorem \ref{thm:POS}}
\label{app:thm:POS}

\begin{proof}
We divide the total cost under $T^h$ into two parts
\begin{align*}
    C(T^h)= \left(\frac{K_0}{T^h_{\min}}+\sum_{j \in U}\left(H_jT^h_j+\frac{K_j}{T^h_j}\right)\right)+  \sum_{j \in V}\left(H_jT^h_j+\frac{K_j}{T^h_j}\right).
\end{align*}
We show that each component of $C(T^h)$ can be upper bounded by 1.25 times the corresponding component under the optimal centralized policy $T^c$.

We first consider the second part of set $V$.
For any retailer $j\in V$, let $T^c_j=\alpha_j \sqrt{\frac{K_j}{H_j}}$.
If $T^h_j=T^c_j$, then $H_jT^h_j+\frac{K_j}{T^h_j}$ remains the same as under $T^c$.
Thus it suffices to consider the case with $T^h_j=2T^c_j$, where $\alpha_j \in [\frac{1}{\sqrt{2}},1)$ according to Lemma \ref{lem:POT-left}.
Notice that $H_j\sqrt{\frac{K_j}{H_j}}=\frac{K_j}{\sqrt{\frac{K_j}{H_j}}}$.
Then
\begin{align}\label{eq:wpsh-pos-relax-4}
    \frac{H_jT^h_j+\frac{K_j}{T^h_j}}{H_jT^c_j+\frac{K_j}{T^c_j}}
&=\frac{2H_jT^c_j+\frac{K_j}{2T^c_j}}{H_jT^c_j+\frac{K_j}{T^c_j}}
=\frac{2\alpha_j H_j\sqrt{\frac{K_j}{H_j}}+\frac{K_j}{2\alpha_j \sqrt{\frac{K_j}{H_j}}}}{\alpha_j H_j\sqrt{\frac{K_j}{H_j}}+\frac{K_j}{\alpha_j \sqrt{\frac{K_j}{H_j}}}} \nonumber \\
&=\frac{2\alpha_j+\frac{1}{2\alpha_j}}{\alpha_j+\frac{1}{\alpha_j}}
=2\frac{\alpha_j^2+\frac{1}{4}}{\alpha_j^2+1}
\leq \frac{5}{4},
\end{align}
where the last inequality follows from $\alpha_j<1$.

Next we analyze the first part $\frac{K_0}{T^h_{\min}}+\sum_{j \in U}\left(H_jT^h_j+\frac{K_j}{T^h_j}\right)$.
Let $T^c_{\min}=\alpha \sqrt{s}$, where $s=\frac{K_0+\sum_{i \in U}K_j}{\sum_{i \in U}H_j}$.
According to Lemma \ref{lem:POT-left}, if $\alpha \geq 1$, then $T^h_i=T^c_i$ for every $i \in U$ and hence the first part remains the same as under $T^c$.
Thus it suffices to consider the case with $\alpha \in [\frac{1}{\sqrt{2}},1)$.
Notice that 
\begin{align}\label{eq:wpsh-pos-equal}
\sum_{j \in U}H_j\sqrt{s}=\frac{K_0+\sum_{j \in U}K_j}{\sqrt{s}}.
\end{align}
Let $U_0\subseteq U$ be the set of retailers with $T^h_i=T^c_{\min}$.
If $U_0=\emptyset$, then each retailer $i \in U$ satisfies $T^h_i=2T^c_i$, and the analysis will be similar as before for retailers in $V$.
When $U_0 \neq \emptyset$, we require an additional step to handle the retailers in $U_0$, as detailed below.
For any retailer $i \in U_0$, if it doubles its replenishment interval, its holding cost will be increased by $H_iT^c_{\min}$ while its setup cost will be reduced by $\frac{H_i}{\sum_{j \in U_0}H_j}\frac{K_0}{2T^c_{\min}}+\frac{K_i}{2T^c_{\min}}$.
Since $i \in U_0$, we have $\frac{H_i}{\sum_{j \in U_0}H_j}\frac{K_0}{2T^c_{\min}}+\frac{K_i}{2T^c_{\min}}<H_iT^c_{\min}$.
Summing them up for all agent in $U_0$ we get
\begin{align}\label{eq:wpsh-pos-relax-1}
\frac{K_0}{2T^c_{\min}}+\sum_{i \in U_0}\frac{K_i}{2T^c_{\min}}<\sum_{i \in U_0}H_iT^c_{\min}.
\end{align}
Then we can upper bound the first part as follows:
\begin{align}\label{eq:wpsh-pos-relax-2}
\frac{K_0}{T^h_{\min}}+\sum_{j \in U}\left(H_jT^h_j+\frac{K_j}{T^h_j}\right)
& = \frac{K_0}{2T^c_{\min}} +\sum_{j \in U_0}\left(H_jT^c_{\min}+\frac{K_j}{T^c_{\min}}\right)
+ \frac{K_0}{2T^c_{\min}} +\sum_{j \in U \setminus U_0}\left(2H_jT^c_{\min}+\frac{K_j}{2T^c_{\min}}\right) \nonumber \\ 
& < \sum_{j \in U_0}\left(2H_jT^c_{\min}+\frac{K_j}{2T^c_{\min}}\right) + \frac{K_0}{2T^c_{\min}} +\sum_{j \in U \setminus U_0}\left(2H_jT^c_{\min}+\frac{K_j}{2T^c_{\min}}\right) \nonumber \\
& = \sum_{j \in U}2H_jT^c_{\min}+\frac{K_0+\sum_{j \in U}K_j}{2T^c_{\min}}, 
\end{align}
where the inequality follows from (\ref{eq:wpsh-pos-relax-1}).
Now we have that
\begin{align}\label{eq:wpsh-pos-relax-3}
\frac{\frac{K_0}{T^h_{\min}}+\sum_{j \in U}\left(H_jT^h_j+\frac{K_j}{T^h_j}\right)}{\frac{K_0}{T^c_{\min}}+\sum_{j \in U}\left(H_jT^c_j+\frac{K_j}{T^c_j}\right)}
& \overset{(\ref{eq:wpsh-pos-relax-2})}{<}\frac{\sum_{j \in U}2H_jT^c_{\min}+\frac{K_0+\sum_{j \in U}K_j}{2T^c_{\min}}}{\sum_{j \in U}H_jT^c_{\min}+\frac{K_0+\sum_{j \in U}K_j}{T^c_{\min}}} \nonumber \\ 
&=\frac{\sum_{j \in U}2\alpha H_j\sqrt{s}+\frac{K_0+\sum_{j \in U}K_j}{2\alpha \sqrt{s}}}{\sum_{j \in U}\alpha H_j\sqrt{s}+\frac{K_0+\sum_{j \in U}K_j}{\alpha \sqrt{s}}} \nonumber \\
&\overset{(\ref{eq:wpsh-pos-equal})}{=}\frac{2\alpha+\frac{1}{2\alpha}}{\alpha+\frac{1}{\alpha}}
=2\frac{\alpha^2+\frac{1}{4}}{\alpha^2+1}
\leq \frac{5}{4},
\end{align}
where the last inequality follows from $\alpha<1$.


Then the ratio between the total cost under policy $T^h$ and that under $T^c$ is bounded from above by 
\begin{align*}
 \frac{C(T^h)}{C(T^c)}
&=\frac{\frac{K_0}{T^h_{\min}}+\sum_{j \in N}\left(H_jT^h_j+\frac{K_j}{T^h_j}\right)}{\frac{K_0}{T^c_{\min}}+\sum_{j \in N}\left(H_jT^c_j+\frac{K_j}{T^c_j}\right)} \\
&=\frac{\frac{K_0}{T^h_{\min}}+\sum_{j \in U}\left(H_jT^h_j+\frac{K_j}{T^h_j}\right)+\sum_{j \in V}\left(H_jT^h_j+\frac{K_j}{T^h_j}\right)}{\frac{K_0}{T^c_{\min}}+\sum_{j \in U}\left(H_jT^c_j+\frac{K_j}{T^c_j}\right)+\sum_{j \in V}\left(H_jT^c_j+\frac{K_j}{T^c_j}\right)} \\
&\leq\frac{5}{4},
\end{align*}
where the final inequality is obtained by applying inequalities (\ref{eq:wpsh-pos-relax-3}) and (\ref{eq:wpsh-pos-relax-4}), together with the fact that for any positive numbers $a_i$ and $b_i$,
\[\frac{\sum_i a_i}{\sum_i b_i} \leq \max_i \frac{a_i}{b_i}.\]
\hfill$\square$ \end{proof}

\subsection{Proof of Proposition \ref{prop:k-unknow-POS>1}}
\label{app:prop:k-unknow-POS>1}

\begin{proof}
Fix an arbitrary \WPSshort rule that is independent of $K_i$.
Let the base planning period $B=1$.
We construct two instance profiles with two retailers, both having $K_0 = 5$ and $H_1 = H_2 = 1$.
We set $K_1=1$ and $K_2=6$ for the first instance, and $K'_1=6$ and $K'_2=1$ for the second.
Since the two instances differ only in $K_i$, which is assumed to be private information, the weight vector~$(w_1,w_2)$ must remain the same for them.
Assume without loss of generality that $w_1 \leq w_2$, and let us consider the first instance with $K_1=1$ and $K_2=6$.
Since $\sqrt{\frac{K_0+K_1}{H_1}}=\sqrt{\frac{K_0+K_1+K_2}{H_1+H_2}}=\sqrt{6}$, we have that in the optimal centralized policy, $T^c_{1}=T^c_{2}=T^c_{\min}  \overset {POT}{=} \sqrt{6} \overset {POT}{=} 2$ and the total cost is 
\[
C(T^c)=(H_1+H_2)T^c_{\min}+\frac{K_0+K_1+K_2}{T^c_{\min}}=10.
\]

Next we apply Algorithm~1 with order $(2,1)$ to find the payoff dominant Nash equilibrium.
For retailer 2 we have 
\[
\sqrt{\frac{1}{2}\frac{K_2+w_2K_0}{H_2}} \geq\sqrt{4.25} > 2=T^c_2, 
\]
which implies that it will double its replenishment interval according to Lemma \ref{lem:right-jump}.
After that, it is easy to verify that no agent can benefit by changing its replenishment interval.
Thus, we get the payoff dominant Nash equilibrium $T^w=(T^c_{\min},2T^c_{\min})=(2,4)$ with $C(T^w)=10.5$.
Then the PoS of the \WPSshort rule is at least
\[
\frac{C(T^w)}{C(T^c)}=\frac{10.5}{10}=1.05.
\]
\hfill$\square$ \end{proof}

\newpage
\section{Missing Proofs in Section \ref{sec:H-unkown}}

\subsection{Proof of Proposition \ref{prop:jump-bound}}
\label{app:prop:jump-bound}

\begin{proof}
Denote $k=\max_{i \in N}\frac{T^w_i}{T^c_i}$ and let $i_0$ be the first retailer who increases its replenishment interval by $k$ times during the process of Algorithm~1.
Denote $N_{i_0}$ the set of retailers who have replenishment interval at most $T^w_{i_0}/2$ just before $i_0$'s last jump.
By the choice of $i_0$ we have $U \subseteq N_{i_0}$.
Since $i_0$'s last jump reduces its cost, according to Lemma \ref{lem:right-jump}, we have 
\begin{align*}
\frac{T^w_{i_0}}{2}& \leq \sqrt{\frac{1}{2}\frac{K_{i_0}+\frac{w_{i_0}}{\sum_{j \in N_{i_0}}w_j}K_0}{H_{i_0}}} 
\leq \sqrt{\frac{1}{2}\frac{K_{i_0}+\frac{w_{i_0}}{\sum_{j \in U}w_j}K_0}{H_{i_0}}}. 
\end{align*}
It follows that
\begin{align}\label{eq:wpsd-jump-bound-1}
kT^c_{i_0}=T^w_{i_0}\leq \sqrt{2\frac{K_{i_0}+\frac{w_{i_0}}{\sum_{j \in U}w_j}K_0}{H_{i_0}}}. 
\end{align}
By the definition of $\gamma_w$, we have
\begin{align}\label{eq:wpsd-jump-bound-2}
\frac{\sum_{j \in U}H_j}{H_{i_0}}\frac{w_{i_0}}{\sum_{j \in U}w_j} &\leq \frac{\sum_{j \in U} \frac{w_{i_0}}{H_{i_0}} H_j}{\sum_{j \in U}w_j} 
\leq \gamma_w \frac{\sum_{j \in U} \frac{w_j}{H_j} H_j}{\sum_{j \in U}w_j} 
= \gamma_w. 
\end{align}
If $i_0 \in U$, then 
\begin{align}\label{eq:wpsd-jump-bound-U-1}
T^c_{i_0}=T^c_{\min} \geq \sqrt{\frac{1}{2}\frac{K_0+\sum_{j \in U}K_j}{\sum_{j \in U}H_j}},
\end{align}
and 
\begin{align}\label{eq:wpsd-jump-bound-U-2}
\frac{K_{i_0}}{H_{i_0}} \leq \frac{K_0+\sum_{j \in U}K_j}{\sum_{j \in U}H_j}.
\end{align}
So
\begin{align*}
k \overset{(\ref{eq:wpsd-jump-bound-1},\ref{eq:wpsd-jump-bound-U-1})}{\leq} \frac{\sqrt{2\frac{K_{i_0}+\frac{w_{i_0}}{\sum_{j \in U}w_j}K_0}{H_{i_0}}}}{\sqrt{\frac{1}{2}\frac{K_0+\sum_{j \in U}K_j}{\sum_{j \in U}H_j}}} 
&= 2\sqrt{\frac{\frac{K_{i_0}}{H_{i_0}}\sum_{j \in U}H_j+\frac{\sum_{j \in U}H_j}{H_{i_0}}\frac{w_{i_0}}{\sum_{j \in U}w_j}K_0}{K_0+\sum_{j \in U}K_j}} \\
&\leq 2\sqrt{\frac{K_{i_0}}{H_{i_0}}\frac{\sum_{j \in U}H_j}{K_0+\sum_{j \in U}K_j}+\frac{\sum_{j \in U}H_j}{H_{i_0}}\frac{w_{i_0}}{\sum_{j \in U}w_j}} \\
&\overset{(\ref{eq:wpsd-jump-bound-2},\ref{eq:wpsd-jump-bound-U-2})}{\leq}  2\sqrt{1+\gamma_w}.
\end{align*}
If $i_0 \in V$, then 
\begin{align}\label{eq:wpsd-jump-bound-V-1}
T^c_{i_0} \geq \sqrt{\frac{1}{2}\frac{K_{i_0}}{H_{i_0}}},
\end{align} and 
\begin{align}\label{eq:wpsd-jump-bound-V-2}\frac{K_{i_0}}{H_{i_0}} >\frac{K_0+\sum_{j \in U}K_j}{\sum_{j \in U}H_j} \geq \frac{K_0}{\sum_{j \in U}H_j}.
\end{align} So
\begin{align*}
k \overset{(\ref{eq:wpsd-jump-bound-1},\ref{eq:wpsd-jump-bound-V-1})}{\leq} \frac{\sqrt{2\frac{K_{i_0}+\frac{w_{i_0}}{\sum_{j \in U}w_j}K_0}{H_{i_0}}}}{\sqrt{\frac{1}{2}\frac{K_{i_0}}{H_{i_0}}}}
&= 2\sqrt{1+\frac{w_{i_0}}{\sum_{j \in U}w_j}\frac{K_0}{K_{i_0}}} \\
&\overset{(\ref{eq:wpsd-jump-bound-V-2})}{\leq} 2\sqrt{1+\frac{w_{i_0}}{\sum_{j \in U}w_j}\frac{\sum_{j \in U}H_j}{H_{i_0}}} \\
&\overset{(\ref{eq:wpsd-jump-bound-2})}{\leq} 2\sqrt{1+\gamma_w}.
\end{align*}

This finishes the proof of $\max_{i \in N}\frac{T^w_i}{T^c_i} \leq 2\sqrt{1+\gamma_w}$.
For PoS, since $T^c_i \leq T^w_i \leq 2\sqrt{1+\gamma_w} T^c_i$ for every retailer $i \in N$, it follows that the holding cost for the whole system is increased by at most $2\sqrt{1+\gamma_w}$ times while the setup cost for the whole system does not increase.
Therefore, the PoS of \WPSw is at most $2\sqrt{1+\gamma_w}$.
\hfill$\square$ \end{proof}

\subsection{Proof of Theorem~\ref{thm:general-POS}}
\label{app:thm:general-POS}

The main idea of improving the upper bound $O(\sqrt{\gamma_w})$ implied by Proposition~\ref{prop:jump-bound} to $O(\sqrt{\log \gamma_w})$ in Theorem~\ref{thm:general-POS} is that, instead of analyzing each retailer individually, we group the retailers based on their $H_i/w_i$ ratios and collectively bound the cost for retailers in the same group.
In particular, for retailers in $U$, we partition them into $m \coloneqq \lceil \log \gamma_w \rceil$ groups such that the $H_i/w_i$ values within each group differ by at most a factor of 2, and we bound the replenishment interval of all retailers within each group by the maximum interval in that group.
Based on this partition, we can prove that the total holding cost for retailers in $U$ is at most $\left(\sqrt{m}+\frac{1}{\sqrt{2}}\right)C(T^c)$. 

\begin{lemma}
\label{lem:general-POS-U}
For \WPSw, we have 
\begin{align}
\sum_{i\in U}H_iT^w_i \leq   \left(\sqrt{m}+\frac{1}{\sqrt{2}}\right)C(T^c)\label{eq:d-pos-U-bound}.
\end{align}
\end{lemma}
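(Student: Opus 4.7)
The plan is to partition $U$ into $m = \lceil \log \gamma_w \rceil$ groups $U_1,\ldots,U_m$ based on the ratios $H_i/w_i$, so that within each group these ratios differ by at most a factor of~$2$; concretely, setting $U_\ell = \{i \in U : H_i/w_i \in [2^{\ell-1}\eta^*,\, 2^{\ell}\eta^*)\}$ with $\eta^*=\min_{i\in U} H_i/w_i$. For each group $U_\ell$, I will pick a representative $i_\ell \in U_\ell$ whose final interval equals $\max_{i\in U_\ell}T^w_i$, breaking ties by choosing the retailer that first reaches this maximum along an arbitrary fixed update order of Algorithm~1. Since intervals only increase along Algorithm~1 (Lemma~\ref{lem:no-left-jump-ALG}) and are restricted to POT values, at the moment of $i_\ell$'s last doubling every other retailer $j \in U_\ell$ has interval strictly less than $T^w_{i_\ell}$, hence at most $T^w_{i_\ell}/2$. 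Consequently $U_\ell \subseteq N_0$, where $N_0$ is the set of retailers with interval at most $T^w_{i_\ell}/2$ just before that doubling.

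Applying Lemma~\ref{lem:right-jump} to this doubling and using $\sum_{j \in N_0} w_j \geq w(U_\ell)$, I obtain $T^w_{i_\ell} \leq \sqrt{2(K_{i_\ell} + (w_{i_\ell}/w(U_\ell))K_0)/H_{i_\ell}}$. The grouping yields $w_{i_\ell} \leq H_{i_\ell}/\eta_\ell$ and $w(U_\ell) \geq H(U_\ell)/(2\eta_\ell)$, so $w_{i_\ell}/w(U_\ell) \leq 2H_{i_\ell}/H(U_\ell)$. Combined with $K_{i_\ell}/H_{i_\ell} \leq s$ (which holds since $i_\ell \in U$, by~\eqref{eq:U-V}), this gives
\[
T^w_{i_\ell} \;\leq\; \sqrt{2s + \tfrac{4K_0}{H(U_\ell)}} \;\leq\; \sqrt{2s} + 2\sqrt{\tfrac{K_0}{H(U_\ell)}}.
\]
The same upper bound holds trivially if $i_\ell$ never doubles, since then $T^w_{i_\ell}=T^c_{\min}\leq\sqrt{2s}$ by POT rounding.

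Using the trivial bound $\sum_{i \in U_\ell} H_i T^w_i \leq H(U_\ell)T^w_{i_\ell}$, summing over $\ell$, and applying Cauchy--Schwarz as $\sum_\ell \sqrt{H(U_\ell)} \leq \sqrt{m\cdot H(U)}$, I obtain
\[
\sum_{i\in U} H_i T^w_i \;\leq\; \sqrt{2s}\,H(U) + 2\sqrt{m\,K_0\,H(U)}.
\]
Finally, the EOQ-style lower bound $C(T^c) \geq H(U)T^c_{\min} + (K_0+K(U))/T^c_{\min} \geq 2\sqrt{s}\,H(U)$ yields both $\sqrt{2s}\,H(U)\leq C(T^c)/\sqrt{2}$ and $2\sqrt{K_0 H(U)}\leq C(T^c)$, which together deliver the claimed bound $\sum_{i\in U}H_iT^w_i \leq (\sqrt{m}+1/\sqrt{2})\,C(T^c)$.

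The main obstacle I anticipate is justifying the inclusion $U_\ell \subseteq N_0$ at $i_\ell$'s last doubling: this step depends crucially on picking $i_\ell$ as the first retailer in the group to attain the group-maximum interval, combined with the POT structure and the never-halving property of Algorithm~1. The rest is routine algebra, and the Cauchy--Schwarz step is where the $\sqrt{m}$ factor naturally emerges from aggregating the per-group $K_0$ contributions.
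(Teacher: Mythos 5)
Your proposal is correct and follows essentially the same route as the paper's proof: the same partition of $U$ into $\lceil \log \gamma_w \rceil$ groups by $H_i/w_i$, the same choice of the first retailer in each group to reach the group-maximum interval, the same inclusion of the group in the sharing set at that retailer's last doubling (via the never-halving property and POT structure), the same application of Lemma~\ref{lem:right-jump} with $K_{i}/H_{i}\le s$ for $i\in U$, and the same Cauchy--Schwarz aggregation yielding $2\sqrt{mK_0H(U)}+\sqrt{2H(U)(K_0+K(U))}\le(\sqrt m+1/\sqrt2)C(T^c)$. The only differences are cosmetic (you apply the within-group factor-$2$ bound at the group-aggregate level rather than term by term, and you explicitly handle the degenerate case where the representative never doubles).
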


\begin{proof}
Denote $\hmin=\min_{i \in U}\frac{H_i}{w_i}$. Then $\frac{H_i}{w_i} \leq \hmin \gamma_w$ for any retailer $i \in N$.
We partition all retailers from $U$ into $m \coloneqq \lceil \log \gamma_w \rceil$ groups $N_1,N_2,\dots,N_m$ according to their $\frac{H_i}{w_i}$, where 
\[
N_i =\{j \in U \mid 2^{i-1}\hmin \leq \frac{H_j}{w_j} < 2^{i}\hmin\}, \forall 1 \leq i \leq m-1,
\] 
and 
\[
N_m =\{j \in U \mid 2^{m-1}\hmin \leq \frac{H_j}{w_j} \leq 2^{m}\hmin\}.
\]
Notice that for any group $N_i$ and any two retailers $i_1,i_2 \in N_i$, we have 
\begin{align}\label{eq:d-pos-U-h<2}
\frac{1}{2} \leq \frac{H_{i_1} / w_{i_1}}{H_{i_2} / w_{i_2}} \leq 2.
\end{align}
Apply Algorithm 1 with an arbitrary but fixed update order.
Denote $T(N_i)=\max_{j \in N_i} T^w_j$ the largest replenishment interval of retailers in $N_i$.
We will use $T(N_i)$ as an upper bound for the replenishment interval of all retailers in $N_i$.
For each $1 \leq i \leq m$, let $r_i \in N_i$ be the first retailer in $N_i$ who jumps to $T(N_i)$ during Algorithm 1.
Denote $\overline{N_i}$ the set of retailers who have replenishment interval at most $T(N_i)/2$ just before $r_i$'s last jump to $T(N_i)$.
By the choice of $r_i$ we have 
\begin{align}\label{eq:d-pos-U-N}
N_i \subseteq \overline{N_i}.
\end{align}
In addition, according to Lemma \ref{lem:right-jump} and retailer $r_i$'s last jump, we have 
\begin{align}\label{eq:d-pos-U-T-up}
& \frac{T(N_i)}{2} < \sqrt{\frac{1}{2}\frac{\frac{w_{r_i}}{\sum_{j \in \overline{N_i}}w_j}K_0+K_{r_i}}{H_{r_i}}}\nonumber \\
\Rightarrow \quad  &
T(N_i) < \sqrt{2\frac{\frac{w_{r_i}}{\sum_{j \in \overline{N_i}}w_j}K_0+K_{r_i}}{H_{r_i}}}.
\end{align}
Then the holding cost for retailers in $N_i$ is bounded from above by
\begin{align}
 \quad \sum_{j \in N_i} H_jT^w_j 
& \leq \sum_{j \in N_i} H_jT(N_i) \nonumber\\
& \overset{(\ref{eq:d-pos-U-T-up})}{<} \sum_{j \in N_i} H_j \sqrt{2\frac{\frac{w_{r_i}}{\sum_{j \in \overline{N_i}}w_j}K_0+K_{r_i}}{H_{r_i}}} \nonumber \\
& \leq \sum_{j \in N_i} H_j \left(\sqrt{2\frac{\frac{w_{r_i}}{\sum_{j \in \overline{N_i}}w_j}K_0}{H_{r_i}}}+\sqrt{2\frac{K_{r_i}}{H_{r_i}}}\right) \label{eq:d-pos-U-1}\\
& \leq \sum_{j \in N_i} \left(\sqrt{H_j}\sqrt{2\frac{H_j}{H_{r_i}}\frac{w_{r_i}}{\sum_{j \in \overline{N_i}}w_j}K_0}+H_j\sqrt{2\frac{K_{r_i}}{H_{r_i}}}\right) \nonumber \\
& \overset{(\ref{eq:d-pos-U-h<2})}{\leq} \sum_{j \in N_i} \left(2\sqrt{H_j}\sqrt{\frac{w_j}{\sum_{j \in \overline{N_i}}w_j}K_0}+H_j\sqrt{2\frac{K_{r_i}}{H_{r_i}}}\right) \nonumber \\
& \overset{(\ref{eq:U-V})}{\leq} \sum_{j \in N_i} \left(2\sqrt{H_j}\sqrt{\frac{w_j}{\sum_{j \in \overline{N_i}}w_j}K_0}+H_j\sqrt{2s}\right) \label{eq:d-pos-U-2} \\
& \leq 2\sqrt{H(N_i)}\sqrt{\frac{\sum_{j \in N_i}w_j}{\sum_{j \in \overline{N_i}} w_j}K_0}+H(N_i)\sqrt{2s} \label{eq:d-pos-U-3} \\
& \overset{(\ref{eq:d-pos-U-N})}{\leq} 2\sqrt{H(N_i)K_0}+H(N_i)\sqrt{2s} \nonumber,
\end{align}
where (\ref{eq:d-pos-U-1}) applies the inequality $\sqrt{a+b} \leq \sqrt{a}+\sqrt{b}$ for any $a,b \geq 0$, and (\ref{eq:d-pos-U-3}) follows by Cauchy-Schwarz inequality.
Now the holding cost of retailers from $U$ can be bounded from above by
\begin{align}
\sum_{i\in U}H_iT^w_i 
& =\sum_{i=1}^{m}\sum_{j \in N_i} H_jT^w_j \nonumber \\
& \leq \sum_{i=1}^{m}2\sqrt{H(N_i)K_0}+\sum_{i=1}^{m}H(N_i)\sqrt{2s} 
\nonumber \\
& \leq  2\sqrt{H(U)mK_0}+\sqrt{2H(U)(K_0+K(U))} \label{eq:d-pos-U-5},
\end{align}
where (\ref{eq:d-pos-U-5}) holds because of Cauchy-Schwarz inequality.
Since 
\begin{align*}
C(T^c) \geq H(U)T^c_{\min}+\frac{K_0+K(U)}{T^c_{\min}}
\geq 2\sqrt{H(U)(K_0+K(U))}, 
\end{align*}
we have
\begin{align*}
\sum_{i\in U}H_iT^w_i \leq  2\sqrt{H(U)mK_0}+\sqrt{2H(U)(K_0+K(U))} \leq \left(\sqrt{m}+\frac{1}{\sqrt{2}}\right)C(T^c).
\end{align*}
\hfill$\square$ \end{proof}

Next we consider retailers in $V$. 
For those retailers, we cannot upper bound $K_i/H_i$ by $s$ as in (\ref{eq:d-pos-U-2}).
To overcome this issue, in addition to $H_i/w_i$, we further partition retailers with similar $H_i/w_i$ based on their $K_i/H_i$ and employ a more intricate technique to bound the holding cost.

\begin{lemma}
\label{lem:general-POS-V}
For \WPSw, we have 
\begin{align}
\sum_{i\in V}H_iT^w_i \leq 3\sqrt{m}C(T^c).\label{eq:d-pos-V-bound}
\end{align}
\end{lemma}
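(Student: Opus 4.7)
The plan is to parallel the structure of Lemma~\ref{lem:general-POS-U} but work with a refined two-level partition of $V$. First, I would partition $V$ by the $H_j/w_j$-values into the same $m$ groups $N_1,\dots,N_m$ used in that proof, so within each $N_i$ the ratios $H_j/w_j$ differ by at most a factor of~2. Then, within each $N_i \cap V$, I would further partition by the value of $\lfloor \log_2(K_j/H_j)\rfloor$, yielding sub-subgroups $N_{i,k}$ such that the ratios $K_j/H_j$ within each $N_{i,k}$ differ by at most a factor of~2. For each $N_{i,k}$ I would let $r_{i,k}$ be the retailer in $N_{i,k}$ who first jumps to the maximum interval $T(N_{i,k}) := \max_{j\in N_{i,k}} T^w_j$ during Algorithm~1, and $\overline{N_{i,k}}$ the set of retailers whose replenishment interval is at most $T(N_{i,k})/2$ just before that jump. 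As in Lemma~\ref{lem:general-POS-U} one has $N_{i,k} \subseteq \overline{N_{i,k}}$, and Lemma~\ref{lem:right-jump} together with $\sqrt{a+b}\leq\sqrt{a}+\sqrt{b}$ yields
\[
T(N_{i,k}) \leq \sqrt{\frac{2K_{r_{i,k}}}{H_{r_{i,k}}}} + \sqrt{\frac{2K_0\,w_{r_{i,k}}}{H_{r_{i,k}}\,w(\overline{N_{i,k}})}}.
\]

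The key step is to use both levels of the partition to split the bound on $H_j T^w_j$ into a ``holding-type'' piece and a ``setup-type'' piece. For any $j \in N_{i,k}$, since $j$ and $r_{i,k}$ lie in the same sub-subgroup, one has $K_{r_{i,k}}/H_{r_{i,k}} \leq 2K_j/H_j$; since they also lie in the same group $N_i$, one has $H_j\, w_{r_{i,k}}/H_{r_{i,k}} \leq 2w_j$. Plugging these into the bound for $T(N_{i,k})$ gives
\[
H_j T^w_j \leq H_j T(N_{i,k}) \leq 2\sqrt{H_j K_j} + 2\sqrt{\frac{K_0\,H_j w_j}{w(\overline{N_{i,k}})}}.
\]
Summing over $j \in N_{i,k}$ and applying Cauchy-Schwarz together with $w(N_{i,k})\leq w(\overline{N_{i,k}})$, I obtain
\[
\sum_{j\in N_{i,k}} H_j T^w_j \leq 2\sum_{j\in N_{i,k}}\sqrt{H_j K_j} + 2\sqrt{K_0\,H(N_{i,k})}.
\]

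Summing over all sub-subgroups gives two global contributions. The first, $2\sum_{j\in V}\sqrt{H_j K_j}$, is bounded by $C(T^c)$, because $H_j T^c_j + K_j/T^c_j \geq 2\sqrt{H_j K_j}$ for every $j \in V$. The second, $2\sqrt{K_0}\sum_{(i,k)}\sqrt{H(N_{i,k})}$, is where the $\sqrt{m}$ factor must arise. I would bound it by applying Cauchy-Schwarz twice: first within each group $N_i$ to obtain $\sum_k \sqrt{H(N_{i,k})} \leq \sqrt{m_i\,H(N_i)}$ where $m_i$ is the number of sub-subgroups of $N_i$, and then across the $m$ groups, combining these with an inequality of the form $\sqrt{K_0\,H(V)} \leq O(1)\cdot C(T^c)$ obtained from the two independent lower bounds $C(T^c) \geq 2\sqrt{K_0\,H(U)}$ (from the major setup cost and the holding cost of $U$) and $C(T^c) \geq H(V)\sqrt{s/2}$ (which follows because every $j\in V$ satisfies $K_j/H_j > s$, giving $H_j T^c_j \geq \sqrt{H_j K_j /2} \geq H_j\sqrt{s/2}$).

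The hardest step will be this last one: controlling the total number $M = \sum_i m_i$ of sub-subgroups and balancing the two Cauchy-Schwarz applications so that the second contribution is at most $O(\sqrt{m})\,C(T^c)$. The two lower bounds on $C(T^c)$ behave differently depending on whether $H(U)\geq H(V)$ or $H(U) < H(V)$, so I expect to either case-split or use a single uniform estimate of the form $K_0 H(V) \leq O(1)\cdot C(T^c)^2$ (obtained by writing $K_0 H(V) = \sqrt{K_0 H(U)}\cdot \sqrt{K_0 H(V)^2/H(U)}$ and applying the two lower bounds separately to each factor). Once this estimate is in place, combining the two contributions yields the desired $3\sqrt{m}\,C(T^c)$ bound.
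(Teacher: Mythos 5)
Your setup — the two-level partition by $H_j/w_j$ and $K_j/H_j$, the maximal intervals $T(N_{i,k})$, the jump condition from Lemma~\ref{lem:right-jump}, and the bound $2\sum_{j\in V}\sqrt{H_jK_j}\le C(T^c)$ for the holding-type piece — matches the paper's proof. But there is a genuine gap in the final aggregation of the setup-type piece, and the routes you sketch for the "hardest step" do not close it. After Cauchy--Schwarz you are left with $2\sqrt{K_0}\sum_{(i,k)}\sqrt{H(N_{i,k})}$, and any further Cauchy--Schwarz over the index set produces $2\sqrt{K_0\,M\,H(V)}$ with $M=\sum_i m_i$ the total number of nonempty sub-subgroups. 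Your estimate $K_0H(V)\le 2\,C(T^c)^2$ is correct, but it then yields only $O(\sqrt{M})\,C(T^c)$, and $M$ cannot be controlled in terms of $m=\lceil\log\gamma_w\rceil$: the range of $K_j/H_j$ over $V$ (the paper's parameter $p$) is completely independent of $\gamma_w$, so one can have $m=1$ while $M$ is as large as $|V|$ (e.g.\ all of $V$ sharing one $H_j/w_j$ value but with $K_j/H_j$ spanning $[s,2^{|V|}s]$). Treating all sub-subgroups symmetrically in the second Cauchy--Schwarz is therefore fatal.

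The missing idea is that the major-setup contribution must \emph{decay geometrically in the $K/H$-index}: a retailer $\ell$ in the $j$-th $K/H$-band has $T^c_\ell\ge 2^{\lceil (j-1)/2\rceil}T^c_{\min}$, so the $K_0$-term it can be charged is at most $\frac{K_0}{2^{\lceil (j-1)/2\rceil}T^c_{\min}}$ per band, and summing the geometric series over $j$ gives $3\,K_0/T^c_{\min}\le 3\,C(T^c)$ with no dependence on $p$ at all (and no second Cauchy--Schwarz over that index). The paper implements this by \emph{not} converting the jump bound into the form $\sqrt{K_0H(N_{i,j})}$; instead it splits via AM--GM,
\begin{align*}
2H_\ell T^w_\ell \le \frac{H_\ell (T^w_\ell)^2}{2T^c_{r_{i,j}}\sqrt{m}}+2\sqrt{m}\,H_\ell T^c_{r_{i,j}},
\end{align*}
so that after substituting the squared jump bound the $K_0$-term retains the factor $1/T^c_{r_{i,j}}$, which supplies exactly the $2^{-\lceil (j-1)/2\rceil}$ decay. (Your decomposition could in principle be repaired by instead bounding $H(N_{i,k})\lesssim 2^{-k/2}C(T^c)/\sqrt{s}$ from the optimal holding cost of that band and then summing the resulting geometric series, but that observation — or some equivalent of it — is precisely what your writeup lacks, and without it the claimed $3\sqrt{m}$ bound does not follow.)
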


\begin{proof}
Similarly as the analysis for $U$, we will partition retailers in $V$ according to their $H_i/w_i$. The difference is that for retailer $i \in V$ we cannot upper bound $K_i/H_i$ by $s$ as in (\ref{eq:d-pos-U-2}).
To deal with this issue, we further partition retailers by their $K_i/H_i$.
Denote $p=\lceil \log \frac{\max_{i \in V}(K_i/H_i)}{s}\rceil+1$, then $K_i/H_i<2^ps$ for any $i \in V$.
We partition all retailers from $V$ into $mp$ groups $N_{1,1},N_{1,2},\dots,N_{m,p}$ according to their $h_i$ and $K_i/H_i$, where 
$N_{i,j}=\left\{\ell \in V \mid 2^{i-1}\hmin \leq \frac{H_\ell}{w_\ell} < 2^i \hmin, 2^{j-1}s \leq \frac{K_\ell}{H_\ell} < 2^j s\right\}, \forall 1 \leq i \leq m-1, \forall 1 \leq j \leq p$,
and $N_{m,j}=\left\{\ell \in V \mid 2^{m-1}\hmin \leq \frac{H_\ell}{w_\ell} \leq 2^m \hmin, 2^{j-1}s \leq \frac{K_\ell}{H_\ell} < 2^j s\right\}, \forall 1 \leq j \leq p$.
Notice that for any group $N_{i,j}$ and any two retailers $i_1,i_2 \in N_{i,j}$, we have
\begin{align}\label{eq:d-pos-V-h<2}
\frac{1}{2} \leq \frac{H_{i_1} / w_{i_1}}{H_{i_2} / w_{i_2}} \leq 2.
\end{align}
and
\begin{align}\label{eq:d-pos-V-HK<2}
\frac{1}{2} \leq \frac{K_{i_1} / H_{i_1}}{K_{i_2} / H_{i_2}} \leq 2.
\end{align}
Moreover, since $T^c_\ell\overset{POT}{=}\sqrt{K_\ell/H_\ell}$ for any $\ell \in V$ and $T^c_{\min}\overset{POT}{=}\sqrt{s}$, for any retailer $\ell \in N_{i,j}$ we have
\begin{align}\label{eq:d-pos-V-T}
2^{\lceil\frac{j-1}{2}\rceil}\leq\frac{T^c_\ell}{T^c_{\min}}\leq 2^{\lceil\frac{j}{2}\rceil}.
\end{align} 
It follows that for any two retailers $i_1,i_2 \in N_{i,j}$ we have
\begin{align}\label{eq:d-pos-V-T<2}
\frac{1}{2} \leq \frac{T^c_{i_1}}{T^c_{i_2}} \leq 2
\end{align}

Denote $T(N_{i,j})=\max_{\ell \in N_{i,j}} T^w_\ell$ the largest replenishment interval of retailers from $N_{i,j}$.
We will use $T(N_{i,j})$ as an upper bound for the replenishment interval of all retailers in $N_{i,j}$.
For each $1 \leq i \leq m$ and $1 \leq j \leq p$, let $r_{i,j} \in N_{i,j}$ be the first retailer in $N_{i,j}$ who jumps to $T(N_{i,j})$.
Denote $\overline{N_{i,j}}$ the set of retailers who have replenishment interval at most $T(N_{i,j})/2$ just before $r_{i,j}$'s last jump to $T(N_{i,j})$.
By the choice of $r_{i,j}$ we have
\begin{align}\label{eq:d-pos-V-N}
N_{i,j} \subseteq \overline{N_{i,j}}.
\end{align}
In addition, according to Lemma \ref{lem:right-jump} and retailer $r_{i,j}$'s last jump, we have 
\begin{align}\label{eq:d-pos-V-T-up}
& \frac{T(N_{i,j})}{2} < \sqrt{\frac{1}{2}\frac{\frac{w_{r_{i,j}}}{\sum_{\ell \in \overline{N_{i,j}}}w_\ell}K_0+K_{r_{i,j}}}{H_{r_{i,j}}}} \nonumber \\
\Rightarrow \quad &
T(N_{i,j}) < \sqrt{2\frac{\frac{w_{r_{i,j}}}{\sum_{\ell \in \overline{N_{i,j}}}w_\ell}K_0+K_{r_{i,j}}}{H_{r_{i,j}}}}.
\end{align}

Now for any retailer $\ell \in N_{i,j}$, its holding cost can be upper bounded as follows:
\begin{align*}
2H_\ell T^w_\ell &\leq \frac{H_\ell {(T^w_\ell)}^2}{2T^c_{r_{i,j}}\sqrt{m}}+2\sqrt{m}H_\ell T^c_{r_{i,j}} \\
& \leq \frac{H_\ell {(T(N_{i,j}))}^2}{2T^c_{r_{i,j}}\sqrt{m}}+2\sqrt{m}H_\ell T^c_{r_{i,j}} \\
& \overset{(\ref{eq:d-pos-V-T-up})}{\leq} \frac{1}{T^c_{r_{i,j}}\sqrt{m}}\frac{H_\ell}{H_{r_{i,j}}}\left(\frac{w_{r_{i,j}}}{\sum_{\ell \in \overline{N_{i,j}}}w_\ell}K_0+K_{r_{i,j}}\right)+2\sqrt{m}H_\ell T^c_{r_{i,j}} \\
& \overset{(\ref{eq:d-pos-V-h<2})}{\leq} \frac{1}{\sqrt{m}}\left(2\frac{w_{\ell}}{\sum_{\ell \in \overline{N_{i,j}}}w_\ell}\frac{K_0}{T^c_{r_{i,j}}}+\frac{H_\ell}{T^c_{r_{i,j}}}\frac{K_{r_{i,j}}}{H_{r_{i,j}}}\right)+2\sqrt{m}H_\ell T^c_{r_{i,j}} \\
& \overset{(\ref{eq:d-pos-V-HK<2})}{\leq} \frac{1}{\sqrt{m}}\left(2\frac{w_{\ell}}{\sum_{\ell \in \overline{N_{i,j}}}w_\ell}\frac{K_0}{T^c_{r_{i,j}}}+2\frac{K_\ell}{T^c_{r_{i,j}}}\right)+2\sqrt{m}H_\ell T^c_{r_{i,j}} \\
& \overset{(\ref{eq:d-pos-V-T})}{\leq} \frac{2}{\sqrt{m}}\frac{w_{\ell}}{\sum_{\ell \in \overline{N_{i,j}}}w_\ell}\frac{K_0}{2^{\lceil\frac{j-1}{2}\rceil}T^c_{\min}}+\frac{2}{\sqrt{m}}\frac{K_\ell}{T^c_{r_{i,j}}}+2\sqrt{m}H_\ell T^c_{r_{i,j}} \\
& \overset{(\ref{eq:d-pos-V-T<2})}{\leq} \frac{2}{\sqrt{m}}\frac{w_{\ell}}{\sum_{\ell \in \overline{N_{i,j}}}w_\ell}\frac{K_0}{2^{\lceil\frac{j-1}{2}\rceil}T^c_{\min}}+\frac{4}{\sqrt{m}}\frac{K_\ell}{T^c_\ell}+4\sqrt{m}H_\ell T^c_\ell
\end{align*}
Then we can bound the holding cost for retailers in $V$ by
\begin{align}
\sum_{\ell \in V}H_\ell T^w_\ell 
& \leq \sum_{i=1}^{m}\sum_{j=1}^p\sum_{\ell \in N_{i,j}} \frac{1}{\sqrt{m}}\frac{w_{\ell}}{\sum_{\ell \in \overline{N_{i,j}}}w_\ell}\frac{K_0}{2^{\lceil\frac{j-1}{2}\rceil}T^c_{\min}}+\sum_{\ell \in V}\left(\frac{2}{\sqrt{m}}\frac{K_\ell}{T^c_\ell}+2\sqrt{m}H_\ell T^c_\ell\right) \nonumber \\
& \leq \sum_{i=1}^{m}\sum_{j=1}^p \frac{1}{\sqrt{m}}\frac{\sum_{\ell \in N_{i,j}}w_{\ell}}{\sum_{\ell \in \overline{N_{i,j}}}w_\ell}\frac{K_0}{2^{\lceil\frac{j-1}{2}\rceil}T^c_{\min}}+\sum_{\ell \in V}2\sqrt{m}\left(\frac{K_\ell}{T^c_\ell}+H_\ell T^c_\ell\right) \nonumber \\
& \overset{(\ref{eq:d-pos-V-N})}{\leq} \sum_{i=1}^{m}\sum_{j=1}^p \frac{1}{\sqrt{m}}\frac{K_0}{2^{\lceil\frac{j-1}{2}\rceil}T^c_{\min}}+\sum_{\ell \in V}2\sqrt{m}\left(\frac{K_\ell}{T^c_\ell}+H_\ell T^c_\ell\right) \nonumber \\
& \leq \frac{1}{\sqrt{m}}\frac{K_0}{T^c_{\min}} \sum_{i=1}^m \sum_{j=1}^p2^{-\lceil\frac{j-1}{2}\rceil}+\sum_{\ell \in V}2\sqrt{m}\left(\frac{K_\ell}{T^c_\ell}+H_\ell T^c_\ell\right) \nonumber \\
& \leq 3\sqrt{m}\frac{K_0}{T^c_{\min}}+\sum_{\ell \in V}2\sqrt{m}\left(\frac{K_\ell}{T^c_\ell}+H_\ell T^c_\ell\right) \nonumber \\
& \leq 3\sqrt{m}\left(\frac{K_0}{T^c_{\min}}+\sum_{\ell \in V}\left(\frac{K_\ell}{T^c_\ell}+H_\ell T^c_\ell\right)\right) \nonumber \\
&  \leq 3\sqrt{m}C(T^c),  \nonumber
\end{align}
where the last inequality follows by
\begin{align*}
C(T^c)&=\frac{K_0}{T^c_{\min}}+\sum_{\ell \in N}\left(\frac{K_\ell}{T^c_\ell}+H_\ell T^c_\ell\right)
\geq \frac{K_0}{T^c_{\min}}+\sum_{\ell \in V}\left(\frac{K_\ell}{T^c_\ell}+H_\ell T^c_\ell\right).
\end{align*}
\hfill$\square$ \end{proof}

Now we are ready to prove Theorem~\ref{thm:general-POS}.

\begin{proof}[Proof of Theorem~\ref{thm:general-POS}]
First, since $T^c \leq T^w$, we can bound the total setup costs under $T^w$ by that under $T^c$, i.e., 
\[
\frac{K_0}{T^w_{\min}}+\sum_{i \in N}\frac{K_i}{T^w_i} \leq \frac{K_0}{T^c_{\min}}+\sum_{i \in N}\frac{K_i}{T^c_i} \leq C(T^c).
\]
Combining it with (\ref{eq:d-pos-U-bound}) and (\ref{eq:d-pos-V-bound}), we have
\begin{align*}
\frac{C(T^w)}{C(T^c)}
 &\leq \frac{\sum_{i\in U}H_iT^w_i+\sum_{i \in V}H_i T^w_i}{C(T^c)} 
 + \frac{\frac{K_0}{T^w_{\min}}+\sum_{i \in N}\frac{K_i}{T^w_i}}{C(T^c)} \\
&\leq 4\sqrt{m}+\frac{1}{\sqrt{2}}+1\\
&=4\sqrt{\lceil \log \gamma_w \rceil}+\frac{1}{\sqrt{2}}+1. 
\end{align*}
\hfill$\square$ \end{proof}

\subsection{Proof of Proposition \ref{prop:h-unknow-POS>gamma}}
\label{app:prop:h-unknow-POS>gamma}

\begin{proof}
We first fix an arbitrary \WPSshort rule that is independent of $h_i$.
We then construct an instance profile with $n$ retailers, where $K_0=1$, $K_i=0$ and $d_i=2$ for each $i \in N$.
Since $h_i$ is private, the weight vector $(w_i)_{i \in N}$ must be fixed across all instances with these public parameters, regardless of $h_i$ (or $H_i$).
Assume without loss of generality that $w_1 \leq w_2 \leq \dots \leq w_n$.
Then we set $h_i=\frac{w_i}{\sum_{1 \leq j \leq i}w_j}$ for each $i \in N$.
It follows that $H_i=\frac{1}{2}h_id_i=h_i=\frac{w_i}{\sum_{1 \leq j \leq i}w_j}$ and $\gamma_d=\frac{\max h_i}{\min h_i}$.
Notice that 
\[
\sum_{j \in N} H_j=\sum_{j \in N} h_j=\sum_{j \in N} \frac{w_j}{\sum_{1 \leq i \leq j}w_i} 
\geq \sum_{j \in N} \frac{1}{j}  \geq \ln n,
\]
and
\[
\frac{\max h_i}{\min h_i}=\frac{h_1}{h_n}=\frac{\sum_{1 \leq j \leq n}w_j}{w_n} \leq n.
\]
Thus,
\begin{align}\label{eq:h-unknow-POS>gamma}
\sum_{j \in N} H_j \geq \ln n \geq \ln \frac{\max h_i}{\min h_i} = \ln \gamma_d.
\end{align}

For the optimal centralized policy $T^c$, we have $T^c_{\min} \overset {POT}{=} \sqrt{\frac{K_0}{\sum_{j \in N}H_j}}  \overset {POT}{=} \sqrt{\frac{1}{\sum_{j \in N}H_j}}$ and the total cost can be bounded by
\[
C(T^c)=\sum_{j \in N}H_jT^c_{\min}+\frac{K_0}{T^c_{\min}} \leq \frac{3}{\sqrt{2}}\sqrt{\sum_{j \in N}H_j}.
\]

Next we apply Algorithm 1 to find the payoff dominant Nash equilibrium $T^w$.
We show that $T^w_i \geq \frac{1}{\sqrt{2}}$ for every $i \in N$ by induction.
Let us start with retailer $n$.
Since $T^w$ is a Nash equilibrium, applying Lemma \ref{lem:right-jump} to retailer $n$, we have 
\begin{align*}
T^w_n &\geq \sqrt{\frac{\frac{w_n}{\sum_{j\in N[T^w_n;T^w_{-n}]}w_j}K_0}{2H_n}}
= \sqrt{\frac{\frac{w_n}{\sum_{j\in N[T^w_n;T^w_{-n}]}w_j}K_0}{2\frac{w_n}{\sum_{j\in N}w_j}}}\\
&=\sqrt{\frac{\sum_{j\in N}w_j}{2\sum_{j\in N[T^w_n;T^w_{-n}]}w_j}}
\geq \frac{1}{\sqrt{2}}.
\end{align*}
Assume that $T^w_i \geq \frac{1}{\sqrt{2}}$ holds for every $i \geq k+1$.
We show that it also holds for $i=k$.
Suppose that $T^w_k < \frac{1}{\sqrt{2}}$, then by assumption we have $N[T^w_k;T^w_{-k}] \subseteq \{1,2,\dots,k\}$.
Since $T^w$ is a Nash equilibrium, applying Lemma \ref{lem:right-jump} to retailer $k$, we have 
\begin{align*}
T^w_k &\geq \sqrt{\frac{\frac{w_k}{\sum_{j\in N[T^w_k;T^w_{-k}]}w_j}K_0}{2H_k}}
= \sqrt{\frac{\frac{w_k}{\sum_{j\in N[T^w_k;T^w_{-k}]}w_j}K_0}{2\frac{w_k}{\sum_{1 \leq j \leq k}w_j}}} \\
&=\sqrt{\frac{\sum_{1 \leq j \leq k}w_j}{2\sum_{j\in N[T^w_k;T^w_{-k}]}w_j}}
\geq \frac{1}{\sqrt{2}},
\end{align*}
which is a contradiction.
This finishes our induction.

Now we can bound the total cost of the payoff dominant Nash equilibrium as follows.
\[
C(T^w) \geq \sum_{j \in N} H_jT^w_j \geq \frac{\sum_{j \in N} H_j}{\sqrt{2}}.
\]
Therefore, the PoS of the \WPSshort rule is at least
\[
\frac{C(T^w)}{C(T^c)} \geq \frac{\frac{\sum_{j \in N} H_j}{\sqrt{2}}}{\frac{3}{\sqrt{2}}\sqrt{\sum_{j \in N} H_j}}=\frac{\sqrt{\sum_{j \in N} H_j}}{3} \overset{(\ref{eq:h-unknow-POS>gamma})}{\geq} \frac{\sqrt{\ln\gamma_d}}{3}.
\]
\hfill$\square$ \end{proof}

\end{document}